\newtheorem{assumption}{Assumption}
\newtheorem{definition}{Definition}
\newtheorem{theorem}{Theorem}
\newtheorem{corollary}{Corollary}
\newcommand{\@giventhatstar}[2]{\ensuremath{\left({#1}\;\middle|\;{#2}\right)}} \newcommand{\@giventhatnostar}[3][]{#1(#2\;#1|\;#3#1)} \newcommand{\giventhat}{\@ifstar\@giventhatstar\@giventhatnostar} 
\newcommand{\etal}{\textit{et al.}}
\newcommand{\squishlist}{
	\begin{list}{$\bullet$}
		{
			\setlength{\itemsep}{0pt}
			\setlength{\parsep}{3pt}
			\setlength{\topsep}{3pt}
			\setlength{\partopsep}{0pt}
			\setlength{\leftmargin}{1.5em}
			\setlength{\labelwidth}{1em}
			\setlength{\labelsep}{0.5em} } }
\newcommand{\squishend}{
	\end{list}  }
\begin{document}

\title{ 
Preventing Inferences through Data Dependencies on Sensitive Data
}

\author{Primal Pappachan,
        \IEEEmembership{Member, IEEE,}
        Shufan Zhang,
        \IEEEmembership{Student Member, IEEE,}
        Xi He,
        \IEEEmembership{Member, IEEE,}
        and~Sharad Mehrotra,
        \IEEEmembership{Fellow, IEEE}%
\IEEEcompsocitemizethanks{
\IEEEcompsocthanksitem A preliminary version of this article was accepted and presented in VLDB 2022 \cite{Pappachan2022tattletale}.
\IEEEcompsocthanksitem P. Pappachan is with Portland State University, 1825 SW Broadway, Portland, OR 97201.
E-mail: primal@pdx.edu.
\IEEEcompsocthanksitem  S. Zhang and X. He are with the University of Waterloo, Waterloo, Ontario, Canada, N2L 3G1.
E-mail: \{shufan.zhang, xi.he\}@uwaterloo.ca. 
\IEEEcompsocthanksitem S. Mehrotra is with University of California,
Irvine, CA 92697 USA.
E-mail: sharad@ics.uci.edu.

}%
}

\newcommand{\vDatabase}{\ensuremath{\mathbb{D}}}
\newcommand{\vRelationSet}[1]{\ensuremath{\mathcal{R}}_{#1}}
\newcommand{\vRelation}[2]{\ensuremath{\mathbb{R}^{#1}_{#2}}}
\newcommand{\vTupleSet}[1]{\ensuremath{\mathcal{T}}_{#1}}
\newcommand{\vTuple}[2]{\textit{t}^{#1}_{#2}}
\newcommand{\vQuery}[2]{\textit{Q}^{#1}_{#2}}
\newcommand{\vAttribute}[2]{\textit{A}^{#1}_{#2}}
\newcommand{\vDomain}[1]{\textit{Dom}({#1})}
\newcommand{\vCount}[1]{|{#1}|}
\newcommand{\vUserSet}[1]{\ensuremath{\mathcal{U}}_{#1}}
\newcommand{\vUser}[2]{\textit{U}^{#1}_{#2}}

\newcommand{\vCellSet}[1]{\ensuremath{\mathbb{C}}_{#1}}
\newcommand{\vCell}[2]{\textit{c}^{#1}_{#2}}
\newcommand{\vValue}[2]{\textit{value}^{#1}_{#2}}

\newcommand{\view}{\mathbb{V}}
\newcommand{\inference}{\mathbb{I}}

\newcommand{\vDataDepSet}[1]{\ensuremath{S}_{\Delta{#1}}}
\newcommand{\vSchemaDepSet}[1]{\ensuremath{\Delta}_{#1}}
\newcommand{\vSchemaDep}[1]{\ensuremath{\delta}_{#1}}
\newcommand{\vDataDep}[2]{\ensuremath{\tilde{\delta}}^{#1}_{#2}}
\newcommand{\vPredicate}[2]{\textit{Pred}^{#1}_{#2}}
\newcommand{\vPredicates}[1]{\textit{Preds}({#1})}
\newcommand{\vOperator}{\ensuremath{\theta}}
\newcommand{\vConstant}{\textit{const}}
\newcommand{\vCells}[1]{\textit{Cells}({#1})}
\newcommand{\true}{\textit{True}\xspace}
\newcommand{\false}{\textit{False}\xspace}
\newcommand{\nullvalue}{\textit{NULL}\xspace}
\newcommand{\unknown}{\textit{Unknown}\xspace}

\newcommand{\vPolicy}[2]{\textit{P}^{#1}_{#2}}
\newcommand{\vSensValues}[2]{\ensuremath{\psi}^{#1}_{#2}}
\newcommand{\hidden}{\ensuremath{\textit{*}}}

\newcommand{\sz}[1]{{\color{cyan} (SZ: #1)}}
\newcommand{\pp}[1]{{\color{red} (Primal: #1)}}
\newcommand{\xh}[1]{{\color{pink} (XH: #1)}}
\newcommand{\todo}[1]{{\color{purple} (TODO: #1)}}
\newcommand{\revise}[1]{{#1}}
\newcommand{\extend}[1]{{#1}}
\newcommand{\commentrequired}[1]{{#1}}
\newcommand{\reviseA}[1]{{#1}}
\newcommand{\eat}[1]{}
\newcommand{\reviseTKDE}[1]{{#1}}

\newcounter{example}
\newenvironment{example}[1][]{\refstepcounter{example}\par\medskip   \noindent \textbf{Example~\theexample. #1} \rmfamily}{\qedsymbol  \medskip}

\newcommand{\ourapproach}{\textit{Our Approach}\xspace}
\newcommand{\baselineOne}{\textit{Random Hiding}\xspace}
\newcommand{\baselineTwo}{\textit{Oblivious Cueset}\xspace}

\newcommand{\stitle}[1]{\smallskip \noindent{\bf #1}}

\twocolumn

\IEEEtitleabstractindextext{%

\begin{abstract}
Simply
restricting the  computation to  non-sensitive part of the data may lead to inferences on sensitive data through data dependencies.
Prior work on preventing inference control through data dependencies detect and deny queries which may lead to leakage, or only protect against exact reconstruction of the sensitive data. These solutions result in poor utility, and poor security respectively.
In this paper, we present a novel security model called \emph{full deniability}. Under this stronger security model, any information inferred about sensitive data from non-sensitive data is considered as a leakage. 
We describe algorithms for efficiently implementing full deniability on a given database instance with a set of data dependencies and sensitive cells.
Using experiments on two different datasets, we demonstrate that our approach protects against realistic adversaries while hiding only minimal number of additional non-sensitive cells and scales well with database size and sensitive data.
 \end{abstract}

\begin{IEEEkeywords}
Inference Control, Data Dependencies, Inference Protection, Security \& Privacy, Access Control
\end{IEEEkeywords}}

\makeatletter
\long\def\@IEEEtitleabstractindextextbox#1{\parbox{0.922\textwidth}{#1}}
\makeatother

\maketitle

\IEEEdisplaynontitleabstractindextext

\IEEEpeerreviewmaketitle

\IEEEraisesectionheading{\section{Introduction}\label{sect:intro}}

\IEEEPARstart{O}{rganizations} today collect data about individuals that could be used to infer
  their habits, religious affiliations, and health status --- properties that we typically
  consider as sensitive.
New regulations, such as the European General Data Protection Regulation (GDPR)~\cite{gdpr}, the California Online Privacy Protection Act (CalOPPA)~\cite{caloppa}, and the Consumer Privacy Act (CCPA)~\cite{ccpa}, have made  it mandatory for organizations to provide appropriate mechanisms to enable users' control over their data, i.e., (how| why| for how long) their data is collected, stored,  shared, or analyzed. 
\emph{Fine Grained Access Control Policies (FGAC)} supported by databases is an 
integral technology component to implement such user control. 
FGAC policies enable data owners/administrators to specify which data (i.e., 
 tables, columns, rows, and cells ) can/cannot be accessed by  which querier (individuals posing queries on the database) and is, hence, sensitive \cite{ferrari2010accessBook} for that querier. 
 Traditionally, Database Management Systems (DBMS)  implement FGAC by filtering away data that is sensitive for a querier and computing the query on only 
 the non-sensitive part of the data. 
 Such a strategy is implemented using  either 
query rewriting~\cite{agrawal2002hippocratic, pappachan2020sieve} or view-based mechanisms~\cite{rizvi2004extending}. 
It is well recognized that restricting query computation to only non-sensitive data may not prevent the querier from inferring  
sensitive data  based on  semantics inherent in
the data \cite{farkas2002inference, chen2018disclose}.  For instance, the querier may exploit knowledge of data dependencies
 to infer values of sensitive data as illustrated
in the example below.

\begin{example} \label{eg:intro1}
 Consider an Employees table (Figure~\ref{fig:exampleTables}) and an FGAC policy by a user \textit{Bobby} to hide his salary per hour (\textit{SalPerHr}) from all the queries by  other users. 
If the semantics of the data dictates that any two employees who are  faculty should have the same \textit{SalPerHr}, then hiding \textit{SalPerHr} of \textit{Bobby} would not prevent
its leakage from a querier who has access to \textit{Carrie}'s \textit{SalPerHr}. 
\end{example}

In general, leakage may occur  from direct/indirect inferences due  to  different types of data dependencies,  such as conditional functional dependencies (CFD)~\cite{fan2008conditional},
denial constraints~\cite{xuchuDC}, aggregation constraints~\cite{ross1998foundations},  and/or 
\emph{function-based constraints} that exist when  dependent data values are derived/computed using other data values as shown below.

\begin{example} \label{eg:intro2}
Consider the Employee and Wage tables shown   in Table \ref{fig:exampleTables}.
Let \textit{Danny} specify  FGAC policies to hide his \textit{SalPerHour} in Employee Table and \textit{Salary} in Wage Table. 
Suppose there exists a constraint that employees with role  \textit{Staff} cannot have a higher salary per hour than a faculty in the state of California. Using \textit{Bobby}'s salary per hour that is leaked in Example~\ref{eg:intro1}, the new constraint about the staff salary, and the function-based constraint between  that \textit{Salary} and
the fields  function of \textit{WorkHrs} and \textit{SalPerHr}, information about  the salary and the salary per hour of 
\textit{Danny} will be leaked
even though they are sensitive.
\end{example}

\begin{figure*}[ht]
\centering
\includegraphics[width=\textwidth]{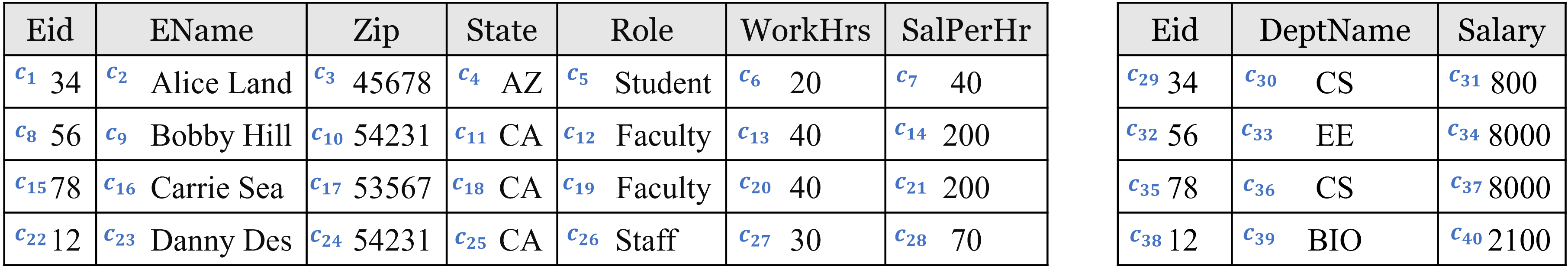}
\caption{Employee and Wages Table}
\label{fig:exampleTables}
\end{figure*}

To gain insight into the extent to which leakage could occur due to data semantics, 
we conducted a simple experiment on a synthetic dataset~\cite{xuchuDC,bohannon2007conditional} that contains the address and tax information of individuals.  The tax data set consists of 14 attributes
and  has associated with it 10 data dependencies,   an  example of which is a  denial constraint  ``if two persons live in the same state, the one earning a lower salary has a lower tax rate''.  
\revise{An adversary can use the above dependency to infer knowledge about the sensitive cells.
Suppose the \textit{salary} attribute of an individual is sensitive and therefore hidden.
If the disclosed data contains information about another individual who lives in the same state and has a lower tax rate, an adversary can infer the upper bound of this individual's salary using the dependency.}
\revise{To demonstrate this leakage,}
we considered an attribute with a large number of data dependencies defined on them (e.g., state) to be sensitive, and thus, replaced its values by \nullvalue.  
We then used state-of-the-art data-cleaning software, Holoclean~\cite{rekatsinas2017holoclean}, as a real-world attacker to reconstruct the \nullvalue values associated with the  sensitive cells.
 Holoclean was able to reconstruct the actual values of the state 100\% of the  time highlighting the importance of preventing leakage through data dependencies on access control protected data.

Prior literature has studied the  challenge of controlling inferences about sensitive data using data dependencies and  called it the ``inference control problem'' \cite{farkas2002inference}. Existing techniques used to protect against inferences can be categorized based on when the leakage prevention is applied \cite{brodsky2000secure}. 
In the first category, inference channels between sensitive and non-sensitive attributes are detected and controlled at the time of database design \cite{DelugachHinkeInf, garvey1993toward}. A database designer uses methods in this category to detect and prevent inferences by upgrading classification of inferred attributes.  However, they result in poor data availability if a significant number of attributes are marked as sensitive to prevent leakages. 
The second category of work includes detection and control at the time of query answering. Works such as \cite{THURAISINGHAM1987479, brodsky2000secure} determine if answers
to the query could result in inferences about sensitive data using data dependencies, and reject the query if such an inference is detected. 
Such query control approaches can lead to the rejection of many queries when there is a non-trivial number of sensitive cells and background knowledge. 
Another limitation of the 
prior work is the weak security model used in determining how to process queries. 
All prior work on inference control considers a query answer to leak sensitive data if the answer 
can be used to reconstruct 
the   exact value of a sensitive object. 
Leakages that
do not reveal the exact value but, perhaps, limit
the values a sensitive object may take are not considered as leakage. For instance, in  Example~\ref{eg:intro2} above, since the constraints do not reveal \textit{Danny}'s exact salary but only that it is below
\$200 per hour, prior works will not consider it to be a leakage even though the querier/adversary could eliminate a significant number of possible domain values based on the data constraints.
As we explain in detail in Section~\ref{sect:related}, the existing solutions to the inference control  cannot be easily   generalized to prevent such leakages.

In this paper, we study 
the problem of answering user queries 
under a new, much stronger
model of security --- viz., 
\emph{full deniability}. 
Under full deniability,  any new knowledge learned about the sensitive cell through data dependencies is considered as leakage. Thus, eliminating a domain value as a possible value an attribute / cell can take violates full-deniability.
One can, of course, 
naively, achieve 
full deniability 
by hiding  the entire database.  Instead,
our goal is to
identify the
minimal  additional non-sensitive cells that must
be hidden so as to achieve
full deniability. 
In addition, 
we require
the algorithm that identifies
data to hide in order to 
achieve full deniability to be efficient and scalable to both large data sets and to a large number of 
constraints.

We study our approach to ensuring full deniability during query processing under two classes of data dependencies\footnote{Other data dependencies such as Join dependencies (JD) and Multivalued dependencies are not common in a clean, normalized database and therefore not interesting to our problem setting.}:
\squishlist
    \item  \emph{Denial Constraints (DCs)}: that are general forms of data dependencies expressed using universally quantified first-order logic. They can express commonly used types of constraints such as functional dependencies (FD) and conditional functional dependencies (CFD) and are more expressive than both
\item  \emph{Function-based Constraints (FCs)}: that establish relationships between input data and
the output data it is derived from, using functions. Such constraints arise
naturally when databases store materialized aggregates or when  data sensor data,  collected
over time (e.g., from sensors),  is enriched (using %
appropriate machine learning tools) to higher level observations. 
\squishend

To achieve full deniability, we first develop a method for 
\emph{Inference Detection}, 
that  detects, for each sensitive cell, the non-sensitive cells
that could result in a violation of full deniability. 
The candidate cells identified by Inference Detection are passed to the second function, \emph{Inference Protection} that minimally selects the non-sensitive data to hide to prevent leakages. 
Our technique  is geared towards maximizing utility when preventing inferences for a large number of sensitive cells and their dependencies.
After hiding additional cells, Inference Detection is invoked repeatedly to detect any indirect leakages on the sensitive cells through the new set of hidden cells and their associated dependencies. These methods are invoked cyclically until no further leakages are detected either on the sensitive cells or any additional cells hidden by Inference Protection.
Using these two different methods, we are able to achieve the security, utility, and performance objectives of our solution.

The main contributions in our paper are:
\squishlist
    \item A security model, entitled \emph{full deniability} to protect against leakage of sensitive data due to data semantics
    in the form of Denial Constraints and Function-based Constraints.
    \item Identification of conditions under which full deniability can be achieved and efficient algorithms for inference detection and protection to achieve full deniability while only minimally hiding additional non-sensitive data.

    \item \reviseTKDE{A relaxed \emph{k-percentile deniability} model, relaxations of security assumptions, and algorithms to achieve these relaxations.}
    
    \item A prototype middleware ($\sim$10K LOC)
    that works alongside DBMS to ensure full deniability given
    a set of dependencies and policies.
    \item Experimental results on two different data sets show that our approach is efficient and only minimally hides non-sensitive
    cells while achieving full deniability. 
\squishend

\noindent
\textbf{Paper Organization.}
We introduce the notations used in the paper and describe access control policies and data dependencies in Section~\ref{sect:preliminaries}. In Section~\ref{sect:problem_definition}, we present the security model --- full deniability --- proposed in this work. In Section~\ref{sect:computation}, we describe how the leakage of sensitive data occurs through dependencies and introduce function-based constraints. We present in Section~\ref{sect:algorithm}, the algorithms for inference detection and protection along with optimizations to improve utility. In Section~\ref{sec:weaker_security_model}, we extend the full deniability model to $k$-percentile deniability and in Section \ref{sec:assumption_relaxation} we relax the security assumptions in our model. In Section~\ref{sect:experiments}, we present results from an end-to-end evaluation of our approach with two different data sets and different baselines. In Section~\ref{sect:related} we go over the related work and we conclude the work by summarizing our contributions, and possible future extensions in Section~\ref{sect:conclusions}.

\vspace{-1em}
\section{Preliminaries}
\label{sect:preliminaries}

Consider a database instance $\vDatabase$ consisting of a set of \textbf{relations} $\vRelationSet{}$. Each relation $\vRelation{}{} \in \vRelationSet{} = \{A_1, A_2, \ldots, A_n\}$ where $\vAttribute{}{j}$ is an attribute in the relation.
Given an attribute $\vAttribute{}{j}$ in a relation $\vRelation{}{}$ we use $\vDomain{\vAttribute{}{j}}$ to denote the domain of the attribute and $\rvert$$\vDomain{\vAttribute{}{j}}$$\rvert$ to denote the number of unique values in the domain (i.e. the domain size)\footnote{We say the domain size in the context of an attribute with discrete domain values and for continuous attributes we discretize their domain values into a number of non-overlapping bins.}. 
A relation contains a number of indexed \textbf{tuples},  $t_i$ represents the $i^{th}$ tuple in the relation $\vRelation{}{}$, and $\vTuple{}{i}[\vAttribute{}{j}]$ refers to the $j^{th}$ attribute of this tuple. 

We will use the \textbf{cell-based} representation of a relation to simplify  notation 
when discussing the fine-grained access control policies and data dependencies. Figure~\ref{fig:exampleTables} shows two tables, the \emph{Employee} table with cells 
$\vCell{}{1}$ to $\vCell{}{28}$ and the
\emph{Wages} table with cells 
$\vCell{}{29}$ to $\vCell{}{40}$. 
Note that in the cell-based notation each table, row, column corresponds to a set of cells. For instance, the second tuple/row of \emph{Wages}
table is the set of cells $\{ \vCell{}{32}, \vCell{}{33}, \vCell{}{34} \}   $ and
the column for attribute \emph{Zip} in the \emph{Employee} table is the set 
$\{ \vCell{}{3}, \vCell{}{10}, \vCell{}{17}, 
\vCell{}{24}\}$. Each cell has an associated value. For instance, the value of cell $\vCell{}{11}$ is ``CA".

\subsection{Access Control Policies}
\label{subsec:background_ac}

Data sharing is controlled using
access control policies, or simply policies.
We classify  users $\vUser{}{}$  
as data owners, who set the access control policies, and as queriers, who pose queries on the data.
Ownership of data is specified
at  tuple level and a data owner of a tuple may specify policies marking
one or more cells ($\vCell{}{i}$) in the tuple $\vTuple{}{}$ as sensitive against queries by other users. 
When another user queries the database, the returned data has to be policy compliant (i.e., policies relevant to the user are applied to the query results). 
We assume queries have associated metadata that contains information about the querier\footnote{In general, policies control
access to data based not just on the identity of the querier, but also on purpose \cite{byun2008purpose}. Thus, metadata associated with the query will also contain purpose in addition to the querier identity.}. 

\reviseTKDE{\stitle{Query model}. The SELECT-FROM-WHERE query posed by a user $\vUser{}{}$ is denoted by $\vQuery{}{}$. In our model, we consider that queries have associated metadata which consists of information about the querier and the context of the query. This way, we assume that for any given query $\vQuery{}{}$, it contains  the metadata such as the identity of the querier (i.e., $\vQuery{querier}{}$) as well as the purpose of the query (i.e., $\vQuery{purpose}{}$). For example, $\vQuery{querier}{}$=``John" and $\vQuery{purpose}{}$=``Analytics".}

\stitle{Policy model}. A policy $P$ is expressed as $<$\textit{$OC$, $SC$, $AC$}$>$, where 
$AC$ corresponds to the action, i.e.,  either deny or allow, $SC$ corresponds to the subject condition i.e, the user to whom the policy applies
(e.g., the identity of the querier, or the group for which the policy applies, in case queriers are organized into groups), and
\reviseTKDE{$OC$ corresponds to a set of object conditions that identifies the cells on which the policy is to be enforced.
Each object condition $OC_i$ is represented using the following 3-tuple: \{$\vRelation{}{}$, $\sigma$, $\Phi$\} where $\vRelation{}{}$ is the relation, $\sigma$ and $\Phi$ are the selection and projection conditions respectively that together select the cells that are sensitive.}
The application of a policy is done by a function over the database that returns \nullvalue for a cell if it is disallowed by the policy or the original cell value if it is allowed. This is modelled after FGAC policy models used in previous works~\cite{pappachan2020sieve, colombo2015efficient}.
We denote the set of cells identified by $OC_i$ as $\vCellSet{OC_i}$.

\begin{definition}[Sensitive Cell]
Given a policy $P=<$\textit{OC, SC, AC}$>$, we say that a cell $\vCell{}{}$ is sensitive to a user $\vUser{}{}$ if $\vCell{}{} \in \vCellSet{OC_i}$ where $OC_i \in OC$, $\vUser{}{} = SC.querier$, and $AC =$ \textit{deny}.
After applying $P$, $\vCell{}{}$ is replaced with \nullvalue.
The set of cells sensitive to the user $\vUser{}{}$ is denoted by $\vCellSet{U}^S$ or simply $\vCellSet{}^S$ when the context is clear.
\label{def:sensCellExt}
\end{definition}

\begin{example}\label{ex:policy}
    An example policy from scenario in Section~\ref{sect:intro} is $<$\textit{\{Employee, EName = ``Carrie Sea'', SalPerHr\}, \{``John Doe'', %
    \}, \{deny\}}$>$. 
  \reviseTKDE{The policy specifies that the salary information (\textit{SalPerHr}) of Employee Carrie (\textit{EName = ``Carrie Sea''}) in the \textit{Employee} table should be denied (i.e., it is sensitive)
  to 
  the \textit{Querier = ``John Doe'' }.} 
\end{example}

\vspace{-1em}

\subsection{Data Dependencies}
\label{subsec:background_dep}
The semantics of data is expressed in the form of \emph{data dependencies}, 
that restrict the set of possible values a cell can take based on the values of other cells in the database. 
Several types of data dependencies have been studied in the literature such as foreign keys, functional dependencies (FDs), and conditional functional dependencies (CFDs), etc. 
We consider two types of dependencies as follows:

\stitle{Denial Constraints (DC)} are first-order formulas of the form
$\forall \, \vTuple{}{i}, \vTuple{}{j}, \ldots \in \vDatabase, \vSchemaDep{}: \, \neg(\vPredicate{}{1} \land \vPredicate{}{2} \land \ldots \land \vPredicate{}{N})$ where $\vPredicate{}{i}$ is the $i$th predicate in the form of \revise{$\vTuple{}{x}[\vAttribute{}{j}] \vOperator \vTuple{}{y}[\vAttribute{}{k}]$ or $\vTuple{}{x}[\vAttribute{}{j}] \vOperator const$} with $x, y \in \{i, j, \ldots \}$, $\vAttribute{}{j}, \vAttribute{}{k} \in R$, $const$ is a constant, and $\vOperator \in \{=, >, <, \neq, \geq, \leq\}$.
DCs are quite general --- they can model dependencies such as  FDs \& CFDs and are flexible enough to model much more complex relationships among cells. 
Data dependencies in the form of DCs have been used 
in recent prior literature  for 
data cleaning \cite{geerts2013llunatic, xu_chu_holistic_2013}, query optimization \cite{kossmann2021data}, and secure databases \cite{vimercati2014outsourcing, brodsky2000secure}. 
Moreover, systems, such as \cite{xuchuDC}, have been
designed to automatically discover DCs in a given database. 
This is the type of DCs considered throughout the paper. \reviseTKDE{We used a data profiling tool, Metanome\cite{metanomeDataProfiling} to identify the complete set of denial constraints. }

\stitle{Function-based Constraints (FCs)} capture the relationships between derived data and its inputs.
As described in Example~\ref{eg:intro2}, the \textit{Salary} in the \textit{Wages} table (see Table~\ref{fig:exampleTables}) is a attribute derived using \textit{WorkHrs} and \textit{SalPerHr} i.e., \textit{Salary} $\coloneqq$ \textit{ fn(WorkHrs}, \textit{SalPerHr}).
In general, given a function $fn$ with $r_1,r_2, \ldots ,r_n$ as the input \revise{values} and $s_i$ as the derived or output \revise{value}, the FC can be represented by $fn(r_1, r_2, \ldots,$ $ r_n) = s_i$.

\section{Full Deniability}
\label{sect:problem_definition}

In this section, we discuss the assumptions in our setting and present the concept of \emph{view} of a database for the querier and formalize an \emph{inference function} with respect to the view and data dependencies.
We formally define  our security model --- \emph{Full Deniability} --- based on the inference function and use it to determine the leakage on sensitive cells. 

\subsection{Assumptions}
\label{sect:assumptions}

We will assume that tuples (and cells in tuples) are independently distributed
except for explicitly specified dependencies that are either learnt automatically or 
specified by the expert.
The database instance is assumed to satisfy the data dependencies. 
The querier, who is the adversary in our setting, is assumed
to know the dependencies and can use them to infer
the sensitive
data values. 
\revise{This assumption leads to a stronger adversary than the standard adversary considered by many algorithms for differential privacy or traditional privacy notions like k-anonymity or access controls, which assumes the adversary knows no tuple correlations (or tuples are independent).} 
A querier is free to run multiple queries and can attempt to make
inferences about sensitive data based on the results of those queries. Two queriers, however, do not collude (i.e., share answers to the queries).
We note that if such collusions were to be allowed, it
would void 
the purpose of having different access control policies for different users.

As queriers are service providers or third parties who are interested in obtaining user data to provide a service and therefore we assume that queriers and data owners do not overlap. 
We also assume that a querier cannot apriori determine if a cell is sensitive or not (i.e., they do not know the access control policies).  
To see why this is important, 
consider a FD defined on the \textit{Employee} table (in Fig.~\ref{fig:exampleTables}) \textit{Zip}$\rightarrow$\textit{State}. Suppose $\vCell{}{11} (State = ``CA")$ is sensitive based on the policy and in order to prevent inferences using the FD, let $\vCell{}{24}$ be hidden. If the querier has knowledge that $\vCell{}{24}$ is hidden due to our approach (and hence know that $\vCell{}{11}$ was sensitive), they can deduce that $\vCell{}{25}$ and $\vCell{}{11}$ have the same value.

\subsection{ Querier View}

For each querier, given the set of policies applicable to the querier, 
the algorithm first determines which cell is sensitive to them.
Such cells are set to \nullvalue  in the view of the database shared with the 
querier.
As noted in the introduction, if only the sensitive cells are set to \nullvalue and all the non-sensitive cells retain their true values, the querier may infer information about the sensitive cells through the various dependencies defined on the database.
It is necessary, therefore, to set some of the non-sensitive cells to \nullvalue in order to prevent leakages due to dependencies.
Henceforth, we will refer to the cells, 
both sensitive and non-sensitive, whose values
will be replaced by \nullvalue as \emph{hidden} cells, denoted by $\vCellSet{}^H$.
We now present the concept of a querier view on top of which queries are answered.

\begin{definition}[Querier View] \label{def:view}
The set of value assignments for a set of cells $\vCellSet{}$ in a database instance $\vDatabase$ with respect to a querier is denoted by $\view(\vCellSet{})$ or simply $\view$ when the set of cells is clear from the context.
The value assignment for a cell could be either the true value of this cell in $\vDatabase$ or \nullvalue value (if it is hidden).
\end{definition}

We also define a concept of 
the \emph{base view} of  database for a querier, denoted by $\view_0$. 
In $\view_0$,  \emph{all} the cells in $\vDatabase$ are set to be \nullvalue. 
We consider the information  leaked to the 
querier based on computing the query results  over the base view $\view_0$ 
as the least amount of information revealed to the querier. For instance, 
the base view may provide querier with information about number of tuples 
in the relation, but, by itself it will not reveal any further information about
the sensitive cells, despite what dependencies hold over the database.
Our goal in developing the algorithm to prevent leakage would be
to determine a view $\view$ for a querier that  hides the minimal number of cells, and yet, leaks no further information 
than  the base view. 
Next, we  define an inference function that captures what the querier can infer about a sensitive cell in a view using dependencies.

\subsection{Inference Function}
\label{sect:inf_fun}

Dependencies such as denial constraints are  defined at schema level, such as the dependency $\vSchemaDep{}$ on Table~\ref{fig:exampleTables}: 
\begin{eqnarray}
\vSchemaDep{}: \forall t_i, t_j \in Emp  \hspace{-0.9em} & \neg(t_i[State] = t_j[State] \land 
 t_i[Role] = t_j[Role]  \nonumber \\ & \land ~~ t_i[SalPerHr] > t_j[SalPerHr]). \nonumber
\end{eqnarray}

Given a database instance $\vDatabase$, the schema level dependencies can be \emph{instantiated} using the tuples.
If the \textit{Employee} Table has 4 tuples, then there are $\binom{4}{2}=6$ number of instantiated dependencies at cell level.
For example, one of the instantiated dependencies for $\vSchemaDep{}$ is 
\begin{eqnarray} \label{eq:id4}
\vDataDep{}{}: \neg((\vCell{}{11} = \vCell{}{18}) \land (\vCell{}{12} = \vCell{}{19}) \land (\vCell{}{14} > \vCell{}{21})) 
\end{eqnarray}
where $\{\vCell{}{11}, \vCell{}{18}, \vCell{}{12}, \vCell{}{19}, \vCell{}{14}, \vCell{}{21}\}$ correspond to $t_2[State]$, $t_3[State]$, $t_2[Role]$, $t_3[Role]$, $t_2[SalPerHr]$, and $t_3[SalPerHr]$ in the \textit{Employee} Table respectively.
From now on, we use $\vDataDepSet{}$ to denote the full set of instantiated dependencies for the database instance $\vDatabase$ at cell level. 
We use $\vPredicates{\vDataDep{}{}}$,  $\vPredicates{\vDataDep{}{},c}$, and $\vPredicates{\vDataDep{}{}\backslash c}$ to represent the set of predicates in the instantiated dependency $\vDataDep{}{}$, the set of predicates in $\vDataDep{}{}$ that involves the cell $c$, and the set of predicates in $\vDataDep{}{}$ that do not involve the cell $c$ respectively.
We also use $Cells(\vDataDep{}{})$ and $Cells(\vPredicate{}{})$ to represent the set of cells in an instantiated dependency and a predicate respectively.
For each instantiated dependency $\vDataDep{}{}\in \vDataDepSet{}$, when every cell $\vCell{}{i}\in Cells(\vDataDep{}{})$ is assigned with a value $x_i\in Dom(\vCell{}{i})$, denoted by 
$\vDataDep{}{}(\ldots,\vCell{}{i}=x_i,\ldots)$, the expression associated with an instantiated dependency can be evaluated to either \true or \false. Note that since the database is assumed to  satisfy all the dependencies, all of the instantiated dependencies must evaluate to \true for any instance of the database.

We use 
the notation $\inference(\vCell{}{} \mid \view)$ to denote the set of values (inferred by the querier) that the cell $\vCell{}{}$ 
can take given
the view  $\view$ but without any knowledge of the set of dependencies.
Likewise, $\inference(\vCellSet{}{} \mid \view )$ denote the cross product of the inferred value sets for cells in the cell set $\vCellSet{}$, i.e., $\inference(\vCellSet{}{} \mid \view)=\times_{\vCell{}{}\in \vCellSet{}} \inference(\vCell{}{} \mid \view)$. Clearly, if in a view, a cell is assigned its original/true value
(and not \nullvalue ) 
then $\inference(\vCell{}{} \mid \view)$ consists of only its true value. We will further assume that:

\begin{assumption}
\label{assumption:correlation}
Let  $\view$ be a view and $\vCell{}{}$  be a cell with value \nullvalue assigned to it in
$\view$. $\inference(\vCell{}{} \mid \view) = \vDomain{\vCell{}{}}$. That is, 
a querier without knowledge of
dependencies, cannot infer any further information about the value of the cell beyond its domain.
\end{assumption}

Knowledge of the dependencies can, however, lead
the querier to make inferences about
the value of the cell. 
The following example illustrates that the querier may  be able to eliminate some domain values as possible assignments of 
 $\vDomain{\vCell{}{}}$.

\begin{example} \label{eg:leak}
Let $\vCell{}{14}$ in Table~\ref{fig:exampleTables} be 
sensitive for a querier and let the view  $\view$ be the same as the original table with  $\vCell{}{14}$
replaced with \nullvalue.
Furthermore, let $\vDataDep{}{}$ (Eqn.~\eqref{eq:id4}) (that refers to
$\vCell{}{14}$) hold.
If the querier is not aware of this dependency $\vDataDep{}{}$, the inferred value set for $\vCell{}{14}$ is the full domain, i.e., $\inference(\vCell{}{14} \mid \view)=Dom(\vCell{}{14})$. 
However, knowledge of
 $\vDataDep{}{}$  leads to the inference that $\vCell{}{14} \leq 200$ since 
the other two predicates ($\vCell{}{11} = \vCell{}{18}, \vCell{}{12} = \vCell{}{19}$) are \true.
\label{eg:cellLeakage}
\end{example}

\begin{definition}[Inference Function]
Given a view $\view$ and an instantiated dependency $\vDataDep{}{}$  for a cell $\vCell{}{i}\in Cells(\vDataDep{}{})$, the inferred set of values for $c_i$ by $\vDataDep{}{}$ is defined as
\begin{eqnarray}
\inference(c_i|\view, \vDataDep{}{}) \hspace{-0.2em}  \coloneqq \hspace{-1.1em} & 
 \hspace{-0.9em} \{x_i  ~|~  \exists (\ldots, x_{i-1}, x_{i+1}, \ldots) \nonumber \\
      &\in \inference(Cells(\vDataDep{}{})\backslash\{c_i\}\mid \view) \nonumber  \\ 
      &s.t.~~\vDataDep{}{}(\ldots, c_i=x_i, \ldots)=\true \}
\end{eqnarray}
where $x_i \in \vDomain{\vCell{}{i}}$.

Given a view $\view$ and a set of instantiated dependencies $\vDataDepSet{}{}=\{\ldots, \vDataDep{}{}, \ldots\}$, the inferred value for a cell $\vCell{}{}$ is the intersection of the inferred values for $\vCell{}{i}$ over all the dependencies, i.e., 
\begin{eqnarray}
    \inference(\vCell{}{i}|\view, \vDataDepSet{}{}) \coloneqq 
    {\bigcap}_{\vDataDep{}{}\in \vDataDepSet{}{}} \inference(\vCell{}{i}|\view, \vDataDep{}{})
\label{eq:inferenceDelta}
\end{eqnarray}

\end{definition}

\subsection{Security Definition}

We can now formally define 
the concept of full deniability of
a view.  Note that given a view $\view$
and a set of dependencies  $\vDataDepSet{}$, the following always holds:
$\inference(\vCell{}{}|\view, \vDataDepSet{}{})\subseteq \inference(\vCell{}{}|\view_0, \vDataDepSet{}{})$. We say that
a $\view$ achieves full deniability if
the two set are identical i.e.,
the query results does not enable the  querier to infer anything further about
the database than what the querier could infer from the $\view_0$ (which, 
as mentioned in Sec. 3.2, is the least 
amount of information leaked to the querier).

\begin{definition}[Full Deniability] 
Given a set of sensitive cells 
$\vCellSet{}^S$ in a database instance $\vDatabase$ and a set of instantiated dependencies $\vDataDepSet{}$, we say that a querier view $\view$ achieves full deniability if for all $\vCell{*}{}\in \vCellSet{}^S$, 
\begin{eqnarray}
\inference(\vCell{*}{}|\view, \vDataDepSet{}{})= \inference(\vCell{*}{}|\view_0, \vDataDepSet{}{}). 
\end{eqnarray}
\end{definition}

\section{Full Deniability with Data Dependencies}
\label{sect:computation}

In this section, we first identify conditions under which denial constraints could result 
in leakage
of sensitive cells (i.e., violation of full deniability) and further consider leakages due to function-based
constraints (discussed in Section~\ref{sect:preliminaries}).

\subsection{Leakage due to Denial Constraints}
\label{subsec:leakage_DC}

An instantiated denial constraint consists of multiple predicates in the form of  $\vDataDep{}{} = \neg (\vPredicate{}{1} \land  \ldots \land \vPredicate{}{N})$ where each predicate is either $\vPredicate{}{N} = \vCell{}{} \; \vOperator \; \vCell{\prime}{}$ or $\vPredicate{}{N} = \vCell{}{} \; \vOperator \; \vConstant$. 
A valid value assignment for cells in $\vCellSet{}(\vDataDep{}{})$ has at least one of the predicates in $\vDataDep{}{}$ evaluating to \textit{False} so that the entire dependency instantiation $\vDataDep{}{}$ evaluates to \textit{True}. 
Based on this observation, we identify a sufficient condition to prevent a querier from learning about a sensitive cell $\vCell{*}{} \in \vCellSet{}^{S}$ in an instantiated DC  $\vDataDep{}{i}$ with value assignments.

As shown in Example~\ref{eg:leak}, for an instantiated DC $\vDataDep{}{}$ with cell value assignments, when all the predicates except for the predicate containing the sensitive cell ($\vPredicate{}{}(\vDataDep{}{} \backslash \vCell{*}{})$) evaluates to \textit{True}, a querier can learn that the remaining predicate $\vPredicate{}{}(\vDataDep{}{}, \vCell{*}{})$ evaluates to \textit{False} even though $\vCell{*}{}$ is hidden.
Thus, it becomes possible for the querier to learn about the value of a sensitive cell from the other non-sensitive cell values.
We can prevent such an inference by hiding additional non-sensitive cells.

\begin{example}
Suppose, in  Example~\ref{eg:leak},  we hide the non-sensitive cell (e.g., $\vCell{}{11}$) 
in addition to $\vCell{}{14}$ (i.e., replace it  with \nullvalue).  \reviseTKDE{Now, the querier will be uncertain of the truth value of $\vCell{}{11} = \vCell{}{18}$}, and as a result, 
cannot determine the truth 
value of the predicate $\vCell{}{14} > \vCell{}{21}$ containing the sensitive cell. Since the predicate, $\vCell{}{14} > \vCell{}{21}$ could either be true or false, 
the querier does not learn anything about
the value of the sensitive cell $\vCell{}{14}$.
\end{example}

We can formalize this intuition into
a sufficient condition that identifies additional
non-sensitive cells to hide 
which we refer to as the \emph{Tattle-Tale Condition} (TTC) \footnote{Tattle-Tale refers to someone who reveals secret about others} 
in order to prevent leakage of sensitive cells, as follow: 

\begin{definition}[Tattle-Tale Condition]\label{def:ttc}
Given an instantiated DC $\vDataDep{}{}$, a view $\view{}{}$, a cell $\vCell{}{} \in Cells(\vDataDep{}{})$, and $\vPredicates{\vDataDep{}{}\backslash \vCell{}{}} \neq \phi$  
\begin{equation}
TTC(\vDataDep{}{}, \view, \vCell{}{})=
    \begin{cases}
        \textit{True}, & \parbox{4cm}{$\forall ~~ \vPredicate{}{i}$ $\in$ $\vPredicates{\vDataDep{}{} \backslash \vCell{}{}}$, \\ $eval(\vPredicate{}{i}, \view{}{})$ = \textit{True}} \\
        \textit{False}, & \textit{otherwise}
    \end{cases}
\label{eq:ttc}
\end{equation}
where \textit{eval($\vPredicate{}{}, \view{}{}$)}  refers to the truth value of
the predicate $\vPredicate{}{}$ in the view $\view{}{} $ using the standard 3-valued
logic of SQL i.e., a predicate evaluates
to true, false, or unknown (if one or both cells are set to \nullvalue). The predicates only compare between the values of two cells or the value of a cell with a constant.
\vspace{-0.3em}
\end{definition}

Note that $TTC(\vDataDep{}{}, \view, c)$ is \true 
if and only if all the predicates except for the predicate\revise{(s)} containing $\vCell{}{}$ ($\vPredicates{\vDataDep{}{},c}$) evaluate to \textit{True} in which case, the querier can
infer that the \revise{one of the predicates} containing $\vCell{}{}$ must be false and, as a result, 
could exploit the knowledge of the predicate\revise{(s)}
to restrict the set of possible
values that $\vCell{}{}$ could take. 
This leads us to a sufficient condition
to achieve full deniability as captured in the following two theorems. In proving the theorems, we will assume that none of the 
predicates in the denial constraints are trivial 
That is, there always exist a domain value for which the predicate can be true or false. This also means that
in the base view $\view_0$ (where all cells are hidden), for any cell $\vCell{}{i} \in cells(\vDataDep{}{})$ and for any predicate $\vPredicate{}{} \in \vPredicates{\vDataDep{}{}, \vCell{}{}}$, there exists a possible assignment for $\vCell{}{i}$ in $\inference(\vCell{}{i}\mid\view_0{}{},\vDataDep{}{})$ such that \textit{eval}($\vPredicate{}{}, \view_0{}{})$ returns \textit{False}. The proof is inclduded in the appendix.

\begin{theorem}
Given an instantiated DC $\vDataDep{}{}$, a view $\view{}{}$, and a sensitive cell $\vCell{*}{} \in \vCells{(\vDataDep{}{})}$ whose value is hidden in this view. If the Tattle-Tale Condition $TTC(\vDataDep{}{}, \view, \vCell{*}{})$ evaluates to False, then the set of inferred values for $\vCell{*}{}$ from $\view$ is the same as that from the base view $\view_0$ (where all the cells are hidden), i.e.,  $\inference(\vCell{*}{}|\view, \vDataDep{}{})=\inference(\vCell{*}{}|\view_0, \vDataDep{}{})$. 
\label{theorem:TTC}
\end{theorem}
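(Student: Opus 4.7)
My plan is to prove the two set inclusions $\inference(\vCell{*}{}|\view, \vDataDep{}{}) \subseteq \inference(\vCell{*}{}|\view_0, \vDataDep{}{})$ and its reverse separately. The forward inclusion is a direct monotonicity argument: since $\view_0$ hides every cell, every cell assignment that is consistent with $\view$ is also consistent with $\view_0$ (consistency with $\view_0$ imposes no constraint by Assumption~\ref{assumption:correlation}), so every witness for membership in the left-hand set also witnesses membership in the right-hand set.

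The substantive direction is the reverse inclusion. I will fix an arbitrary $x^* \in \inference(\vCell{*}{}|\view_0, \vDataDep{}{})$ and construct a cell assignment to $\vCells{\vDataDep{}{}} \setminus \{\vCell{*}{}\}$ that is consistent with $\view$, assigns $\vCell{*}{} = x^*$, and makes $\vDataDep{}{}$ evaluate to \true. Since $\vDataDep{}{}$ has the form $\neg(\vPredicate{}{1} \land \cdots \land \vPredicate{}{N})$, it suffices to force a single predicate in the conjunction to \false. The hypothesis $TTC(\vDataDep{}{}, \view, \vCell{*}{}) = \false$ pinpoints exactly which predicate to target: some $\vPredicate{}{i} \in \vPredicates{\vDataDep{}{} \backslash \vCell{*}{}}$ already fails to evaluate to \true under $\view$ in the three-valued logic. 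I then split on the truth value of this $\vPredicate{}{i}$. If $\textit{eval}(\vPredicate{}{i}, \view) = \false$, then every cell in $\vPredicate{}{i}$ is revealed and the assignment of view values already falsifies $\vPredicate{}{i}$; assigning $\vCell{*}{} = x^*$, taking view values for the other revealed cells, and taking arbitrary domain values for the remaining hidden cells yields the desired witness. If instead $\textit{eval}(\vPredicate{}{i}, \view)$ is \unknown, then at least one cell in $\vPredicate{}{i}$ is hidden, and I will invoke the non-triviality assumption stated just before the theorem to choose a value for such a hidden cell that, together with the revealed cells inside $\vPredicate{}{i}$, drives $\vPredicate{}{i}$ to \false.

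The main obstacle is the \unknown subcase: the witnessing assignment must simultaneously honor the values that $\view$ fixes for the revealed operands inside $\vPredicate{}{i}$ and still force $\vPredicate{}{i}$ to \false. This is precisely where the non-triviality assumption is essential, combined with the structural fact that every predicate in a DC is a simple binary comparison of the form $\vCell{}{} \vOperator \vCell{\prime}{}$ or $\vCell{}{} \vOperator \vConstant$ over $\vOperator \in \{=, >, <, \neq, \geq, \leq\}$; fixing any one operand leaves room in the domain of the other operand to falsify the predicate. Once $\vPredicate{}{i}$ is pinned to \false, every remaining cell outside $\vPredicate{}{i}$ is irrelevant to the truth value of $\vDataDep{}{}$, so those cells can be freely assigned (from $\view$ if revealed, arbitrarily if hidden), completing the witness and placing $x^*$ into $\inference(\vCell{*}{}|\view, \vDataDep{}{})$. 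Combining the two inclusions yields the claimed equality.
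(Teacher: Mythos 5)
Your proposal is correct and follows essentially the same route as the paper's proof: the forward inclusion by monotonicity, then the reverse inclusion via the same case split on whether the non-True predicate guaranteed by $TTC = \false$ evaluates to \false (all operands revealed, dependency trivially \true for any $x^*$) or to \unknown (use the non-triviality assumption to pick a falsifying value for a hidden operand). Your remark that the falsifying value must be compatible with any operand of $\vPredicate{}{i}$ that remains revealed in $\view$ is a slightly more careful articulation of the \unknown case than the paper gives, but the argument is the same.
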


\begin{corollary}
\label{cor:TTC}
Given a set of instantiated DCs $\vDataDepSet{}$, a view $\view{}{}$, and a sensitive cell $\vCell{*}{}$ whose value is hidden in this view. If for each of the instantiations $\vDataDep{}{i} \in \vDataDepSet{}$, \textit{TTC}($\vDataDep{}{i}, \view{}{}, \vCell{*}{}$) evaluates to \textit{False} then the set of inferred values $\vCell{*}{}$ from the $\view{}{}$ is same as that from the base view $\view{}{}_0$ i.e., $\inference(\vCell{*}{} \mid \view{}{}, \vDataDepSet{}) = \inference(\vCell{*}{} \mid \view_0, \vDataDepSet{})$.
\end{corollary}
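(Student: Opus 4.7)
The plan is to lift Theorem~\ref{theorem:TTC} from a single instantiated DC to the full set $\vDataDepSet{}$ by exploiting the fact that, per Equation~\eqref{eq:inferenceDelta}, the inference function over a set of dependencies is defined pointwise as an intersection: $\inference(\vCell{*}{}\mid \view, \vDataDepSet{}) = \bigcap_{\vDataDep{}{i} \in \vDataDepSet{}} \inference(\vCell{*}{}\mid \view, \vDataDep{}{i})$, with the analogous identity for $\view_0$. Because the right-hand side is an intersection of per-dependency inferred sets rather than a joint-satisfiability constraint across all dependencies simultaneously, a per-dependency equality suffices to yield the desired equality on the full set.

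The steps I would carry out are as follows. First, fix an arbitrary $\vDataDep{}{i} \in \vDataDepSet{}$ with $\vCell{*}{} \in \vCells{\vDataDep{}{i}}$. The hypothesis of the corollary gives $\textit{TTC}(\vDataDep{}{i}, \view, \vCell{*}{}) = \false$, so Theorem~\ref{theorem:TTC} applies directly and yields $\inference(\vCell{*}{}\mid \view, \vDataDep{}{i}) = \inference(\vCell{*}{}\mid \view_0, \vDataDep{}{i})$. Second, for any $\vDataDep{}{i} \in \vDataDepSet{}$ with $\vCell{*}{} \notin \vCells{\vDataDep{}{i}}$, the constraint imposes no condition on the value assignment of $\vCell{*}{}$, so by Assumption~\ref{assumption:correlation} the inferred value set equals $\vDomain{\vCell{*}{}}$ under both $\view$ and $\view_0$, and the equality is trivial. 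Third, intersect the per-dependency equalities over all $\vDataDep{}{i} \in \vDataDepSet{}$ and apply Equation~\eqref{eq:inferenceDelta} on both sides to conclude $\inference(\vCell{*}{}\mid \view, \vDataDepSet{}) = \inference(\vCell{*}{}\mid \view_0, \vDataDepSet{})$.

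There is no substantive obstacle here: the corollary is essentially a set-theoretic lifting of Theorem~\ref{theorem:TTC}. The only subtlety worth flagging is that the definition of $\inference(\cdot\mid \view, \vDataDepSet{})$ as a pointwise intersection is, in principle, an overapproximation of the set of values for $\vCell{*}{}$ that are jointly consistent with \emph{all} $\vDataDep{}{i} \in \vDataDepSet{}$ simultaneously; a tighter, joint-consistency definition could potentially invalidate the straightforward per-dependency argument and would require handling how hidden cells shared between different instantiated DCs interact. Since the paper adopts the intersection-based definition, however, the symmetry between $\view$ and $\view_0$ under the intersection makes the corollary immediate from Theorem~\ref{theorem:TTC} without any further joint reasoning.
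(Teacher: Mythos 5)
Your proposal is correct and follows essentially the same route as the paper's own proof: apply Theorem~\ref{theorem:TTC} to each instantiated dependency individually and then intersect the resulting per-dependency equalities via Equation~\ref{eq:inferenceDelta}. Your extra remarks---handling dependencies not containing $\vCell{*}{}$ via Assumption~\ref{assumption:correlation} and flagging that the intersection-based definition is an overapproximation of joint consistency---are accurate but do not change the argument.
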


\extend{
\begin{proof}
From Theorem~\ref{theorem:TTC}, we know that for each $\vDataDep{}{i} \in \vDataDepSet{}$ when the TTC is False, we have $\inference(\vCell{*}{}|\view, \vDataDep{}{i})=\inference(\vCell{*}{}|\view_0, \vDataDep{}{i})$.
As each individual set based on individual dependency instantiation are equal in both the released view and base view, the joint set of values in both views computed by the intersection of all the sets should also be equal i.e., ${\bigcap}_{\vDataDep{}{i} \in \vDataDepSet{}{}} \inference(\vCell{*}{}|\view, \vDataDep{}{i})$ = ${\bigcap}_{\vDataDep{}{i} \in \vDataDepSet{}{}} \inference(\vCell{*}{}|\view_0, \vDataDep{}{i})$.
According to Equation~\ref{eq:inferenceDelta}, this joint set is the final inferred set of values for $\vCell{*}{}$ based on $\vDataDepSet{}$ in a given view and as they are equal we have $\inference(\vCell{*}{} \mid \view{}{}, \vDataDepSet{}) = \inference(\vCell{*}{} \mid  \view_0, \vDataDepSet{})$. 
\end{proof}
If the dependency $\vDataDep{}{}$ only contains a single predicate, the Tattle-Tale condition evaluates to True even in $\view_0$ when all the cells are hidden $TTC(\vDataDep{}{}, \view_0, \vCell{}{i})=$\textit{True} in the cases of $\vPredicate{}{}(\vCell{}{i})$ and therefore it is not possible to prevent querier from learning about the truth value of the sensitive predicate. 
}

\subsection{Selecting Cells to Hide}

As shown in Theorem \ref{theorem:TTC}, the Tattle-Tale condition evaluating to \false is the sufficient condition of achieving full deniability requirement.
 $TTC(\vDataDep{}{}, \view{}{}, \vCell{}{})$ evaluates to \false when one of the following holds:
(i) none of the predicates involve the sensitive cell i.e., $\vPredicates{\vDataDep{}{}, \vCell{*}{}} = \phi$ (trivial case); 
(ii) one of the other predicates in $\vPredicates{\vDataDep{}{}\backslash\vCell{*}{}}$ evaluates to \false in $\view{}{}$; or
(iii) one of the other predicates in $\vPredicates{\vDataDep{}{}\backslash\vCell{*}{}}$ involve a hidden cell in $\view{}{}$ and thus evaluates to \unknown.

We define \emph{cuesets}\footnote{These cells give a \textit{cue} about the sensitive cell to the querier.} as the set of cells in an instantiated DC that can be hidden to falsify the Tattle-Tale condition.

\begin{definition}[Cueset]
Given an instantiated DC $\vDataDep{}{}$, a cueset for a cell $c\in cells(\vDataDep{}{})$ is defined as 
\begin{equation}
    cueset(\vCell{}{}, \vDataDep{}{}) 
= Cells(\vPredicates{\vDataDep{}{}\backslash c})
.
\end{equation}
\end{definition}

\revise{
If $\vDataDep{}{}$ only contains a single predicate, we consider the remaining cell in the $cueset(\vCell{}{}, \vDataDep{}{}) = \vCell{}{j}$ given that $\vPredicate{}{}(\vCell{}{}) = \vCell{}{i} \vOperator \vCell{}{j}$.
}

\begin{example}\label{eg:cueset}
In the instantiated DC from Example~\ref{eg:leak}, 
the cueset for $\vCell{}{14}$ based on $\vDataDep{}{4}$ is $cueset(\vCell{}{14}, \vDataDep{}{4})$ = $\{\vCell{}{4}, \vCell{}{11}, \vCell{}{5}, \vCell{}{12}\}$.
\end{example}

We could falsify the Tattle-Tale condition w.r.t. 
a given cell  $\vCell{}{}$ and dependency $\vDataDep{}{}$ by hiding 
any one of the cells in the cueset independent of their values in $\view{}{}$.
The cuesets for a cell $\vCell{}{}$ is defined for a
given dependency instantiation. We can further define 
cueset  for $\vCell{}{}$ for given a set of instantiated DCs $\vDataDepSet{}$ by simply computing the $cueset(\vCell{}{}, \vDataDep{}{})$ for each instantiated dependency in the set $\vDataDep{}{} \in \vDataDepSet{}$.
In order to prevent leakage of $\vCell{}{}$ through $\vDataDep{}{}$, we will  hide
one of the cells in the  $cueset(\vCell{}{}, \vDataDep{}{})$ corresponding to
each of dependency instantiations $\vDataDep{}{} \in \vDataDepSet{}$.

\eat
{\color{blue}
We might define two versions of cueset:
\begin{itemize}
\item  an instance/view oblivious cueset: 
\begin{equation}
    cueset(\vCell{}{}, \vDataDep{}{}) 
= cells(\vPredicates{
\vDataDep{}{})} - \vPredicates{\vDataDep{}{},c})
\end{equation}
where $cells(\vPredicates{\cdots})$ denotes the cells in the set of predicates. 
This view oblivious cueset does not care whether the predicates have assigned values and can be evaluated False or not. If designing algorithms with respect to this type of cueset, it is fine to let the querier to know what is the set of sensitive cells.
\item 
a view aware cueset: 
if  a view $\view$ falsifies some of the predicates in $\vPredicates{\vDataDep{}{}}\backslash \vPredicates{\vDataDep{}{},c}$ (i.e.no cells in this predicate are hidden in $\view$ and the predicate evaluates False), 
then there is no cueset for $c$ and $\vDataDep{}{}$; 
otherwise, the cueset is defined in the same way as the view oblivious cueset. The cueset of $c$ given $\vDataDep{}{}$ and $\view$ is 
\begin{eqnarray}
cueset(\vCell{}{}, \vDataDep{}{},\view)  =  \nonumber \\ 
\begin{cases}
    N.A., & \text{if } \exists \vPredicate{}{} \in \vPredicates{\vDataDep{}{}}\backslash \vPredicates{\vDataDep{}{},c} \text{ s.t. } eval(\vPredicate{}{}, \view)=False.\\
    cueset(\vCell{}{}, \vDataDep{}{}), & \text{otherwise}.
  \end{cases}
\end{eqnarray}
\end{itemize}
}

This alone, however, might not still falsify the 
tattle-tale condition to achieve full-deniability. 
Leakage can  occur indirectly since the value of the
cell, say $\vCell{}{j}$ chosen from the   
  $cueset(\vCell{*}{}, \vDataDep{}{i})$ to hide 
(in order to protect leakage of a sensitive cell $\vCell{*}{}$) 
could, in turn, be 
inferred due to additional dependency instantiation, say $\vDataDep{}{j}$. If this dependency instantiation does contain 
 $\vCell{*}{}$ (as in that case $\vCell{*}{}$ is already  hidden and therefore it cannot be used to infer any information about $\vCell{}{j}$), such a leakage can, in turn, lead to leakage of $\vCell{*}{}$ as shown in the following example.

Achieving full deniability for the sensitive cells requires us to recursively
select cells to hide from the cuesets of not just sensitive cells, but also, from the cuesets of all the hidden cells.  
This recursive hiding of cells  
terminates when the cueset of a newly hidden cell includes an already hidden cell.
The following theorem states that after the recursive hiding of cells in cuesets has terminated, the querier view achieves full deniability. The proof is included in the appendix.

\begin{theorem}[Full Deniability for a Querier View]\label{theorem:fdcell}
Let $\vDataDepSet{}$ be the set of dependencies,
$\vCellSet{}^S$ be the sensitive cells for the querier  and $\vCellSet{}^S \subseteq \vCellSet{}^H$ be the set of hidden cells resulting in a  $\view{}{}$ for the querier.
$\view{}{}$ achieves full deniability 
if\; $\forall \vCell{}{i} \in \vCellSet{}^H$,
$\forall \vDataDep{}{} \in \vDataDepSet{}$,
$\forall$ $\text{non-empty  }$ $cueset(\vCell{}{i}, \vDataDep{}{})$ $\in cuesets(\vCell{}{i}, \vDataDepSet{})$, 
there exists a $\vCell{}{j} \in \vCellSet{}^H$ such that $\vCell{}{j} \in cueset(\vCell{}{i}, \vDataDep{}{})$.
\end{theorem}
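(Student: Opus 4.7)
The plan is to reduce this theorem to Corollary~\ref{cor:TTC} by showing that under the stated hypothesis, the Tattle-Tale Condition is falsified for every sensitive cell across every instantiated dependency in $\vDataDepSet{}$, while using the strength of the hypothesis over \emph{all} hidden cells (not merely sensitive ones) to block the transitive inference chains illustrated by the example preceding the theorem.

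First I would fix an arbitrary sensitive cell $\vCell{*}{}\in \vCellSet{}^S \subseteq \vCellSet{}^H$ and an arbitrary dependency $\vDataDep{}{}\in \vDataDepSet{}$. If $\vCell{*}{}\notin \vCells{\vDataDep{}{}}$, then $\vDataDep{}{}$ trivially contributes $\vDomain{\vCell{*}{}}$ to the inferred set in both $\view$ and $\view_0$. If $\vCell{*}{}\in \vCells{\vDataDep{}{}}$ and $cueset(\vCell{*}{},\vDataDep{}{})=\phi$, the dependency reduces to a single predicate on $\vCell{*}{}$, which evaluates identically in $\view$ and $\view_0$. Otherwise, the hypothesis applied to $\vCell{*}{}$ supplies some $\vCell{}{j}\in \vCellSet{}^H$ with $\vCell{}{j}\in cueset(\vCell{*}{},\vDataDep{}{})=\vCells{\vPredicates{\vDataDep{}{}\backslash \vCell{*}{}}}$. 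Hence at least one predicate in $\vPredicates{\vDataDep{}{}\backslash \vCell{*}{}}$ mentions a hidden cell and, under 3-valued SQL semantics, evaluates to \unknown~in $\view$. By Definition~\ref{def:ttc}, $TTC(\vDataDep{}{},\view,\vCell{*}{})=\false$, so Theorem~\ref{theorem:TTC} yields $\inference(\vCell{*}{}\mid \view,\vDataDep{}{})=\inference(\vCell{*}{}\mid \view_0,\vDataDep{}{})$. Intersecting over $\vDataDep{}{}\in \vDataDepSet{}$ (exactly the argument of Corollary~\ref{cor:TTC}) then gives $\inference(\vCell{*}{}\mid \view, \vDataDepSet{}) = \inference(\vCell{*}{}\mid \view_0, \vDataDepSet{})$, which is the definition of full deniability for $\vCell{*}{}$.

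The subtle step, and the reason the hypothesis is quantified over \emph{every} $\vCell{}{i}\in \vCellSet{}^H$ rather than only over $\vCellSet{}^S$, is ruling out indirect leakage: an adversary could first use some $\vDataDep{}{j}$ to shrink $\inference(\vCell{}{i}\mid\view,\vDataDep{}{j})$ below $\vDomain{\vCell{}{i}}$ for a non-sensitive hidden cell $\vCell{}{i}$, and then feed that refined knowledge back into a dependency constraining $\vCell{*}{}$. To handle this I would strengthen the inductive claim: for \emph{every} $\vCell{}{i}\in \vCellSet{}^H$ and every $\vDataDep{}{}\in \vDataDepSet{}$, the same TTC-falsification argument (instantiated at $\vCell{}{i}$ instead of $\vCell{*}{}$) gives $\inference(\vCell{}{i}\mid\view,\vDataDep{}{})=\inference(\vCell{}{i}\mid \view_0,\vDataDep{}{})$, because the hypothesis supplies a hidden companion in each non-empty cueset of $\vCell{}{i}$. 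Hence no hidden cell's per-dependency inferred set shrinks below its base-view counterpart, so even an iterated adversary who keeps substituting refined inferences back into the inference function reaches the same fixed point as under $\view_0$.

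The main obstacle I anticipate is formalizing this ``no chained shrinkage'' claim in a way that matches the paper's single-step inference operator of Equation~\eqref{eq:inferenceDelta}: one must argue that since $\inference(\vCell{}{j}\mid\view)=\vDomain{\vCell{}{j}}=\inference(\vCell{}{j}\mid\view_0)$ for every hidden $\vCell{}{j}$ by Assumption~\ref{assumption:correlation}, and since the companion hidden cell guaranteed in each cueset forces the TTC to remain \false regardless of what any adversary learns about other hidden cells, the witness value assignments used in Theorem~\ref{theorem:TTC}'s Case~2 remain available at every round of iterated reasoning. Once this invariant is established, the closure of the adversary's knowledge under $\view$ collapses onto that under $\view_0$ on all of $\vCellSet{}^H$, and in particular on $\vCellSet{}^S$, completing the proof.
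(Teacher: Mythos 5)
Your proof is correct, and its core mechanism is the same one the paper relies on: the hypothesis plants a hidden companion $\vCell{}{j}$ in every non-empty cueset, which forces some predicate of $\vPredicates{\vDataDep{}{}\backslash \vCell{}{i}}$ to evaluate to \unknown, falsifies the Tattle-Tale Condition, and hands the rest to Theorem~\ref{theorem:TTC} and Corollary~\ref{cor:TTC}; and the quantification over all of $\vCellSet{}^H$ (not just $\vCellSet{}^S$) is what blocks transitive chains. Where you differ is in the packaging of that last point. The paper runs an induction on the number of dependency instantiations and argues by contradiction in the inductive step: if the TTC for some hidden $\vCell{}{i}$ were effectively \true despite a hidden $\vCell{}{j}$ in its cueset, some earlier instantiation would have to leak $\vCell{}{j}$, which the induction hypothesis forbids. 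You instead prove a uniform invariant --- for \emph{every} $\vCell{}{i}\in\vCellSet{}^H$ and every $\vDataDep{}{}$, the per-dependency inferred set equals its base-view counterpart --- and then observe that an iterated adversary therefore has nothing to feed back. Given that the paper's formal inference function (Equation~\eqref{eq:inferenceDelta}) is a single-step intersection, your direct reduction is actually the tighter match to the statement as literally formalized, and it avoids the slightly awkward move in the paper's proof of positing ``TTC is \true even though $\vCell{}{j}$ is hidden,'' which cannot happen under the literal three-valued definition. Both arguments, yours and the paper's, leave the multi-round closure of adversarial reasoning somewhat informal --- you name this obstacle explicitly in your last paragraph, and resolving it fully would require a fixed-point formalization of the inference function that the paper does not provide either --- so this is a shared limitation rather than a gap specific to your proposal.
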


\subsection{Leakage due to Function-based Constraints} 
\label{subsec:leakage_FC}

To study the leakages due to function-based constraints (FCs), we define the property of invertibility associated with functions.

\begin{definition}[Invertibility] 
Given a function $fn(r_1, r_2, \ldots,$ $ r_n) = s_i$, we say that $fn$ is invertible  if it is possible to infer knowledge about the inputs ($r_1, r_2, \ldots, r_n$) from its output $s_i$. Conversely, if  $s_i$ does not lead to any inferences about ($r_1, r_2, \ldots, r_n$), we say that it is non-invertible 
\end{definition}

The \textit{Salary} function, in Example~\ref{eg:intro2}, is invertible as 
given the Salary of an employee, a querier can determine the minimum value of 
\textit{SalPerHr} for that employee given that the maximum number of work hours in a week 
is fixed. Complex user-defined functions (UDFs) (e.g., sentiment analysis code which outputs the sentiment of a person in a picture), oblivious functions, secret sharing, and many aggregation functions are, however, non-invertible.
Instantiated FCs can be represented similar to denial constraints. 
 For example, 
an instantiation of the dependency $\vSchemaDep{}:$ \textit{Salary} $\coloneqq$ \textit{fn(WorkHrs, SalPerHr)} is: \reviseTKDE{$\vDataDep{}{}: \neg(\vCell{}{6} = 20 \land \vCell{}{7} = 40 \land \vCell{}{31} \neq 800)$ where $\vCell{}{6}, \vCell{}{7}, \vCell{}{31}$} corresponds to Alice's \textit{WorkHrs}, \textit{SalPerHr} and \textit{Salary} respectively.

For instantiated FCs, if the sensitive cell corresponds to an input to the function, and the function is
not invertible, then leakage cannot occur due to such an FC. Thus, the $TTC(\vCell{*}{}, \vDataDep{}{}, \view{}{})$ returns \false when the function is non-invertible.
For all other cases, the leakage can 
occur in the exact same way as in denial constraints. 
We thus, need to to ensure the Tattle-Tale Condition for all the instantiations of a FC evaluates False.

\stitle{Cueset for Function-based Constraints.}
The cueset for a FC $\vDataDep{}{}$ is determined depending on whether the derived value ($s_i$)
or input value ($\{ \ldots, r_j, \ldots \}$) is sensitive and the invertibility property of the function $fn$. 
\begin{equation*}
cueset(\vCell{}{}, \vDataDep{}{})=
    \begin{cases}
        {\{ \vCell{}{i} \} \; \forall \vCell{}{i} \in \{ \ldots, r_j, \ldots \}}, & {\text{if $\vCell{}{} = s_i$}} \\
        { \{ s_i \} } & \hspace{-9.5em}  {\text{$fn$ \textit{ is invertible and if } $\vCell{}{} \in \{ \ldots, r_j, \ldots \} $}} \\
        { \text{$\phi$} } & \hspace{-10.5em} {\text{$fn$ \textit{is non-invertible and if} $\vCell{}{} \in \{ \ldots, r_j, \ldots \} $}}
    \end{cases}
\label{eq:fc_cuesets}
\end{equation*}

As the instantiation for FC is in DC form and their Tattle-Tale Conditions and cueset determination are almost identical, in the following section we explain the algorithms for achieving full deniability with DCs as extending it to handle FCs requires only a minor change (disregard cuesets when one of the input cell(s) is sensitive and function is non-invertible).

\reviseTKDE{
\stitle{Remark.}
We extend the invertibility notion to a more general model, i.e., \emph{$(m, n)$-invertibility}, that can capture the partial leakage due to function-based constraints. The details for this notion and computing partial leakage according to $(m,n)$-invertibility can be found in supplementary materials.
}

\section{Algorithm to Achieve Full Deniability}
\label{sect:algorithm}

In this section, we present an algorithm 
to determine the set of cells to hide to achieve
full-deniability based on  Theorem~\ref{theorem:fdcell}. 
Full-deniability can trivially be achieved by sharing the 
base view $\view_0$ where all cell values are replaced with $\nullvalue$.
Our goal is to ensure that we hide the minimal number of cells possible while achieving full deniability.

\subsection{Full-Deniability Algorithm}

\begin{algorithm}[t]
    \SetAlgoLined
    \DontPrintSemicolon
    \KwInput{User $\vUser{}{}$, Data dependencies $\vDataDepSet{}$, A view of the database $\view$}
    \KwOutput{A secure view $\view_S$}
        $\vCellSet{}^{S}$ = \textbf{SensitivityDetermination}($\vUser{}{}$, $\view$) \\
        $\vCellSet{}^H$ = $\vCellSet{}^{S}$,  $\view_S = \view$  \\
        $cuesets$ = \textbf{InferenceDetect}($\vCellSet{}^H$, $\vDataDepSet{}, \view$) \\
        \While{$cuesets \neq \phi$}
        {
            \For{\textit{cs} $\in$ \textit{cuesets}}
            {
                \uIf{\textit{cs}.overlaps($\vCellSet{}^H$)}
                {
                    \textit{cuesets}.remove(\textit{cs})
                }
            }
            \textit{toHide} = \textbf{InferenceProtect} ($cuesets$) \\
            $\vCellSet{}^H$.addAll(\textit{toHide})  \\
            $cuesets$ = \textbf{InferenceDetect}($toHide$, $\vDataDepSet{}, \view$) \\
        }
        \For{$\vCell{}{i} \in $ $\vCellSet{}^H$}
        {
            Replace $\vCell{}{i}$\textit{.val} in $\view_S$ with \textit{NULL}
        }
        \Return{$\view_S$}
    \caption{Full Algorithm }
    \label{alg:FullAlgo}
\end{algorithm}

Our approach (Algorithm~\ref{alg:FullAlgo}) takes as input a user $\vUser{}{}$, a set of schema level dependencies $\vDataDepSet{}$, and a view of the database $\view$ (initially set to the original database). 
The algorithm first determines the set of sensitive cells $\vCellSet{}^S$
(\emph{Sensitivity Determination} function 
for $\vUser{}{}$ and $\view$).
Sensitivity determination identifies the policies applicable to a querier using the subject conditions in policies and marks a set of cells as sensitive thus assigning them with \nullvalue in the view. 
The set of 
sensitive cells are added into a set of hidden cells ($hidecells$) which will be finally hidden in the secure view ($\view_S$) that is shared with the user $\vUser{}{}$. Next, the algorithm generates the 
cuesets for cells in $hidecells$ 
using $\vDataDepSet{}$ and $\view$ (\emph{Inference Detection}, line 3).
Given the cuesets, the algorithm 
chooses a set of 
cells to hide such that the selected cells cover each of the cuesets 
(\emph{Inference Protection}).
This process of cueset identification protection continues iteratively
as new  hidden cells get added. 
The algorithm terminates when for all of the cuesets there exists a cell that is already hidden.
Finally, we replace the value of $hidecells$ in $\view_S$ (initialized to $\view$) with \nullvalue and return this secure view to the user (lines 13-16).
The following theorem states that the algorithm successfully implements the recursive hiding of cells in $\vCellSet{}^H$ which is required for generating a querier view that achieves full deniability (as discussed in Theorem~\ref{theorem:fdcell}).

\begin{theorem}
When Algorithm~\ref{alg:FullAlgo} terminates, $\forall \vCell{}{i} \in \vCellSet{}^H$, $\forall \vDataDep{}{} \in \vDataDepSet{}$, for all $cueset(\vCell{}{i}, \vDataDep{}{})$ that is non-empty, there exists $\vCell{}{j} \in cueset(\vCell{}{i}, \vDataDep{}{})$ such that $\vCell{}{j} \in \vCellSet{}^H$ (i.e., Algorithm 1 has recursively hidden $\geq$ 1 cell from all the non-empty cuesets of cells in $\vCellSet{}^H)$.
\end{theorem}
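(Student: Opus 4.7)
My plan is to prove the theorem in two stages: first to establish that Algorithm~1 terminates after finitely many iterations, and second to establish the covering property via a loop invariant proved by induction on the iteration count.

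For termination, I observe that $\vCellSet{}^H$ only grows (line 10 adds cells, nothing removes them) and is bounded by the finite set of cells in $\vDatabase$. Each iteration either strictly enlarges $\vCellSet{}^H$, or else the filter in lines~5--8 empties \textit{cuesets}, in which case \textit{toHide} is empty at line 9, line 11 calls \textbf{InferenceDetect} on the empty set and returns $\phi$, and the while loop exits. Hence the loop executes at most as many non-trivial iterations as there are non-sensitive cells in $\vDatabase$.

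For the covering property, let $\vCellSet{k}^H$ denote the value of $\vCellSet{}^H$ at the end of iteration $k$, with $\vCellSet{0}^H = \vCellSet{}^S$, and let $T_k = \vCellSet{k}^H \setminus \vCellSet{k-1}^H$ be the cells newly hidden during iteration $k$. I will prove by induction the loop invariant: for every $\vCell{}{i} \in \vCellSet{k-1}^H$ and every $\vDataDep{}{} \in \vDataDepSet{}$ with $cueset(\vCell{}{i},\vDataDep{}{})$ non-empty, this cueset has non-empty intersection with $\vCellSet{k}^H$. The base case ($k=1$) holds because line 3 enumerates exactly the cuesets of $\vCellSet{0}^H = \vCellSet{}^S$, the filter retains those not already overlapping $\vCellSet{}^S$, and \textbf{InferenceProtect} selects at least one cell per retained cueset and places it into $T_1 \subseteq \vCellSet{1}^H$. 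For the inductive step at iteration $k \geq 2$, I split $\vCellSet{k-1}^H$ as $\vCellSet{k-2}^H \cup T_{k-1}$: cells in $\vCellSet{k-2}^H$ inherit coverage from the invariant at step $k{-}1$ together with $\vCellSet{k-1}^H \subseteq \vCellSet{k}^H$, while cells in $T_{k-1}$ had their cuesets freshly enumerated by line 11 of iteration $k{-}1$, so in iteration $k$ each such cueset is either removed by the filter (hence already covered by $\vCellSet{k-1}^H$) or covered by a cell that \textbf{InferenceProtect} adds to $T_k \subseteq \vCellSet{k}^H$.

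Finally, at termination after iteration $T$, the while check sees \textit{cuesets} $=\phi$; this value was set by line 11 of iteration $T$, so \textbf{InferenceDetect} applied to $T_T$ returned $\phi$, meaning $T_T$ contains no cell possessing a non-empty cueset. Combining this with the invariant at iteration $T$ applied to cells in $\vCellSet{T-1}^H$, every cell in $\vCellSet{}^H = \vCellSet{T}^H = \vCellSet{T-1}^H \cup T_T$ has each of its non-empty cuesets covered, proving the theorem. The main technical obstacle is the bookkeeping arising from the asymmetric placement of the filter (at the start of each iteration, checking against the current $\vCellSet{}^H$) and of \textbf{InferenceDetect} (at the end of each iteration, restricted to the newly hidden cells $T_{k-1}$ only): I must verify that every non-empty cueset of every eventually hidden cell is both enumerated by exactly one \textbf{InferenceDetect} call and subsequently either filtered out or covered by \textbf{InferenceProtect}, so that no cueset slips through the cracks between iterations.
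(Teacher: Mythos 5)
Your proof is correct, but it takes a genuinely different route from the paper's. The paper disposes of the claim in three sentences by contradiction: if some non-empty cueset of a hidden cell contained no hidden cell, it would not be removed by the overlap filter (lines 6--7 of Algorithm~\ref{alg:FullAlgo}), so \textit{cuesets} would remain non-empty and the while loop could not have terminated. Your argument is direct: you first prove termination (which the paper's statement silently presupposes in the phrase ``when Algorithm~\ref{alg:FullAlgo} terminates'' and never argues), and then establish the covering property as a loop invariant by induction on the iteration count, partitioning $\vCellSet{}^H$ into the generations $T_k$ in which cells were hidden. The extra bookkeeping you carry out is exactly what is needed to make the contradiction argument airtight: since line 11 reassigns \textit{cuesets} to the cuesets of only the \emph{newly} hidden cells, an uncovered cueset of a cell hidden in an early iteration is not literally ``still in the cueset list'' at termination, so one must track that each hidden cell's cuesets are enumerated by exactly one \textbf{InferenceDetect} call and that coverage, once achieved, persists because $\vCellSet{}^H$ only grows --- the two points your invariant isolates. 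One caveat applies equally to your proof and the paper's: \textbf{InferenceDetect} skips dependency instantiations whose Tattle-Tale condition evaluates to \textit{False}, so neither argument covers \emph{every} non-empty cueset in the literal sense of the cueset definition, only those the algorithm emits; the theorem should be read with that restriction, which is harmless for full deniability by Theorem~\ref{theorem:TTC}.
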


\commentrequired{
\begin{proof}
    By contradiction, we suppose there exists a cueset $cs \in cueset(\vCell{}{i}, \vDataDep{}{})$ in which no cell is not hidden.
    This means the cueset $cs$ has no overlap with the hidden cell set $\vCellSet{}^H$.
    Then by lines 6-7 in Algorithm 1, the cueset $cs$ exists in the cueset list $cueset(\vCell{}{i}, \vDataDep{}{})$, which indicates that the While loop will not terminate.
    This contradicts the pre-assumed condition.
\end{proof}
}

\subsection{Inference Detection}
\label{subsec:cueset_detection}

\begin{algorithm}[t]
    \SetAlgoLined
    \DontPrintSemicolon
    \KwInput{A set of sensitive cells $\vCellSet{}^{S}$, Schema-level data dependencies $\vDataDepSet{}$, A view of the database $\view$}
    \KwOutput{A set of cuesets $cuesets$}
    \SetKwFunction{FMain}{InferenceDetect}
    \SetKwProg{Fn}{Function}{:}{}
    \Fn{\FMain{$\vCellSet{}^{S}$, $\vDataDepSet{}$, $\view$}}{
    
        cuesets = \{ \} \\
        \For{$\vCell{*}{} \in \vCellSet{}^{S}$}{
            $S_{\vDataDepSet{}}$ = \{ \} \Comment{Set of instantiated dependencies.} \\
            \For{$\vSchemaDep{} \in \vDataDepSet{}$}
            {
                $S_{\vDataDepSet{}}$ = $S_{\vDataDepSet{}}$ $\cup$ \textbf{DepInstantiation}($\vSchemaDep{}$, $\vCell{*}{}$, $\view$)  \\
            }
            
            \For{$\vDataDep{}{} \in$ $S_{\vDataDepSet{}}$}
            {   
                \uIf{$| \vPredicates{\vDataDep{}{}} | = 1$ }
                {
                    cueset = \{$\vCell{}{k}$\} \Comment{Note: $\vPredicate{}{}(\vCell{*}{}) = \vCell{*}{} \vOperator \vCell{}{k}$}
                }
                \uElseIf{ $TTC(\vDataDep{}{}, \view, \vCell{*}{}) = False$}
                {
                    \textbf{continue} 
                }
                \Else
                {   
                   cueset = $cells(\vPredicates{\vDataDep{}{}\backslash\vCell{}{})}$ 
                }
                cuesets.add(cueset)
            }
        }
        \Return{cuesets} 
    }
    \caption{Inference Detection}
    \label{alg:cuedetector}
\end{algorithm}

Inference detection (Algorithm~\ref{alg:cuedetector}) takes as input the set of sensitive cells ($\vCellSet{}^S$), the set of schema-level dependencies ($\vDataDepSet{}$), and a view of the database ($\view$) in which sensitive cells are hidden by replacing with \null and others are assigned the values corresponding to the instance.
For each sensitive cell $\vCell{*}{}$, we consider the given set of dependencies $\vDataDepSet{}$ and instantiate each of the relevant dependencies $\vSchemaDep{}$ using the database view $\view$ (lines 5-7). 
The \emph{DepInstantiation} function returns the corresponding instantiated dependency $\vDataDep{}{}$.
For each such dependency instantiation, if it is a dependency containing a single predicate i.e., $\vDataDep{}{} = \neg(\vPredicate{}{})$ where $\vPredicate{}{} = \vCell{*}{} \vOperator \vCell{}{k}$, we add the non-sensitive cell ($\vCell{}{k}$) to the cueset (lines 9-10).
If the dependency contains more than a single predicate, we determine if there is leakage about the value of the sensitive cell by checking the Tattle-Tale Condition (TTC) for the sensitive cell $\vCell{*}{}$ (line 11)\footnote{While not shown in the algorithm for simplicity, when an input cell is sensitive in an FC instantiation, if the FC is non-invertible we ignore its cuesets as they are empty.}. 
If $TTC(\vDataDep{}{}, \view, \vCell{*}{})$ evaluates to \textit{False}, we can skip that dependency instantiation as there is no leakage possible on $\vCell{*}{}$ due to it (line 12). 
However, if $TTC(\vDataDep{}{}, \view, \vCell{*}{})$ evaluates to \textit{True}, we get all the cells except for $\vPredicate{}{}(\vCell{*}{})$ (line 14)\footnote{If we wish to relax the assumption that queriers and data owners do not overlap stated in Section~\ref{sect:assumptions}, we can do so here by only including the cells in the cueset that do not belong to the querier. We show algorithms to achieve so and prove the correctness of this modification in Section~\ref{sec:assumption_relaxation}.}.
After iterating through all the dependency instantiations for all the sensitive cells, we return \textit{cuesets} (line 19).

Note that in our inference detection algorithm, 
we did not choose the non-sensitive cell $\vCell{\prime}{}$ in $\vPredicate{}{}(\vCell{*}{}) = \vCell{*}{} \vOperator \vCell{\prime}{} $ as a candidate for hiding. We illustrate below using a counter-example why hiding $\vCell{\prime}{}$ might not be enough to prevent leakages.

\begin{example}
Consider a relation with 3 attributes $A_1, A_2, A_3$ and 
3 dependencies among them ($\vSchemaDep{1}: A_1 \rightarrow A_2$, $\vSchemaDep{2}: A_2 \rightarrow A_3$, $\vSchemaDep{3}: A_1 \rightarrow A_3$).
Let there be two tuples in this relation $t_1: 1 (\vCell{}{1}), 2 (\vCell{}{2}), 2 (\vCell{}{3})$ and $t_2: 1 (\vCell{}{4}), 2 (\vCell{}{5}), 2 (\vCell{}{6})$.
Suppose $\vCell{}{6}$ is sensitive. As leakage of the sensitive cell is possible through the dependency instantiation $\vDataDep{}{2}: \neg((\vCell{}{2} = \vCell{}{5}) \land (\vCell{}{3} = \vCell{}{6}))$, $\vCell{}{5}$ is hidden. 
In the next iteration of the algorithm, to prevent leakages on the hidden cell $\vCell{}{5}$ through dependency instantiation $\vDataDep{}{1}: \neg((\vCell{}{1} = \vCell{}{4}) \land (\vCell{}{2} = \vCell{}{5}))$, $\vCell{}{2}$ is also hidden. 
Note that $\vCell{}{2}$ is in the same predicate as $\vCell{}{5}$ in $\vDataDep{}{1}$. 
However, the querier can still infer the truth value of the predicate $\vCell{}{2} = \vCell{}{5}$ as \true based on the two non-hidden cells, $\vCell{}{1}$ and $\vCell{}{4}$, and the dependency instantiation \reviseTKDE{$\vDataDep{}{3}: \neg((\vCell{}{1} = \vCell{}{4}) \land (\vCell{}{3} = \vCell{}{6}))$}.
The querier also learns that $\vCell{}{3} = \vCell{}{6}$ evaluates to \true in $\vDataDep{}{2}$ which leads to them inferring that $\vCell{}{6} = 2$ (same as $\vCell{}{3}$) and complete leakage.
\end{example}

To prevent any possible leakages on the sensitive cell $\vCell{*}{}$ and its corresponding predicate $\vPredicate{}{}(\vCell{*}{})$, we only consider the solution space where a cell from a different predicate ($\vPredicates{\vDataDep{}{}\backslash\vCell{*}{}}$) is hidden.

\commentrequired{
\stitle{Query-based method.}
\label{sect:optimization_query}
For each dependency and each sensitive cell, inference detection instantiates the dependency to generate $| \vDatabase | -1$ instantiations. The algorithm then iterates over each instantiation and checks the Tattle-tale condition and if satisfied generates a cueset.
The inference detection algorithm will be time and space-intensive given a substantial number of dependencies and/or sensitive cells.  
To improve upon this, we propose a query-based technique for implementing inference detection.

Instead of generating one instantiation per sensitive cell and dependency, this method produces one query for all the sensitive cells.
First, this method retrieves the tuples containing sensitive cells, sets the values of sensitive cells to \textit{NULL} and stores them in a temporary table called \textbf{\textit{temp}}.
Next, the Tattle-tale condition check is turned into a join query between this \textbf{\textit{temp}} table and the original table.

}

\commentrequired{
The join condition in this query is based on the tuples being unique (\textit{T1.tid $\neq$ T2.tid}).
Furthermore, this query checks for each relevant attribute in the tuple whether it is sensitive i.e., it is set to \nullvalue in the \textit{temp} table (\textit{T2.Zip is NULL}), or whether the corresponding predicate from the dependency evaluates to \true (\textit{T1.Zip=T2.Zip}).
The \textit{WHERE} condition in this query is only satisfied if all the predicates in a dependency instantiation except for the sensitive predicate evaluate to \true.
Thus, the result of this join query contains all instantiations for which the Tattle-tale condition evaluates to \true from which the cuesets can be readily identified.
}

\subsection{Inference Protection}
\label{subsec:inference_protection}

After identifying the cuesets for each sensitive cell based on their dependency instantiations, we now have to select a cell from each of them to hide to prevent leakages.
The first strategy for cell selection, described in Algorithm \ref{alg:random}, randomly selects a cueset and a cell from it to hide (if no cells in it have been hidden already). We use this approach as our first baseline (\baselineOne) in Section~\ref{sect:experiments}.
The second strategy for cell selection, described in Algorithm \ref{alg:MVC} utilizes Minimum Vertex Cover (MVC)~\cite{cormen2009introduction} to minimally select the cells to hide from the list of cuesets.
In this approach, each cueset is considered as a hyper-edge and the cell selection strategy finds the minimal set of cells that covers all the cuesets.
MVC is known to be NP-hard \cite{dinur2005hardness} and therefore we utilize a simple greedy heuristic based on the membership count of cells in various cuesets.
Algorithm~\ref{alg:MVC} takes as input the set of cuesets and returns the set of cells to be hidden to prevent leakages.
First, we flatten all the cuesets into a list of cells and insert this list into a dictionary with the cell as the key and their frequency count as the value (lines 4-5).
Next, we select the cell from the dictionary with the maximum frequency and add it to the set of cells to be hidden and remove any cuesets that contain this cell (lines  7-10).  
These steps are repeated until all the cuesets are covered i.e., at least one cell in it is hidden, and finally, we return the set of cells to be hidden.

\begin{algorithm}[t]
    \SetAlgoLined
    \DontPrintSemicolon
    \SetKwFunction{FMain}{InferenceProtect}
    \SetKwProg{Fn}{Function}{:}{}  
    \KwInput{Set of cuesets \textit{cuesets}}
    \KwOutput{A set of cells selected to be hidden \textit{toHide}}
    \Fn{\FMain{$cuesets$}}{
    
        \textit{toHide} = $\{ \}$\\
    
        \While{cuesets $\neq \phi$}
        {
            \textit{cuesetCells} = \textbf{Flatten}(\textit{cuesets})  \\
            \textit{dict}[$\vCell{}{i}, freq_i$] = \textbf{CountFreq}(\textbf{GroupBy}(\textit{cuesetCells}))\\
            \textit{cellMaxFreq} = \textbf{GetMaxFreq}(\textit{dict}[$\vCell{}{i}, freq_i$]) \\
            \textit{toHide}.add($cellMaxFreq$) \\
            \For{\textit{cs} $\in$ \textit{cuesets}}
            {
                \uIf{\textit{cs}.overlaps(\textit{toHide})}
                {
                    \textit{cuesets}.remove(\textit{cs})
                }
            }
        }
        \Return{toHide}
    }
    \caption{Inference Protection (MVC)}
    \label{alg:MVC}
\end{algorithm}

\subsection{Convergence and Complexity Analysis} 
\label{sect:complexity}

Algorithm~\ref{alg:FullAlgo} starts with $s$ number of hidden cells.  At each iteration, we consider that each hidden cell (including cells that are hidden in previous iterations) is expanded to $f$ number of cuesets on average by the Inference Detection algorithm (Algorithm~\ref{alg:cuedetector}). Among the cuesets, the average number of cells that are hidden, such that it satisfies full deniability, is given by $\frac{f}{m}$ where $m$ is the coverage factor determined by minimum vertex cover (MVC). Then, at the end of $i$th iteration, the number of average hidden cells will be $s_i=s (\frac{f}{m})^i$, and the average number of cuesets will be $cs_i = sf(\frac{f}{m})^{i-1}$. As $s_i$ is bounded by the total number of cells in the database, denoted by $N$, the number of iterations ($T$) to converge is bounded by $\log_{f/m}(N/s)$, when $f>m$ (which was verified in our experiments).  

Given $|\Delta|$ which is the number of schema-level dependencies, we can estimate the time complexity with respect to I/O cost.
At $i$th iteration of Algorithm~\ref{alg:FullAlgo}, the I/O cost of (i) the dependency instantiation is $\mathcal{O}( |\Delta|(N+ s_i))$ (where inference detection is implemented using the query-based method given sufficient, i.e. $\Theta(N)$, memory) and (ii) minimum vertex cover (MVC) with an I/O cost of $\mathcal{O}(cs_i)$. 
Hence, the overall estimated I/O cost $\sum_{i=1}^T \mathcal{O}(|\Delta|(N+ s_i)) + \mathcal{O}(cs_i)$ in which is equivalent to $\mathcal{O}(N)$ given $T\leq \log_{f/m}(N/s)$ and thus is linear to the data size. 

\extend{
The cost of the dependency instantiation for the $i$th iteration depends on the I/O cost of the join query which is  $\mathcal{O}(N+ s_i)$ when given sufficient (i.e., $\Theta(N)$) memory. This query is executed $|{\Delta}|$ times. Hence, the cost for the dependency instantiation is $\mathcal{O}(|{\Delta}|(N +s_i))$.

Hence, the total estimated I/O cost for $T$ iterations can be derived as follows given $T\leq \log_{f/m}(N/s)$.

\begin{eqnarray}
&    &\sum_{i=1}^T (|{\Delta}|(N + s_i)) + c_i \nonumber  \\ 
&= & |{\Delta}|(N + s \sum_{i=1}^T (f/m)^i)  + sf \sum_{i=1}^{T} (f/m)^{i-1} \nonumber \\ 
&\leq & |{\Delta}| (N + s (f/m)^{T+1}) + sf (f/m)^T \nonumber \\ 
&=& |{\Delta}|(N + s (N/s) (f/m)) + sf (N/s)  \nonumber \\
&=& N |{\Delta}| (1 + f/m) + fN  \nonumber 
\end{eqnarray}

We complement the complexity analysis with the required sufficient memory storage discussion.
For (i) dependency instantiation, the join query between two tables of size $N$ and $s_i$, we need memory size $\Omega(N+s_i) = \Omega(N)$ since $s_i \leq N$.
In (ii) the algorithm of computing MVC, all cuesets are read into the memory, which requires the memory size $\Omega(c_i) = \Omega(N*m) = \Omega(N)$ for constant $m$.
Thus we need $\Omega(N)$ memory to finish all operations in our system implementation, which is feasible in practice.
We also note that this complexity analysis only holds with $\Theta(N)$ size of memory, in which case the cost of memory operations is much cheaper than the overhead of I/O operations.
Given $\Omega(N^2)$ memory, which can be \emph{impractical}, all the operations can then be finished within memory and the total computational cost is bounded by $\mathcal{O}(N^2)$, according to the following analysis.

If all operations are taken within memory, then the cost of dependency instantiation is bounded by $\mathcal{O}(Ns_i)$ and the computational cost of the MVC algorithm is bounded by $\mathcal{O}(c_i^2)$. Then we derive the following bound similarly.
\begin{eqnarray}
&    &\sum_{i=1}^T (N|{\Delta}| s_i) + c_i^2 \nonumber  \\ 
&= & N|{\Delta}|  s \sum_{i=1}^T (f/m)^i  + s^2f^2 \sum_{i=1}^{T} (f/m)^{2i-2} \nonumber \\ 
&\leq & N|{\Delta}|   s (f/m)^{T+1} + s^2f^2 (f/m)^{2T} \nonumber \\ 
&=& N|{\Delta}|  s (N/s) (f/m) + s^2f^2 (N/s)^2  \nonumber \\
&=& N^2 |{\Delta}| (f/m) + f^2N^2  \nonumber 
\end{eqnarray}
}

\extend{
\subsection{Wrapper for Scaling out Full-Deniability Algorithm}
\label{app:wrapper_algo}
The complexity analysis above shows that, given sufficient memory, full deniability algorithm is linear to the size of the database. On larger databases, the memory requirement becomes unsustainable  due to the substantial number of dependency instantiations and cuesets.
We present a wrapper which partitions the database in order so that our algorithm is able to run with a smaller memory footprint.

The high-level idea of the wrapper algorithm is illustrated in Figure \ref{fig:btm}.
Algorithm~\ref{alg:binning_then_merging} partitions the full database into a number of \emph{bins}, where $b$ is the bin size parameter. It then calls the \emph{Full Algorithm} (presented in Section 5.1 and denoted by \texttt{runMain()} in Algorithm \ref{alg:binning_then_merging}) on each of these bins in order to generate a view per bin that satisfies full deniability.
As the full algorithm is executed on smaller bins, the memory requirement is much lower than the entire database.
Next, it merges $m$ number of these bins, where $m$ is the merge size parameter, and executes \emph{Full Algorithm} 
on the merged bins.
The wrapper iterates over the merged bins until there is only 1 bin left. It then executes \emph{Full Algorithm} on this last bin which is full database and the final view that satisfies full deniability is returned.
As each of the bins has achieved full deniability, the number of relevant dependency instantiations and cuesets will be much lower in the merged bin compared to running the full algorithm on the entire database.
The output view generated by Algorithm~\ref{alg:binning_then_merging} trivially satisfies full-deniability as the \emph{Full Algorithm} is executed on each of the individual bins as well as the full database in the final step. 
}

\begin{figure}[t]
    \centering
    \includegraphics[width=\linewidth]{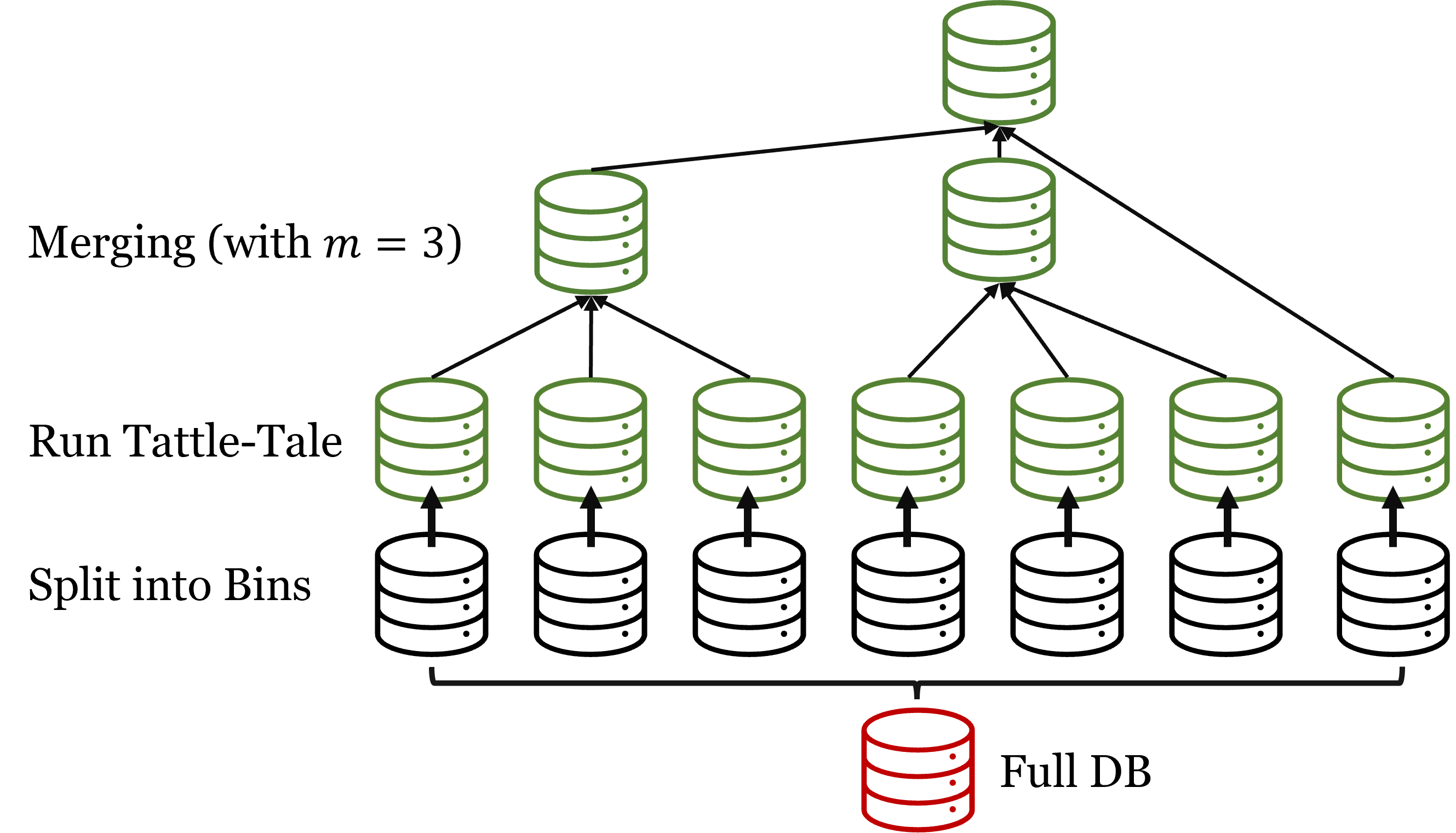}
    \caption{\reviseA{An Illustration of the Binning-then-Merging Algorithm (with binning size $b=7$ and merging size $m=3$).}}
    \label{fig:btm}
\end{figure}

\begin{algorithm}[t]
    \SetAlgoLined
    \DontPrintSemicolon
    \KwInput{User $\vUser{}{}$, Data dependencies $\vDataDepSet{}$, A view of the database $\view$, Bin size $b$, Merge size $m$}
    \KwOutput{A secure view $\view_{S}$}
    \SetKwFunction{FMain}{BinningThenMerging}
    \SetKwProg{Fn}{Function}{:}{}    
    \Fn{\FMain{$\vUser{}{}$, $\vDataDepSet{}$, $\view$, $b$, $m$}}{
        $\view_1, \dots, \view_{k} \gets$ \textbf{Binning}($\view, b$) \Comment{$k \coloneqq \frac{\|\view\|}{b}$, no. of bins.} \\
        binQueue = [$\view_1, \dots, \view_{k}$] \\
        mergeQueue = \{ \} \\
        
        \While{$\|binQueue\| \neq 1$ or  mergeQueue $\neq \emptyset$}{
            $\view_i \gets$ binQueue.\textbf{pop}() \\
            mergeQueue.\textbf{push}(\textbf{runMain}($\vUser{}{}$, $\vDataDepSet{}$, $\view_i$)) \\
            \uIf{$\| mergeQueue \| \geq m$ or $\|binQueue\| = 0$ }{
                $\view_j \gets$ \textbf{Merge}(mergeQueue) \\
                binQueue.\textbf{push}(\textbf{runMain}($\vUser{}{}$, $\vDataDepSet{}$, $\view_j$)) \\
                mergeQueue.\textbf{clear}()\\
            }
        }
        \Return{binQueue.\textbf{pop}()}
        
    }
    \caption{\reviseA{Binning-then-Merging Scaler}}
    \label{alg:binning_then_merging}
\end{algorithm}

\section{\reviseTKDE{Weaker Security Model}}
\label{sec:weaker_security_model}

Achieving full deniability on  a view can lead to hiding a number of non-sensitive cells to prevent leakages. In this section we describe how to relax full deniability to a weaker security model which we call, \emph{k-percentile deniability}, in order to potentially hide fewer cells and thus improve utility.

\subsection{\texorpdfstring{$k$}{k}-Percentile Deniability}

The weaker security notion of $k$-Percentile Deniability is defined as follows.

\begin{definition}[k-percentile Deniability]
\label{Def:k-den}
Given a set of sensitive cells $\vCellSet{}^S$ in a database instance $\vDatabase$ and a set of instantiated dependencies $\vDataDepSet{}$, we say that a querier view $\view$ achieves k-percentile deniability if for all $\vCell{*}{} \in \vCellSet{}^S$,
\begin{eqnarray}
\quad \vCount{\inference(\vCell{*}{}|\view, \vDataDepSet{}{})} ~~ \geq (k \cdot \vCount{\inference(\vCell{*}{}|\view_0, \vDataDepSet{}{})})
\end{eqnarray}
where $\frac{1}{\vCount{\inference(\vCell{}{}|\view_0, \vDataDepSet{}{})}} \leq k \leq 1 $.
\end{definition}

Note that if $k=1$, then k-percentile deniability is the
same as  full deniability, where the set of values inferred by the adversary from view $\view$ is the same as the set from the base view. With $k < 1$, it allows for a bounded amount of leakage. 
We also note that the security models
used in prior works is subsumed by the notion of 
 $k$-percentile deniability as defined above.
 For instance, the model used in \cite{brodsky2000secure} 
 ensures that the querier cannot reconstruct the exact value of the sensitive cell using data dependencies, which can be viewed as a special case of  $k$-percentile deniability with the value of 
 $k=\frac{2}{\vCount{\inference(\vCell{}{}|\view_0, \vDataDepSet{}{})}}$, i.e., the number of values sensitive cell can take is more than 1.

\subsection{Algorithm to Achieve \texorpdfstring{$k$}{k}-Percentile Deniability}
In $k$-percentile deniability or simply \emph{k-den}, we quantify the leakage on the sensitive cell in a given view $\view$ and the set of instantiated data dependencies $\vDataDepSet{}$. 
Unlike in \emph{full deniability}, where any inference is considered as leakage, in \emph{k-den} the decision to hide additional cells is only made if the set of possible values inferred by the querier is larger than the given threshold ($k$). 
In every iteration, the output of the inference detection algorithm could be illustrated as a tree rooted in the cell to be hidden, with cuesets forming its fan-out.
Given this tree structure, we can \emph{prune} some of the cuesets at the first level based on the \emph{k} value.
 
The set of inferred values for a $\vCell{*}{}$ given by the $\inference(\vCell{}{} \mid \view, \vDataDepSet{})$ (defined in Section~\ref{sect:inf_fun}) can be represented as follows

\begin{equation} 
\inference(\vCell{}{} \mid \view, \vDataDepSet{}) = 
    \begin{cases}
        minus\_set, & \vDomain{\vCell{}{}} \textit{ is discrete} \\
        \vDomain{\vCell{}{}} - [low, high] &  \vDomain{\vCell{}{}} \textit{ is continuous}
    \end{cases}
\label{eq:cellView}
\end{equation}

When attribute for the cell $\vCell{}{}$ is discrete, the operator $\vOperator$ in $\vPredicate{}{}(\vCell{}{})$ is limited to either $\neq$ or $=$. Therefore, we represent the inferred set of values by a set, called minus\_set, containing the values that cannot be assigned to the cell in the view $\view$. 
On the other hand, when the attribute for the cell $\vCell{}{}$ is continuous, the operator $\vOperator$ could be one of the following: \{$>, \geq, <, \leq\}$ and therefore we use a range, denoted by $(low, high)$ to represent the set of values.
The details of the algorithm to compute the set of inferred values for a cell can be found in supplementary materials.

This function computes the exact leakage on a sensitive cell with respect to various instantiated dependencies.
We utilize this to implement $k$-den where for each sensitive cell after we detect the cuesets (line 3 in Algorithm~\ref{alg:FullAlgo}), we compare the leakage on a sensitive cell due to the instantiated dependencies (associated with the cuesets).
The $k$ parameter, specified as a fraction of the maximum domain size of a sensitive cell, provides a bound on the acceptable leakage on a sensitive cell.
If the sensitive cell $\vCell{*}{}$ has a discrete domain and $\vCount{\vCell{*}{}.minus\_set}$ $\leq  \vCount{\vDomain{\vCell{*}{}}}  \times (1 - k)$  evaluates to \true, we do not hide any cells from any of its cuesets.
On the other hand, for a sensitive cell $\vCell{*}{}$ with a continuous domain we check if $high - low$ $\geq  \vCount{\vDomain{\vCell{*}{}}}  \times (k)$  evaluates to \true.
The difference between $low$ and $high$ gives the actual domain size after taking into account leakages due to dependencies.

The algorithm to implement $k$-den only needs to hide non-sensitive cells from the cuesets until the leakage is below the \emph{k} value for a sensitive cell.
Algorithm \ref{alg:kprune} utilizes the previously described tree representation of the output of the inference detection with cuesets as fan-out from the hidden cells. At the first level of this tree (i.e., for each sensitive cell), we calculate the set of inferred values (using Equation~\ref{eq:cellView}), based on the cuesets in its fan-out (lines 1-12). 
Since the pruning only happens at the first level of the tree, the algorithm implements the full-deniability algorithm at higher levels (i.e., $>$ 1, lines 8-11).
We sort the cuesets in descending order of their set of inferred to values to the parent sensitive cell and select cells to hide from them until the $k$-deniability is met (lines 12-21).
Note that this k-pruning step is only executed in the first fan-out level -- this ensures that the final solution generated by $k$-deniability is an improvement over the full deniability model.

\begin{theorem}
The algorithm to achieve k-percentile deniability (i.e. algorithm \ref{alg:kprune}) always performs as well as (or better than) the algorithm to achieve full-deniability (i.e. algorithm \ref{alg:FullAlgo}). 
\label{theorem:k-den}
\end{theorem}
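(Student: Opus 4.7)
The plan is to prove set inclusion: the set of cells hidden by the $k$-percentile deniability algorithm is always a subset of the set of cells hidden by the full deniability algorithm (Algorithm~\ref{alg:FullAlgo}), from which the ``as well as or better than'' utility claim follows immediately, since hiding fewer non-sensitive cells strictly improves utility under our measure.

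First I would fix notation: let $\vCellSet{}^{H}_{\text{full}}$ and $\vCellSet{}^{H}_{k}$ denote the hidden-cell sets produced by the full-deniability algorithm and the $k$-percentile deniability algorithm respectively, starting from the same sensitive cell set $\vCellSet{}^{S}$, the same view $\view$, and the same instantiated dependency set $\vDataDepSet{}$. Both algorithms begin by initializing $\vCellSet{}^{H} = \vCellSet{}^{S}$, so the base case of the induction (iteration $0$) is trivially an equality.

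Next I would carry out an induction on the fan-out level $i$ of the iterative hiding loop. The inductive hypothesis is that after $i$ iterations, $\vCellSet{}^{H,(i)}_{k} \subseteq \vCellSet{}^{H,(i)}_{\text{full}}$. For the inductive step I would argue two things. First, the set of cuesets $C^{(i)}_{k}$ considered by the $k$-algorithm at iteration $i$ is contained in the cuesets $C^{(i)}_{\text{full}}$ considered by the full algorithm: this holds because (a) cuesets are derived purely from the already-hidden cells via Inference Detection, and by the inductive hypothesis the $k$-algorithm has no more hidden cells to expand, and (b) the $k$-pruning step explicitly removes cuesets from consideration when the measured leakage on the associated sensitive cell is already within the threshold, as described in the text between Algorithm~\ref{alg:computeState} and the theorem statement; pruning can only shrink the cueset list. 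Second, the Inference Protection procedure (MVC) is monotone with respect to the cueset list in the sense that covering a subset of cuesets requires at most as many newly hidden cells as covering the superset, since any cover of the superset is already a cover of the subset. Combined, these two facts yield $\vCellSet{}^{H,(i+1)}_{k} \subseteq \vCellSet{}^{H,(i+1)}_{\text{full}}$.

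Finally I would handle termination: since the $k$-algorithm has an earlier stopping condition (it stops expanding a sensitive cell as soon as the inferred set satisfies the $k$-threshold, as guaranteed by the ``first fan-out level'' pruning described just before the theorem), it terminates no later than the full-deniability algorithm with a hidden-cell set contained in $\vCellSet{}^{H}_{\text{full}}$. Because the $k=1$ case reduces the threshold condition $\vert \inference(\vCell{*}{}\mid\view,\vDataDepSet{}{})\vert \geq k\cdot\vert \inference(\vCell{*}{}\mid\view_0,\vDataDepSet{}{})\vert$ to exact equality, in that regime no pruning ever occurs and $\vCellSet{}^{H}_{k} = \vCellSet{}^{H}_{\text{full}}$, matching the ``as well as'' part of the claim.

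The main obstacle I anticipate is the monotonicity claim for Inference Protection: the MVC heuristic used in Algorithm~\ref{alg:MVC} is greedy, and greedy covers are not generally monotone under cueset removal if one insists on the exact greedy output. I would therefore argue monotonicity at the level of \emph{feasibility} rather than exact output, i.e., the full-deniability solution restricted to the surviving cuesets remains a valid cover, so the $k$-algorithm's cover size is bounded by the full algorithm's cover size for the same iteration when using the same greedy tie-breaking; a brief remark would note that the claim is about set inclusion of final hidden cells across the whole iterative procedure rather than iteration-by-iteration equality, which is what the overall utility comparison actually needs.
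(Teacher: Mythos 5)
You have correctly located the crux of the matter --- the greedy minimum vertex cover in Algorithm~\ref{alg:MVC} is not monotone under removal of cuesets --- but your workaround does not close the gap, and as a result your induction does not go through. The difficulty is twofold. First, arguing ``feasibility'' (that the full-deniability cover restricted to the surviving cuesets is still a valid cover) only bounds the \emph{size} of an optimal cover of the pruned cueset list; it does not force the greedy heuristic, run on the pruned list, to output a \emph{subset} of the cells it would have output on the unpruned list. The greedy procedure keys on frequency counts, and deleting cuesets changes those counts, so it may select entirely different cells. Second, and more damaging, your induction needs the hidden sets to be nested level by level, because the cuesets generated at level $i+1$ are a function of exactly which cells were hidden at level $i$. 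Once the two runs hide different (even if fewer) cells at some level, the fan-out trees diverge and there is no longer any containment between the cueset lists at subsequent levels; the pruned run could in principle end up hiding cells the full run never touches, defeating the set inclusion you are trying to prove.

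The paper sidesteps this entirely by designing Algorithm~\ref{alg:kprune} so that the comparison becomes trivial: KPrune does \emph{not} rerun Inference Protection on a pruned cueset list. It simulates the full-deniability fan-out tree exactly as Algorithm~\ref{alg:FullAlgo} would produce it (so the greedy MVC decisions and the downstream cuesets are identical in both runs), merely \emph{marking} at the first fan-out level which cuesets are prunable, and only at the end retains from the full algorithm's \textit{toHide} those cells that lie in some non-pruned cueset. The final hidden set is therefore a subset of the full algorithm's hidden set by construction, and no monotonicity claim about greedy covers is ever needed. To repair your proof you would have to either restate the induction for the algorithm as actually written --- in which case it collapses to the one-line observation above --- or prove the much stronger (and, as you yourself suspect, false in general) claim that greedy MVC followed by recursive expansion is monotone under cueset deletion.
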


\begin{proof}
    We note that the KPrune algorithm implicitly simulates the full-deniability algorithm.
    It does not immediately prune the cuesets or the cells to hide from the fan-out tree generated by the full-deniability algorithm (since this can change the result of running the greedy minimum vertex cover).
    Instead, we collect those cuesets that can be pruned but actually prune out them after simulating the overall full-deniability algorithm.
    Therefore, the KPrune algorithm won't hide more cells than the algorithm to achieve full-deniability.
\end{proof}

\begin{algorithm}[t]
    \SetAlgoLined
    \DontPrintSemicolon
    \SetKwFunction{FMain}{KPrune}
    \SetKwFunction{FSub}{isDeniable}
    \SetKwProg{Fn}{Function}{:}{}  
    \KwInput{Last level hidden cells $trueHide$, Current level hidden cells to prune $toHide$, Current level $level$, Leakage parameter $k$}
    \KwOutput{An updated minimum set of hidden cells in this level that satisfy k-deniability \textit{trueHide}}
    \Fn{\FMain{$trueHide$, $toHide$, $level$, $k$}}{
            $bestCuesets = \{ \}$ \Comment{Cuesets cannot be pruned.} \\
    
            \For{$cell \in trueHide$}{
                cellCuesets = cell.getCuesets() \\
                $cell$.leakage = \textbf{InferredValues}(cell, cellCuesets) \\
                \uIf{\textbf{isDeniable}(cell, $k$)}{
                    \textbf{continue}
                }    
                \For{$cs \in$ cellCuesets}{
                    \uIf(\Comment{Simulate full-den.}){$level > 1$}{  
                            $bestCuesets$.add($cs$) \\
                    }
                }
                
                \uIf(\Comment{KPrune condition.} ){$level = 1$}{
                    
                    cellCuesets.\textbf{Sort}(leakageToParent, `desc') \\
                    \While{\textbf{not} \textbf{isDeniable}(cell, $k$)}{
                        $lcs$ = cellCuesets.head \Comment{Max leakage.} \\
                        $bestCuesets$.add($lcs$) \\
                        cellCuesets.remove($lcs$) \\
                            \Comment{Recalculate the leakage of the cell.} \\
                        $cell$.leakage = \textbf{InferredValues}(cell, cellCuesets) \\
                    }               
            }
            
            }
            \For{$bestCS \in $ $bestCuesets$}{
                \Comment{Update $trueHide$ based on the pruning.} \\
                $trueHide$ = $trueHide$ $\cup$ ($toHide$ $\cap$ $bestCS$.cells)
            }
            
            \Return{trueHide}
        }
    \Fn{\FSub{$cell$, $k$}}{
        
        \uIf{$\vCount{\vDomain{\vCell{*}{}}}  - \vCount{cell.leakage}  \geq  k \cdot \vCount{\vDomain{\vCell{*}{}}}$  }{
            \Return{True} \Comment{Based on k-deniability.} \\
        }
        \Return{False}
    }
    \caption{KPrune: Achieving \emph{k}-Deniability}
    \label{alg:kprune}
\end{algorithm}

In Section~\ref{sect:experiments}, we show through experiments that the algorithm that achieves $k$-percentile deniability only  marginally improves on full deniability even with low values of $k$ (i.e., complete leakage).
Therefore this approach is not useful in improving the utility in realistic settings. It is possible that in more complex domains with large number of sensitive cells, $k$-percentile deniability is more effective and this needs to be studied further.

\section{\reviseTKDE{Relaxing Security Assumptions}}
\label{sec:assumption_relaxation}

\reviseTKDE{
In this section, we explore relaxing an important assumption stated in Section~\ref{sect:assumptions} about the adversary, that the adversary cannot apriori determine whether a cell is sensitive or not.
There may be scenarios where the adversary can accurately guess the relative sensitivity of the attributes in a database schema. 
For example, in an employee table \textit{Salary} is more likely to be sensitive than \textit{Zip Code} and if both are hidden in a tuple the adversary can guess that one was due to policy and the other due to the algorithm.
This situation can be handled by our algorithm with a slight modification under the assumption that any tuple in the database instance could contain a sensitive cell. 
This means that while the adversary knows that salary is more likely to be sensitive, they do not know salaries of exactly which employees are sensitive.
}

\begin{algorithm}[t]
    \SetAlgoLined
    \DontPrintSemicolon
    \KwInput{Map$<$sensitive cell $\vCell{*}{}$: Set of cuesets $cuesets>$, A view of the database $\view$}
    \KwOutput{A set of tuples to hide $toHide$}
    \SetKwFunction{FMain}{InferenceProtection$*$}
    \SetKwProg{Fn}{Function}{:}{}    
    \Fn{\FMain{Map}}{
        \textit{toHide} = $\{ \}$ \Comment{Return set initialization.} \\
    
        \While{Map.cuesets $\neq \phi$}
        {
            \textit{cuesetCells} = \textbf{Flatten}(\textit{Map.cuesets})  \\
            \textit{dict}[$\vCell{}{i}, freq_i$] = \textbf{CountFreq}(\textbf{GroupBy}(\textit{cuesetCells}))\\
            \textit{cellMaxFreq} = \textbf{GetMaxFreq}(\textit{dict}[$\vCell{}{i}, freq_i$]) \\
            \textit{toHide}.add($cellMaxFreq$) \Comment{Greedy heuristic.} \\
            \For{\textit{cs} $\in$ \textit{Map.cuesets}}
            {
                \uIf{\textit{cs}.overlaps(\textit{toHide})}
                {
                    \textit{Map.cuesets}.remove(\textit{cs})
                }
            }
        }
        \reviseA{
        \textit{additionalHiddenCells} = $\{ \}$ \Comment{Hiding additional cells.} \\
        \For{$c_h \in$ toHide}{
        \textit{tid} = $c_h$.getTupleID() \Comment{Hidden cell's tuple ID.} \\
        \For{$c^* \in$ Map.sensitiveCells}{
            \textit{sensitiveAttr} = $c^*$.attributeID \Comment{Sensitive cell's attribute.} \\
            \textit{additionalHiddenCells}.add($\view$.get(\textit{tid}, \textit{sensitiveAttr}))  \\
        }
        }
        \textit{toHide} = \textit{toHide} $\cup$ \textit{additionalHiddenCells} \\
        }
        \Return{toHide}
    }
    \caption{\reviseA{Modified Inference Protection}}
    \label{alg:tuple_hiding}
\end{algorithm}

The key idea behind this modified algorithm is to hide the sensitive cell in a tuple where only the non-sensitive cell is hidden.
From the previous example, we would also hide the \textit{Salary} attribute of a tuple (even when it is not sensitive) if our algorithm chooses to hide \textit{Zip Code}.
Therefore the adversary cannot be certain whether all the hidden cells under \textit{Salary} attribute were done so by policy or the algorithm.
We slightly modify the original Inference Protection algorithm (Algorithm \ref{alg:MVC}) and propose Algorithm \ref{alg:tuple_hiding} in order to handle this relaxed assumption.

First, the original Inference Detection algorithm (Algorithm~\ref{alg:cuedetector}) identifies the cuesets based on dependency instantiations as an input to the Inference Protection algorithm.
Second,  the original Inference Protection algorithm will select at least 1 cell from each cueset to hide.
Third, the steps in the modified Inference Protection Algorithm (lines 13-21) go through the set of hidden cells and for each of them check if they belong to a non-sensitive attribute.
If it does, then add the cells under the sensitive attribute from the corresponding tuple to the set of hidden cells.

\eat{
\begin{theorem}
Given a view $\view^*_{M}$ which is generated by $FDAM$ and has achieved full deniability such that $\vCellSet{}^H$ is the set of hidden cells and $\vCellSet{}^S$ in this view. Consider that views $\view_1, \ldots, \view_i, \ldots, \view_N$ generated by replacing the \nullvalue for $\vCell{}{} \in \vCellSet{guess}^* \subseteq \vCellSet{}^H$ with a value from $\vDomain{\vCell{}{}}$ such that $\view_i$ does not violate any dependencies ($\vCellSet{guess}^*$ is adversary's prior knowledge about the sensitive cells). 
The adversary does not learn any additional information about sensitive cells $\vCellSet{}^S \in \view^*$ after executing $FDAM$ on all of the above views.

\begin{eqnarray}
\inference(\vCell{*}{}|\{\view_1, \ldots, \view_i, \ldots, \view_N \}, \vDataDepSet{}{})= \inference(\vCell{*}{}|\view_0, \vDataDepSet{}{}),
\forall \vCell{*}{} \in \vCellSet{}^S 
\end{eqnarray}

\end{theorem}
}

We note that the assumption of equal likelihood of tuple containing sensitive cell can be further relaxed by adopting a probabilistic approach (motivated by OSDP \cite{kotsogiannis2020one}) in which certain non-sensitive cells are randomly hidden to prevent adversary from inferring if it was part of a sensitive cell's cueset.
However, such an approach will be a non-trivial extension and is an interesting future direction to explore.
In supplementary materials (App. \ref{app:adv_data_owner}), we also discuss how to relax the assumption that adversaries and data owners cannot overlap.

\section{Experimental Evaluation}
\label{sect:experiments}

In this section, we present the experimental evaluation results for our proposed approach to implementing full-deniability. 
First, we explain our experimental setup including details about the datasets, dependencies, baselines used for comparison, evaluation metrics, and system setup.
Second, we present the experimental results for each of the following evaluation goals: 1) comparing our approach against baselines in terms of utility, performance, and the number of cuesets generated; 2) evaluating the impact of \emph{dependency connectivity}; 3) testing the scalability of our system; 4) validating $k$-percentile deniability presented \reviseTKDE{in Section \ref{sec:weaker_security_model} and the modified inference protection algorithm in Section \ref{sec:assumption_relaxation}; 5) evaluating the query-driven utility in a case study when query workloads are presented; and 6)} testing effectiveness against real-world adversaries.

\subsection{Evaluation Setup}
\label{subsec:exp_setup}

\noindent \textbf{Datasets}.
We perform our experiments on 2 different datasets.
Some statistics of the datasets are summarized in the supplementary materials.
The first one is \emph{Tax dataset}~ \cite{bohannon2007conditional}, a synthetic dataset with 10K tuples and 14 attributes, where 10 of them are discrete domain attributes and the rest are continuous domain attributes.
Every tuple from the tax table specifies the tax information of an individual with information such as name, state of residence, zip, salary earned, tax rate, tax exemptions etc.
The second dataset is the \emph{Hospital dataset} \cite{xuchuDC} which is a 100K dataset where all of the 15 attributes are discrete domain attributes.
We select a subset of this dataset (which includes the first 10K tuples of the dataset), called \emph{Hospital10K}, for the experiments included in the paper. 
\extend{
We then conduct a scalability experiment that makes use of the binning-then-merging wrapper on the original Hospital dataset, i.e. 100, to show the scalability of our system.
It is also notable that both datasets have a large domain size, as shown in Table \ref{tab:datasets}.
The active domain size in the table refers to the domain of the attributes participating in the data dependencies that we consider in the experiments.}

\noindent \textbf{Data Dependencies}. 
For both datasets, we identify a large number of denial constraints by using a data profiling tool, Metanome \cite{metanomeDataProfiling}.
Many of the output DCs identified by Metanome were soft constraints which are only valid for a small subset of the database instance.
After manually analyzing and pruning these soft DCs, we selected 10 and 14 hard DCs for the Tax dataset and the Hospital dataset respectively.
We also added an FC based on the continuous domain attribute named \textit{``tax"} which is calculated as a function $``\text{tax} = fn(\text{salary}, \text{rate})"$. 
Since the Hospital dataset does not have continuous domain attributes, we cannot create a function-based constraint on it and just use the 14 DCs for evaluation.
If any of them were soft DCs, we updated/deleted the violating tuples to turn them into hard DCs. 
The data dependencies used for experiments can be found in supplementary materials.

\noindent \textbf{Policies} control sensitivity of cells. 
The number of sensitive cells is equivalent to the number of policies and it helps us in precisely controlling the number of sensitive cells in experiments using policies.
\reviseTKDE{We randomly sample each policy by first sampling a tuple ID among all the tuples and an attribute from a selected group of attributes without replacement, until obtaining a certain number of policies determined by a control parameter.}
\reviseTKDE{For each experiment (with the same set of control parameters), we generate 4 different access control views with different policies to represent 4 users. We execute our algorithm independently over these 4 views and report the mean and standard deviation.}

\noindent \textbf{Metrics.}
We compare our approach against the baselines using the following metrics: 1) \textit{Utility}: measures the number of total cells hidden; \reviseTKDE{2) \textit{Workload-driven utility, i.e., visibility percentage}: measures the percentage of visible cells in queries from a workload;}
3) \textit{Performance}: measures the run time in seconds. 
Besides, we study the fan-out of the number of cuesets, the attack precision of real-world adversaries, and the distribution of the hidden cells in access control and inference control views.

\eat{\color{blue}
The second one ($U_2$) is a weighted utility metric. We assign weights to each attribute based on a pre-set query workload.
The metric $U_2$ will be higher if less cells according to the highly weighted attributes are hidden.
}

\eat{\begin{itemize}
    \item Utility metrics: (i) how to measure utility: \# of additional cells being hidden; (ii) add weights to the attributes based on some queries? (will this change our algorithm in a way we define weight and how to factor this in the MVC)
    
    \item Performance metrics: (i) time consumption; (ii) the fan-out of cuesets/hidden cells \todo{define what is the fanout if it will be used as a metric.} \todo{Be consistent about the fanout, iteration, and level.}

    \item Privacy metrics: describe imputation algorithm used, and how do you measure the leakage/the number of correctly cells. 
\end{itemize}}

\noindent \textbf{System Setup}.
We implemented the system in Java 15 and build the system dependencies using Apache Maven.
We ran the experiments on a machine with the following configuration: Intel(R) Xeon(R) CPU E5-4640 2.799 GHz, CentOS 7.6, with RAM size 64GB. We chose the underlying database management system MySQL 8.0.3 with InnoDB.
For each testcase, we perform 4 runs and report the mean and standard deviation.

\extend{
\noindent
\textbf{Reproducibility.}
We open-source our codebase on GitHub\footnote{\href{https://github.com/zshufan/Tattle-Tale}{https://github.com/zshufan/Tattle-Tale}}. This codebase includes the implementation of our system as well as scripts to set up databases,  generate testcases, run end-to-end experiments, and plot the empirical results.
For experiment reproducibility instructions please follow the guidelines in the README file in the GitHub repository.
}

\noindent \textbf{Baselines}.
In the following experiments, we test our approach which implements Algorithm~\ref{alg:FullAlgo}, denoted by \ourapproach against baselines.
\revise{
To the best of our knowledge, there exist no other systems which solve the same problem and therefore we have developed 2 different baseline strategies for comparison. 
In each baseline method, we replace one of the key modules in our system, determining cuesets and selecting cells to hide from the cueset,  with a na\"ive strategy but without compromising the full deniability of the generated querier view.
}

\squishlist

    \item \textit{Baseline 1: Random selection strategy for hiding }(\baselineOne):
    which replaces the minimum vertex cover approach with an inference protection strategy that randomly selects cells from cuesets to hide.

    \item \textit{Baseline 2: Oblivious cueset detection strategy }(\baselineTwo):
    which disregards Tattle-Tale Condition and uses an inference detection strategy that creates as many dependency instantiations as the number of tuples in the database for each dependency and generates cuesets for all of them.
\squishend

\eat{    .... 
    (1. full-den without MVC, 2. by using partial property of the cueset)
    (1. just the sensitive cells, 2. randomly hide similar number of cells according to the results from full-den, 3. by using partial property of the cueset)}

\begin{figure}[t]
\centering
\includegraphics[width=.48\linewidth]{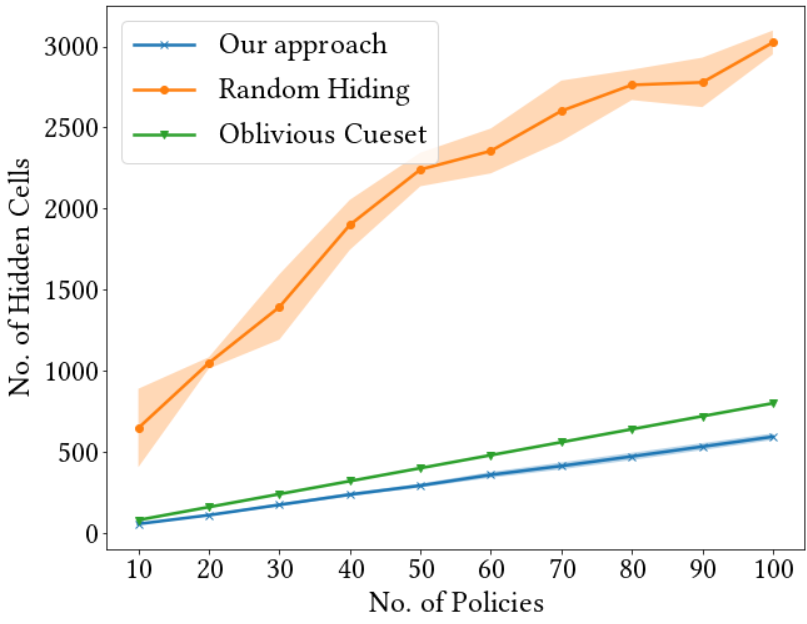}
\includegraphics[width=.48\linewidth]{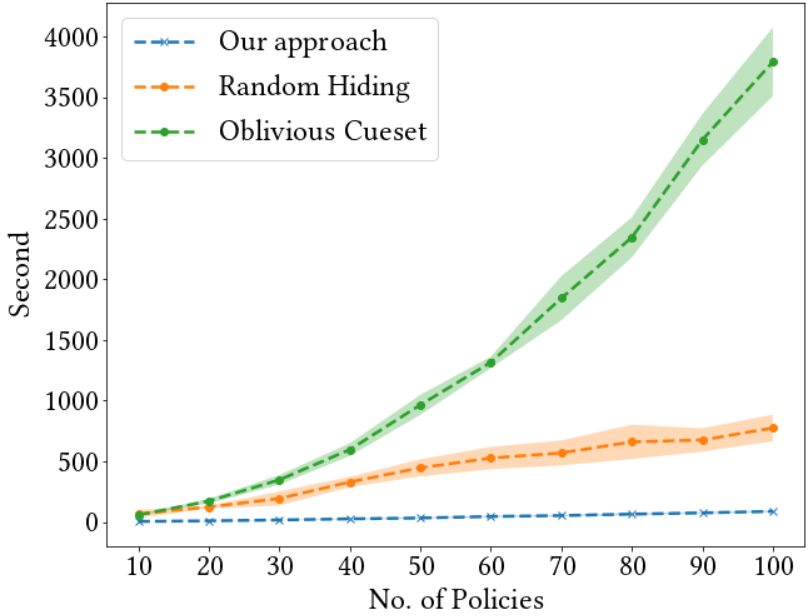}
\caption{(a) Data utility (b) Performance.  Experiments done on Tax dataset for \ourapproach, \baselineOne, and \baselineTwo.}
\label{fig:utility_efficiency_tax}
\end{figure}

\begin{figure}[t]
\centering
\includegraphics[width=.48\linewidth]{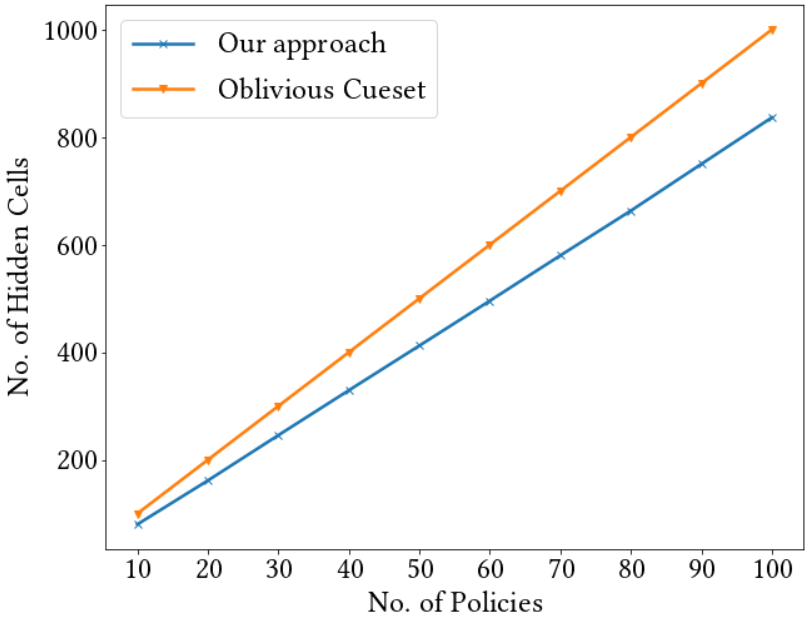}
\includegraphics[width=.48\linewidth]{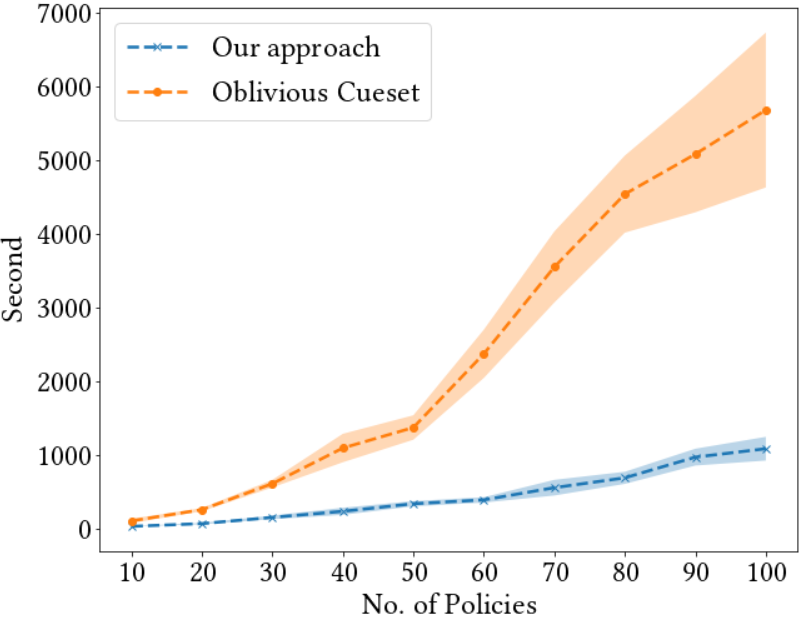}
\caption{(a) Data utility (b) Performance.  Experiments on Hospital10K dataset for \ourapproach, and \baselineTwo.}
\label{fig:e2e_hospital}
\end{figure}

\subsection{Experiment 1: Baseline Comparison}
\label{sect:baseline_comp}

We compare our approach against the aforementioned baselines and measure the utility as well as performance (see Figure \ref{fig:utility_efficiency_tax}(a)).
We increase the number of policies from 10 to 100 (step=10) where each sensitive cell participates in at least 5 dependencies.
This ensures that there are sufficient inference channels through which information about sensitive cells could be leaked.
The number of cells hidden by \ourapproach increases linearly  w.r.t the increase in number of policies/sensitive cells compared to \baselineOne (5.3$\times$\ourapproach) and \baselineTwo(1.4$\times$\ourapproach).
\baselineOne performs the worst because it randomly hides cells without checking the membership count of a cell in cuesets (as with using \textit{MVC} in Algorithm~\ref{alg:MVC}).
The performance of \baselineTwo is better because it uses the same inference protection strategy as \ourapproach. However, it generates a larger number of cuesets as it doesn't check the Tattle-Tale Condition for the dependency instantiations (like in Algorithm~\ref{alg:cuedetector})  and therefore has to hide more cells to ensure full deniability.

We also compare the performance (run time in seconds) against number of policies of these 3 approaches (see Figure\ref{fig:utility_efficiency_tax}(b)).
The run time of \ourapproach is almost linear w.r.t the increase of the number of policies.
On the other hand, \baselineTwo is exponential w.r.t number of policies, because it generates $\mathcal{O}(\vCount{\vSchemaDepSet{}} \times n^2)$ cuesets where $n$ denotes the number of tuples in $\vDatabase$ and it is expensive to run inference detection on such a large number of cuesets.
In \baselineOne, we restrict the execution to the fifth invocation of the inference detection algorithm (Algorithm \ref{alg:cuedetector}) i.e., if the execution doesn't complete by then, we force stop the execution. 
In order to study this further, we analyzed the total number of cuesets generated by \baselineOne vs.\ \ourapproach (see Figure \ref{fig:TE_tax}) in each invocation of Inference Detection. 
Due to the usage of MVC optimization in Inference Protection, \ourapproach terminates after a few rounds where as with \baselineOne the number of cuesets generated in each invocation keeps increasing.
We also note that \ourapproach is more stable in different test cases and has a lower standard deviation on number of cuesets and hidden cells compared to \baselineOne.

We show the supplementary evaluation results on the Hospital10K dataset.
Figure \ref{fig:e2e_hospital} presents the end-to-end comparison between \ourapproach and \baselineTwo, and supports our claim. 
\reviseTKDE{
In supplementary materials, we show experimental results with more sensitive cells (i.e., access control policies).
Interestingly if the access control view is highly sensitive (e.g., 10\% cells of the view are marked as NULL) and the sensitive cells are distributed over different columns, the sensitive cells can cancel out the channels leading to inference to each other. Therefore, in this case, our experimental results show that few additional cells are required to hide to achieve inference control. 
}

\begin{figure}[t]
\centering
\includegraphics[width=.5\linewidth]{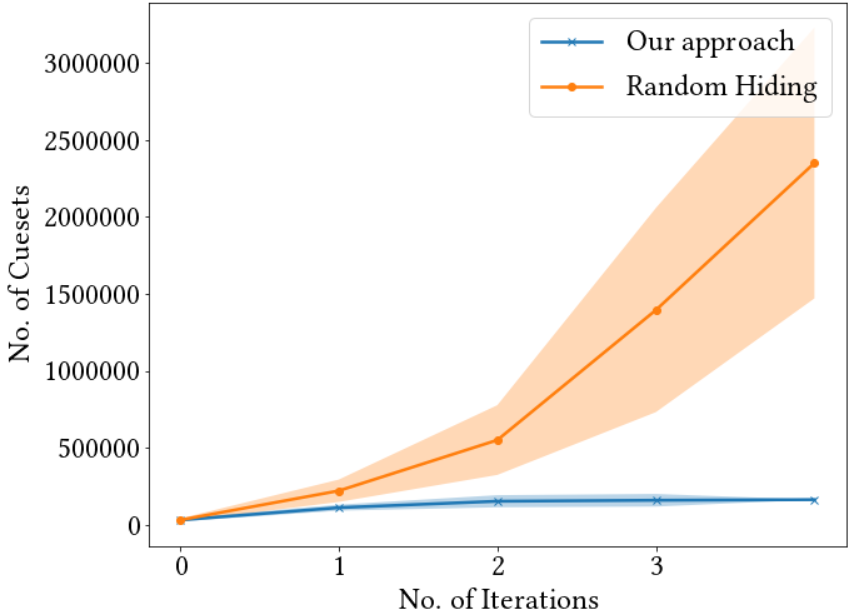}
\includegraphics[width=.47\linewidth]{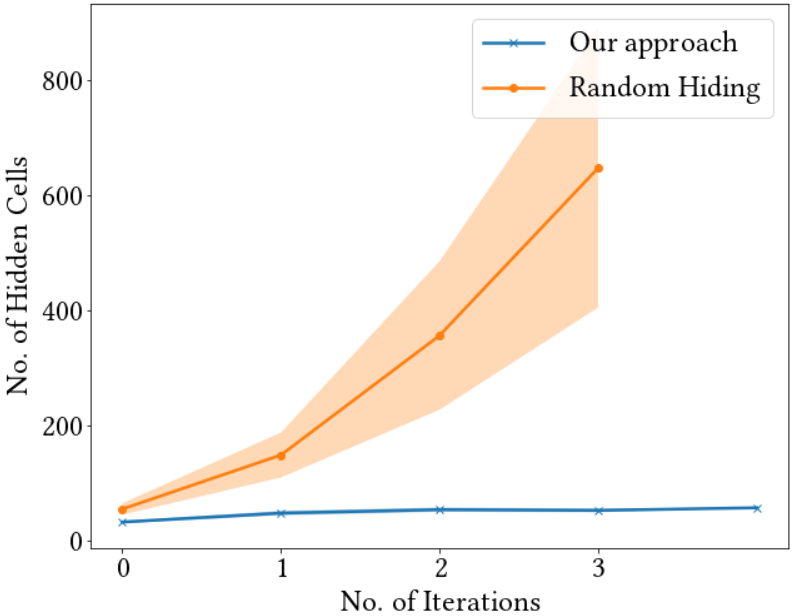}
\caption{(a) Number of cuesets generated in each invocation of Inference Detection (b) Number of cells hidden in each invocation of Inference Protection. Experiments run with 10 sensitive cells on Tax dataset.}
\label{fig:TE_tax}
\end{figure}

\subsection{Experiment 2: Dependency Connectivity}
\label{sect:dep_connectivity}

In the next set of experiments, we study the impact of dependency connectivity on the utility as well as performance.
The relationship between dependencies and attributes can be represented as a \emph{hypergraph} wherein the attributes are nodes and they are connected via data dependencies. 
We define the \emph{dependency connectivity} of a node, i.e., an attribute, in this graph based on the summation of the degree (number of edges incident on the node) as well as the degrees of all the nodes in its closure.
Using dependency connectivity, we categorize attributes on \textit{Tax} dataset into three groups: low, medium, and high where attributes in high, low, and medium groups have the highest, lowest, and average dependency connectivity respectively.
In Tax dataset, the high group contains 3 attributes (e.g. State), while the medium group has 3 attributes (e.g. Zip) and the low group includes 4 attributes (e.g. City).

The results (see Figure \ref{fig:sensitivity_group_tax}) show that when sensitive cells are selected from attributes with higher dependency connectivity, \ourapproach hides more cells than when selecting sensitive cells with lower dependency connectivity.
\extend{
The results are verified on both the Tax dataset and Hospital10K dataset (as shown in Figure \ref{fig:sensitivity_group_tax}(a) and Figure \ref{fig:sensitivity_group_tax}(b)).
This is because higher dependency connectivity leads to a larger number of dependency instantiations and therefore a larger number of cuesets from each of which at least one cell should be hidden.
Figure \ref{fig:sensitivity_group_hospital} demonstrates the evaluation among the dependency connectivity groups, on both datasets.
}

\begin{figure}[t]
\centering
\includegraphics[width=.47\linewidth]{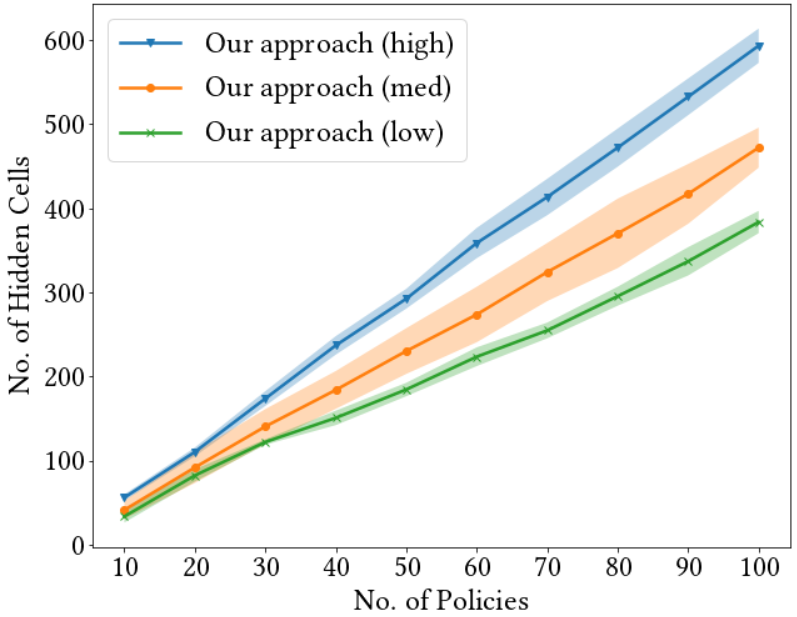}
\includegraphics[width=.47\linewidth]{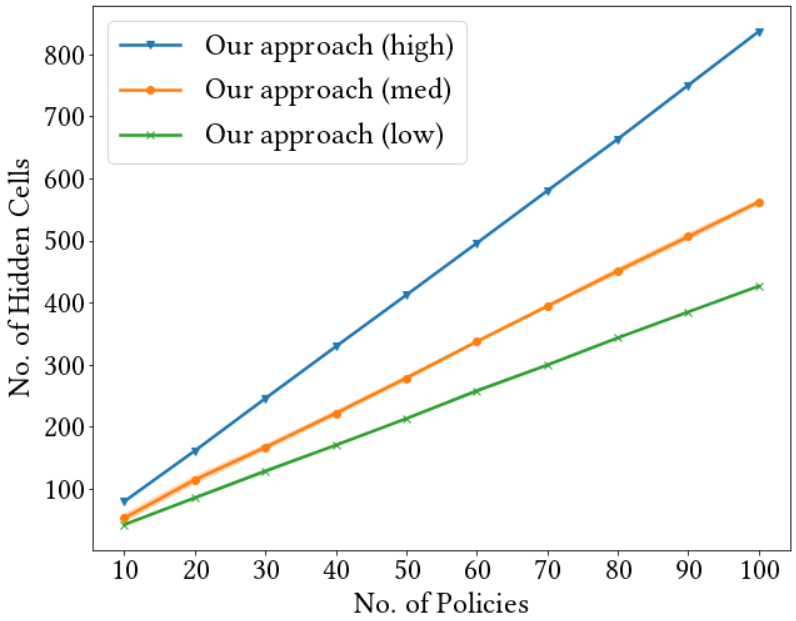}
\caption{Data utility experiments run with sensitive cells selected from (low, medium, high) dependency connectivity attributes in (a) Tax dataset (b) Hospital10K dataset.}
\label{fig:sensitivity_group_tax}
\end{figure}

\begin{figure}[t]
\centering
\includegraphics[width=.45\linewidth]{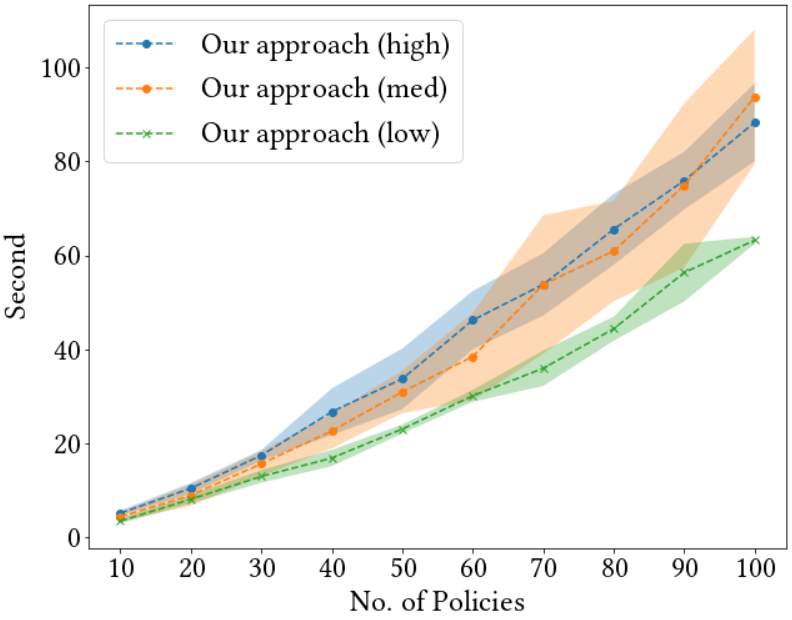}
\includegraphics[width=.48\linewidth]{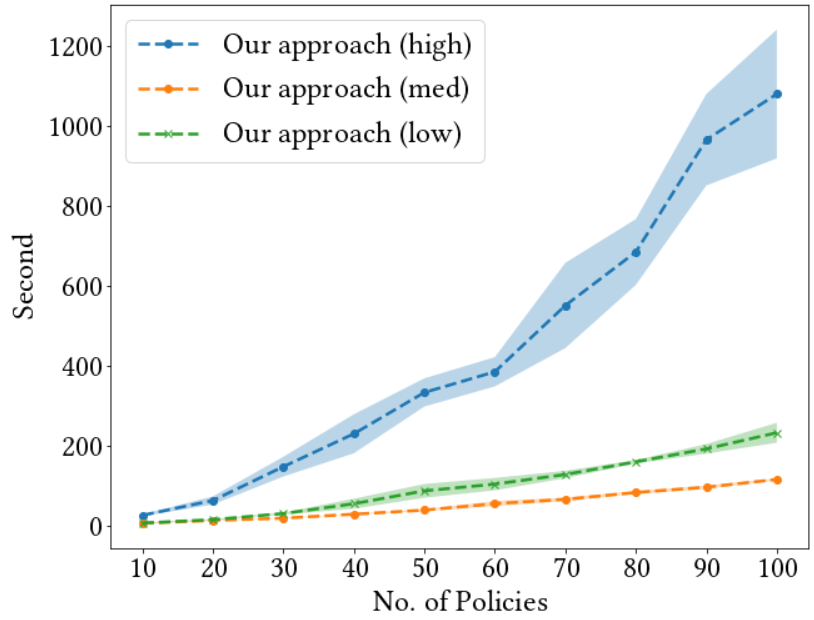}
\caption{Performance experiments run with sensitive cells selected from (low, medium, high) dependency connectivity attributes in (a) Tax dataset (b) Hospital10K dataset.}
\label{fig:sensitivity_group_hospital}
\end{figure}

\extend{
\subsection{Experiment 3: Scalability Experiments}
\label{subsec:scalability_exp}

The results of the scalability experiments are shown in Figure \ref{fig:scalablity}.
The $y$ axis records the time consumption while the $x$ axis denotes the size of the database (spanning from 10K tuples to 100K tuples).
We consider two different settings for selecting sensitive cells, 1) randomly sample a fixed number of sensitive cells regardless of the database size, and 2) incrementally sample a fixed ratio of sensitive cells w.r.t the database size.
The results of these two settings are presented in Figure \ref{fig:scalablity}(a) and Figure \ref{fig:scalablity}(b), resp.
In both cases, we set the bin size as 10K tuples and the merging size as 5.
In the first setting, the number of sensitive cells is set as 30 whereas, in the second setting, the ratio of sensitive cells to the total number of cells is 30 cells per 10K tuples.
We note that the starting point of the plot ($x =$ 10K tuples) corresponds to the experiments presented in Section \ref{sect:experiments} i.e., running our main algorithm on the dataset of size 10K (as there is only 1 bin).
As shown in Figure \ref{fig:scalablity}, the time consumption scales near-linearly (depending on the data itself) to the size of the datasets.
}

\begin{figure}[t]
\includegraphics[width=.5\linewidth]{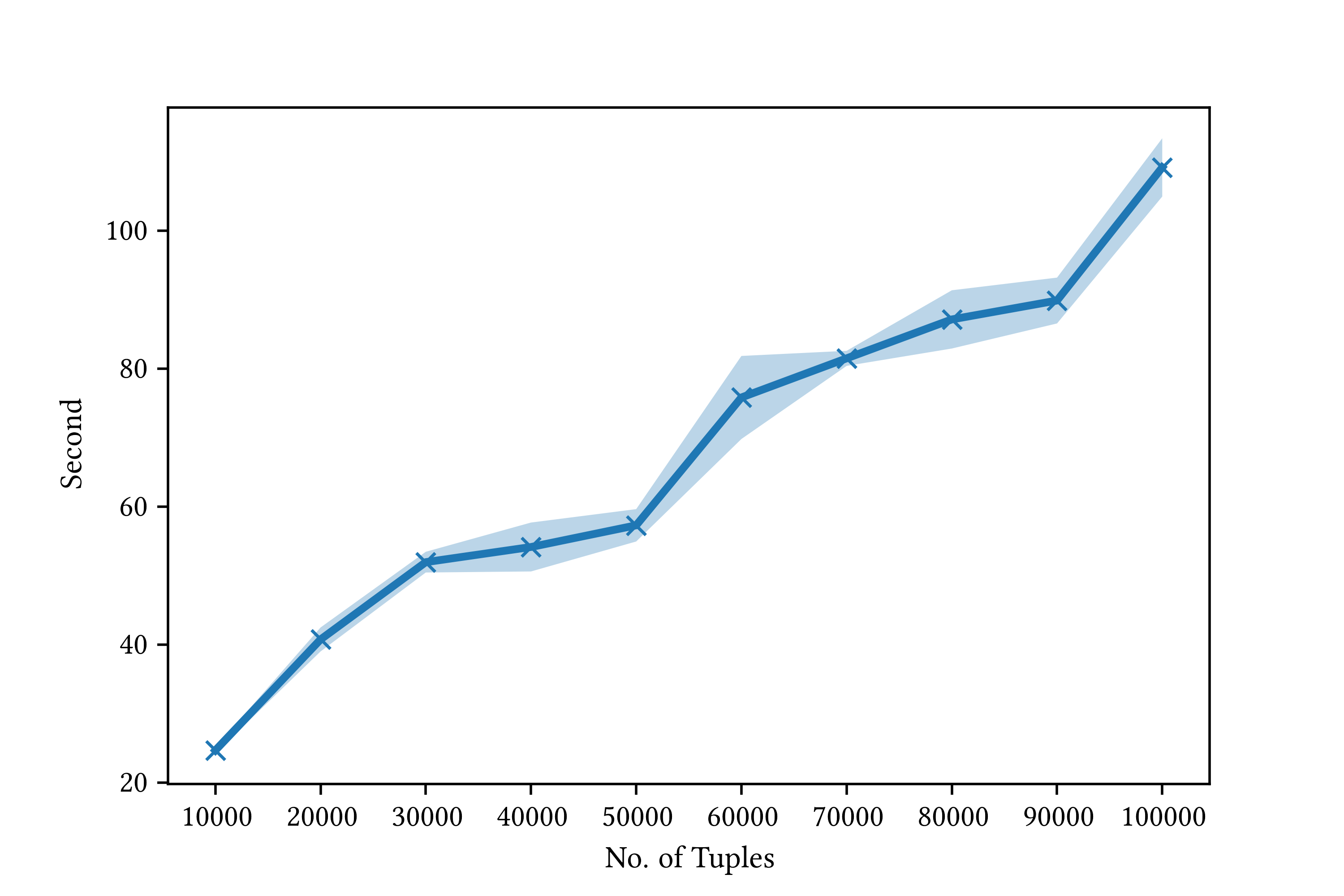}\hfill
\includegraphics[width=.5\linewidth]{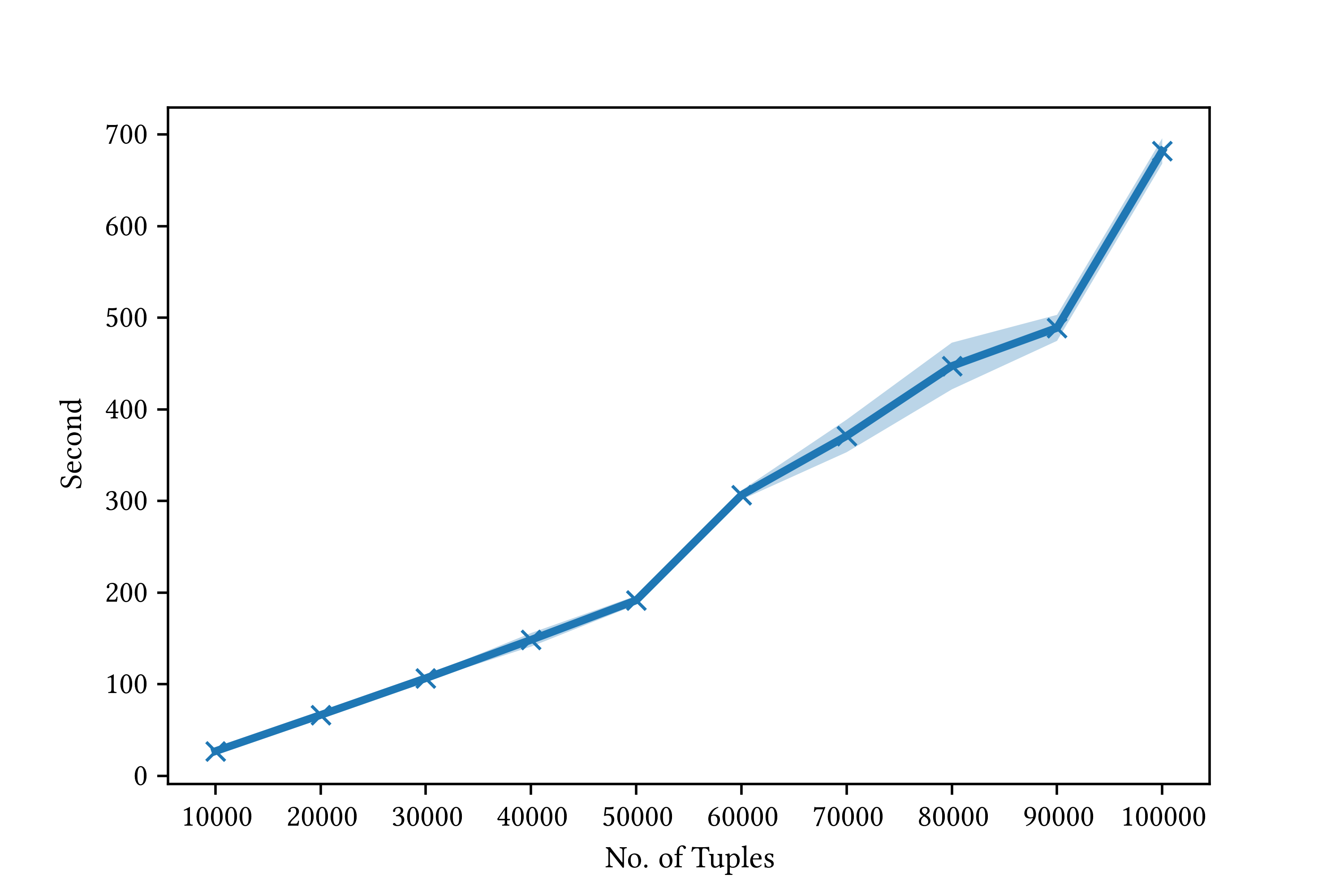}
\caption{\reviseA{
(a) The results for randomly sampling a fixed number of sensitive policies (b) The results for incremental sampling a fixed ratio of sensitive policies. Evaluation was done using the Binning-then-Merging Wrapper Algorithm on the Hospital dataset.}}
\label{fig:scalablity}
\end{figure}

\subsection{Experiment 4: \texorpdfstring{$k$}{k}-Percentile Deniability}
\label{sect:cleaning}

We implemented our system with a relaxed notion of security, $k$-percentile deniability, where $k$ is a relative parameter based on the domain size of the sensitive cell.
We analyze the utility of the system when varying $k$ and measure the utility.
For the results shown in Figure \ref{fig:k-den}(a), the sensitive cell is selected from ``State'' which is a discrete attribute with high dependency connectivity.
Clearly, when $k=0$, i.e., full leakage, the unconstrained case will only hide sensitive cells and when $k=1$ i.e, full deniability, the system hides the maximum number of cells.
When $k=0.5$, i.e., the inferred set of values is half of that of the base view, the system hides almost the same number of cells as $k=1$ i.e., full deniability.
When $k=0.1$, i.e, the inferred set of values is $\frac{1}{10}$ of that of the base view, 
our system hides $\approx15\%$ fewer cells than the one that implements full deniability.
On the \emph{Hospital} dataset, the utility improvement was marginal with k set to the smallest value possible (besides full leakage) i.e., $k=\frac{1}{\vCount{\vDomain{\vCell{*}{}}}}$.
The approach that implements full deniability is able to provide high utility with a stronger security model on both datasets compared to the one that implements $k$-percentile deniability.
\reviseTKDE{
We measure the runtime performance of $k$-deniability for different $k$ values and compare the results with full-deniability.
As shown in Figure \ref{fig:k-den}(b), algorithms to achieve $k$-deniability take longer time to complete than the full-deniability algorithms, because $k$-deniability algorithms reduce the fan-out of the cuesets in the first iteration, but more iterations are thus taken to converge.
For different tested $k$ values, the more we relax the $k$ constraint, the less execution time the algorithm will take, because fewer cuesets, thus a smaller fan-out, are considered in calculating leakage.
}

\begin{figure}[t]
\centering
\includegraphics[width=0.43\linewidth]{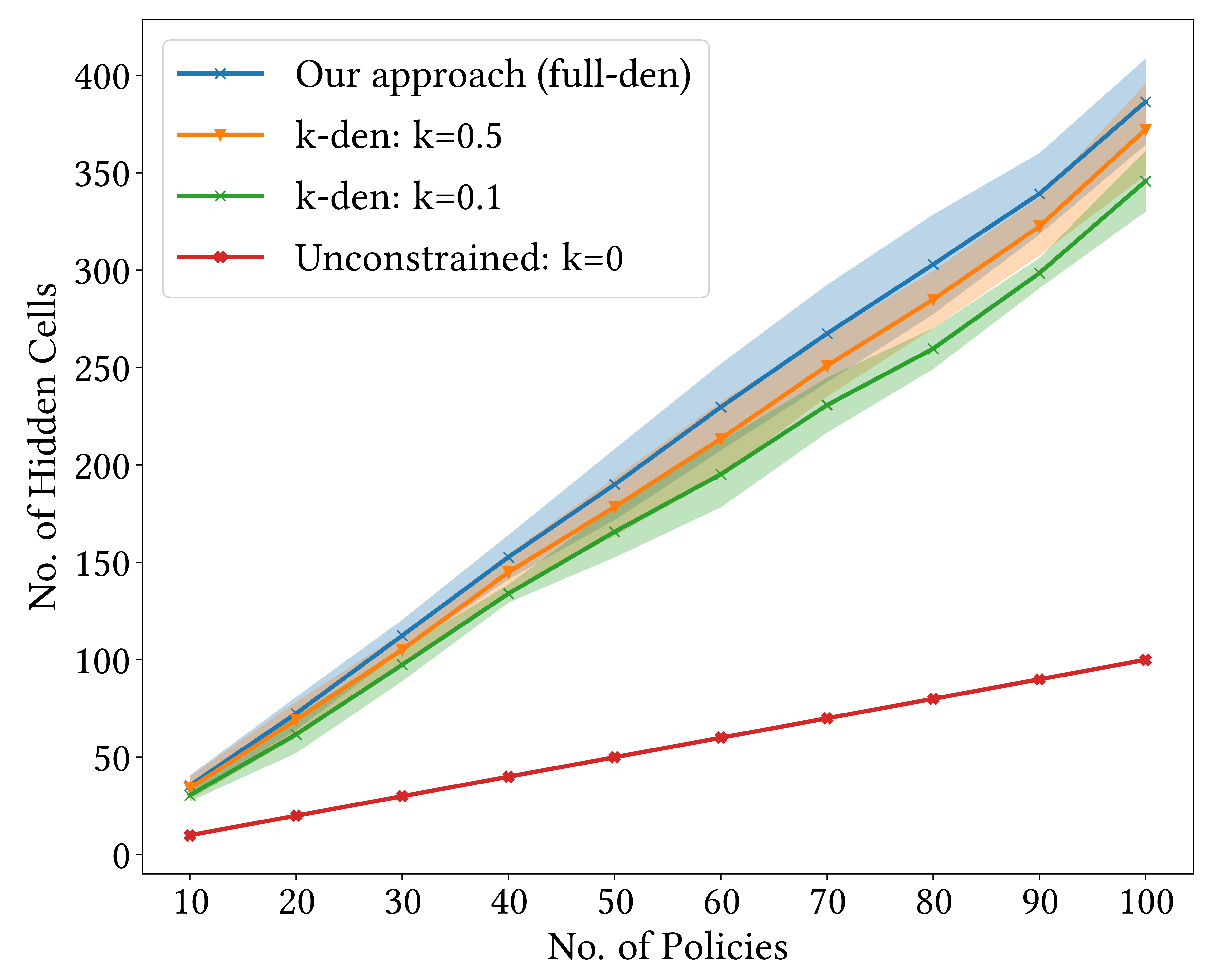}
\includegraphics[width=0.4\linewidth]{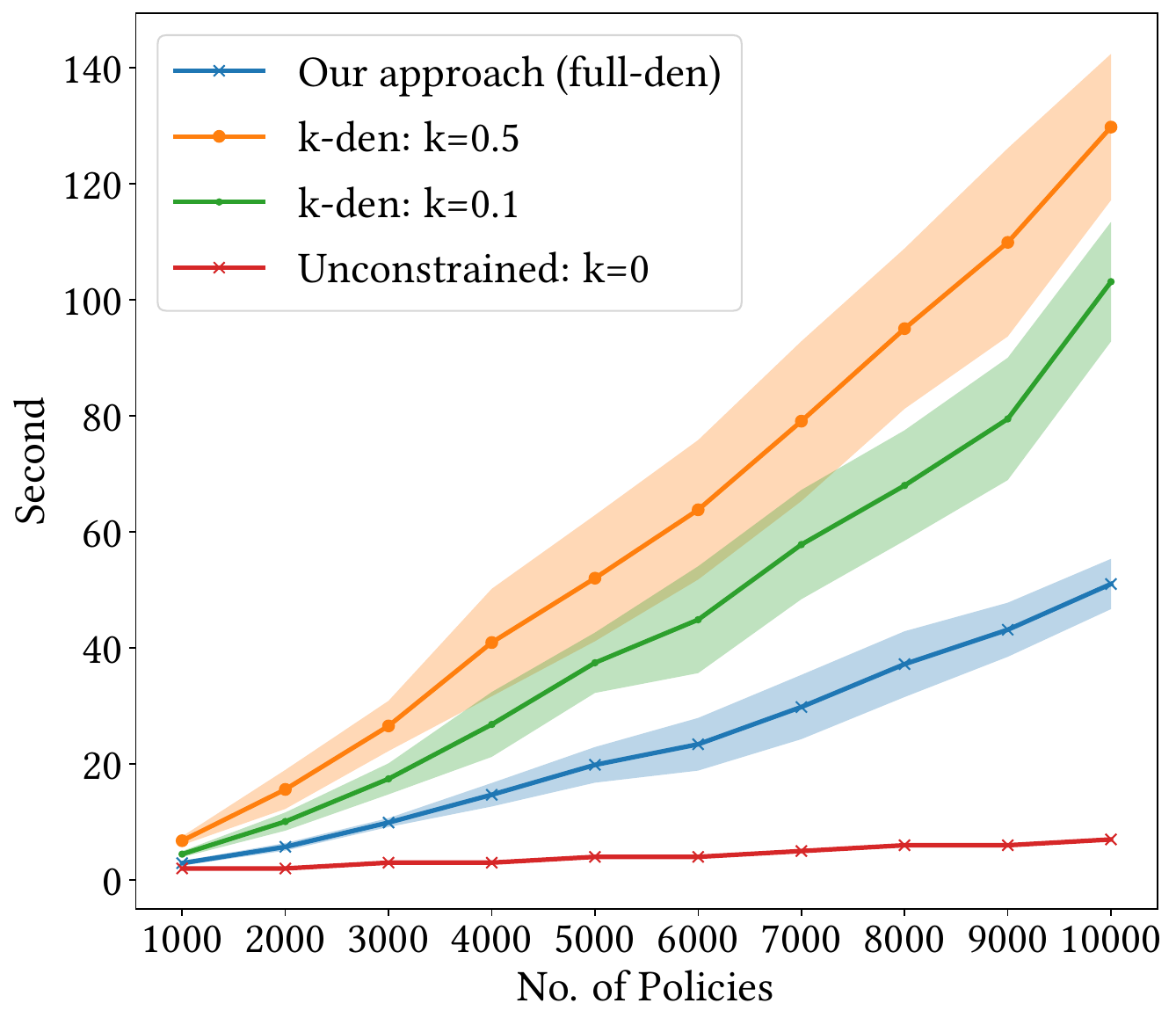}
\caption{(a) Data utility on Tax dataset. Experiments done with full deniability and k-deniability (varying values of k); \reviseTKDE{(b) Performance on Tax dataset (varying values of k).}}
\label{fig:k-den}
\end{figure}

\subsection{\reviseTKDE{Experiment 5: Modified Inference Protection}}
\label{subsec:modified_ic_exp}

\reviseTKDE{
We compare the modified inference protection algorithm (Algorithm \ref{alg:tuple_hiding}) on the Tax dataset against the original inference protection (Algorithm \ref{alg:MVC}) to achieve full-deniability.
As shown in Figure \ref{fig:modified_TT}, the price of relaxing the assumption comes at the cost of lower utility (up to 1.3x cells hidden) and efficiency (2-3x seconds taken to converge, with a non-linear growth). 
}

\begin{figure}[t]
    \centering
    \includegraphics[width=0.23\textwidth]{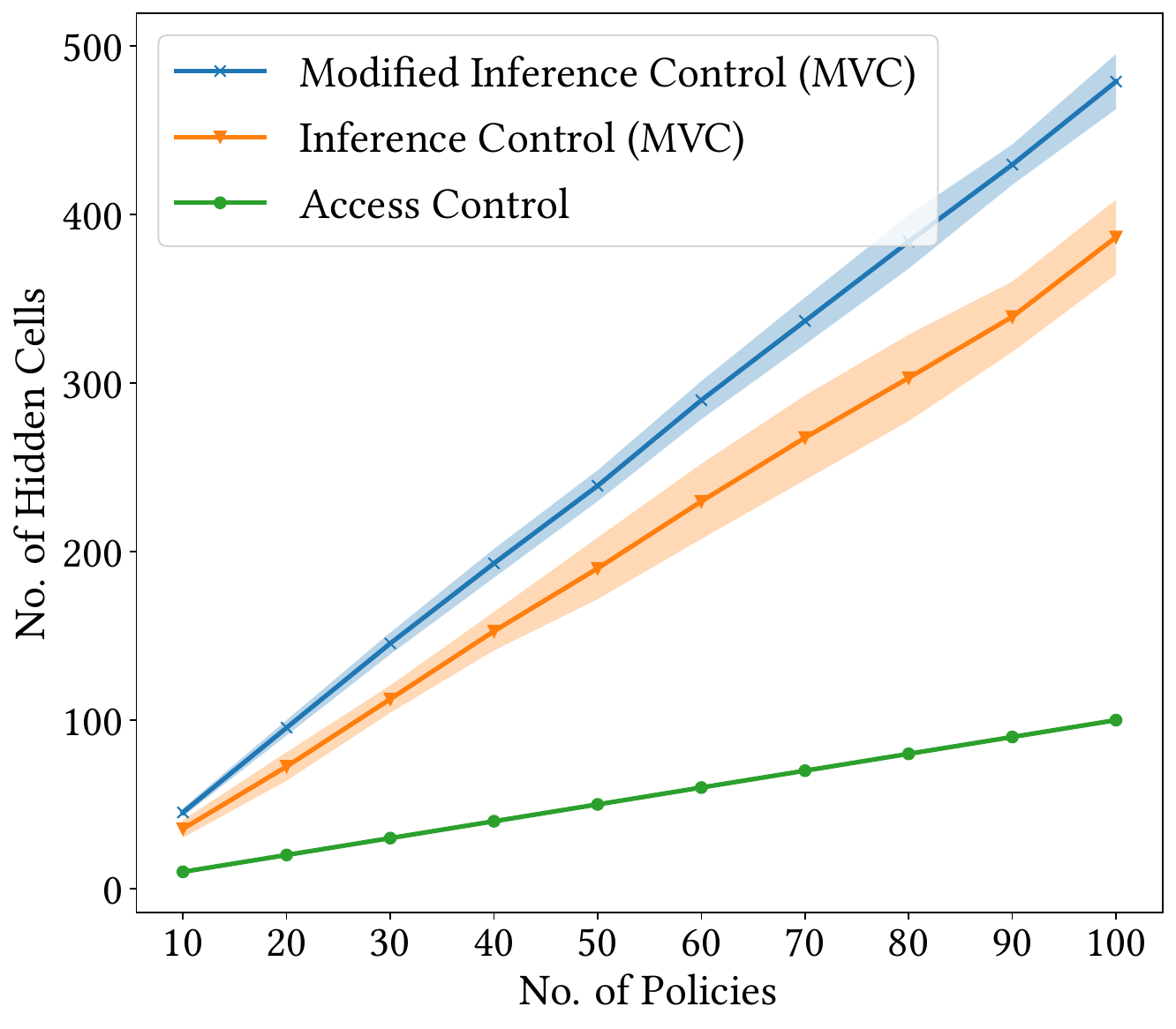}
    \includegraphics[width=0.23\textwidth]{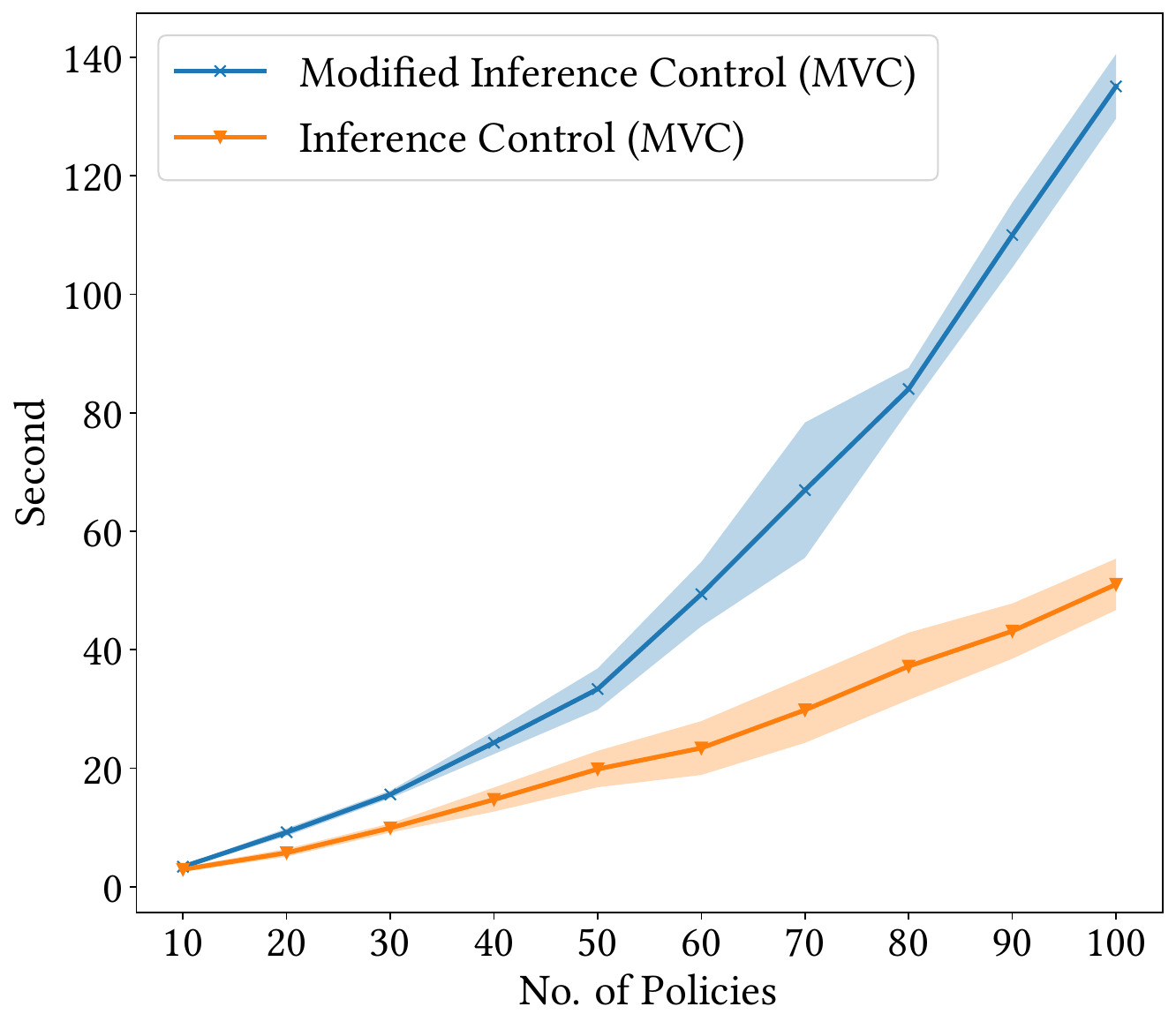}
    \caption{\reviseTKDE{Modified Inference Protection: (a) Data utility (b) Performance on Tax dataset for modified inference protection, inference protection, and access control.}}
    \label{fig:modified_TT}
\end{figure}

\subsection{\reviseTKDE{Experiment 6: Case Study over Query Workloads}}
\label{subsec:query_case_study}

\reviseTKDE{
We further study how inference control algorithms can affect the utilities of query workloads, especially when a large portion of the database view is marked as NULL by access control policies.
We first investigate the distribution of the hidden cells (NULL's) across the views.
We take the run with the access control view with 1,000 policies and execute our approach (w. MVC) to generate the inference control view and use these two views throughout this case study.
Since this study involves a large number of sensitive cells, the baseline methods presented earlier, time out before converging and we only compare the inference control view based on our approach with the access control view.
We present in Figure \ref{fig:heatmaps} the heatmap, where a darker color represents more cells hidden in this column, and the density distributions of data that support the visualization.
The distributions of NULL cells are similar in both views -- most additional hidden cells in the inference control view are concentrated in the first 3 attributes that are directly correlated with the access control policies.
Some but fewer additional cells from other columns are hidden as well in the inference control view, while none of the cells are hidden from the attributes not participating in dependencies.

\stitle{Evaluating workload-driven utility metric.} Next, we evaluate the utility of the database views over two types of query workloads: randomized range queries over one column and cross columns.
In particular, for the first case, we randomly generate 1,000 set queries per column with randomly sampled range specifications (w. 300-1100 cells, varying).
For the cross-column queries, we consider every possible pairwise combination of the attributes and similarly generate 1,000 queries for each combination. The range queries cover both attributes in each combination. 
As mentioned, we consider visibility (i.e., percentage of non-NULL cells in the query result) as the utility metric in this case study.

Figure \ref{fig:workload_case_study} shows the empirical results.
We take the workload that executes 1,000 queries on the ``Rate'' column to present the results in Figure \ref{fig:workload_case_study}(a) and (b) for access control and inference control views, resp.
We use histograms to show the number of queries that has a certain percentage of visibility.
As observed, most queries remain high visibility ($\sim$93-96\% cells visible) in both views, indicating good utility for downstream analytics.

We then present results for cross-column queries in Figure \ref{fig:workload_case_study}(c) and (d) as heatmaps.
Each block in the heatmap represents the average visibility percentage among 1,000 queries executed over this attribute combination. While the overall visibility is over 95\% for both views, a darker color in the heatmap suggests more cells are visible from the query.
The similarity between the two heatmaps indicates that inference control does not affect the query-driven utility much compared to the access control views.
}

\begin{figure}[t]
    \centering
    \includegraphics[width=0.47\linewidth]{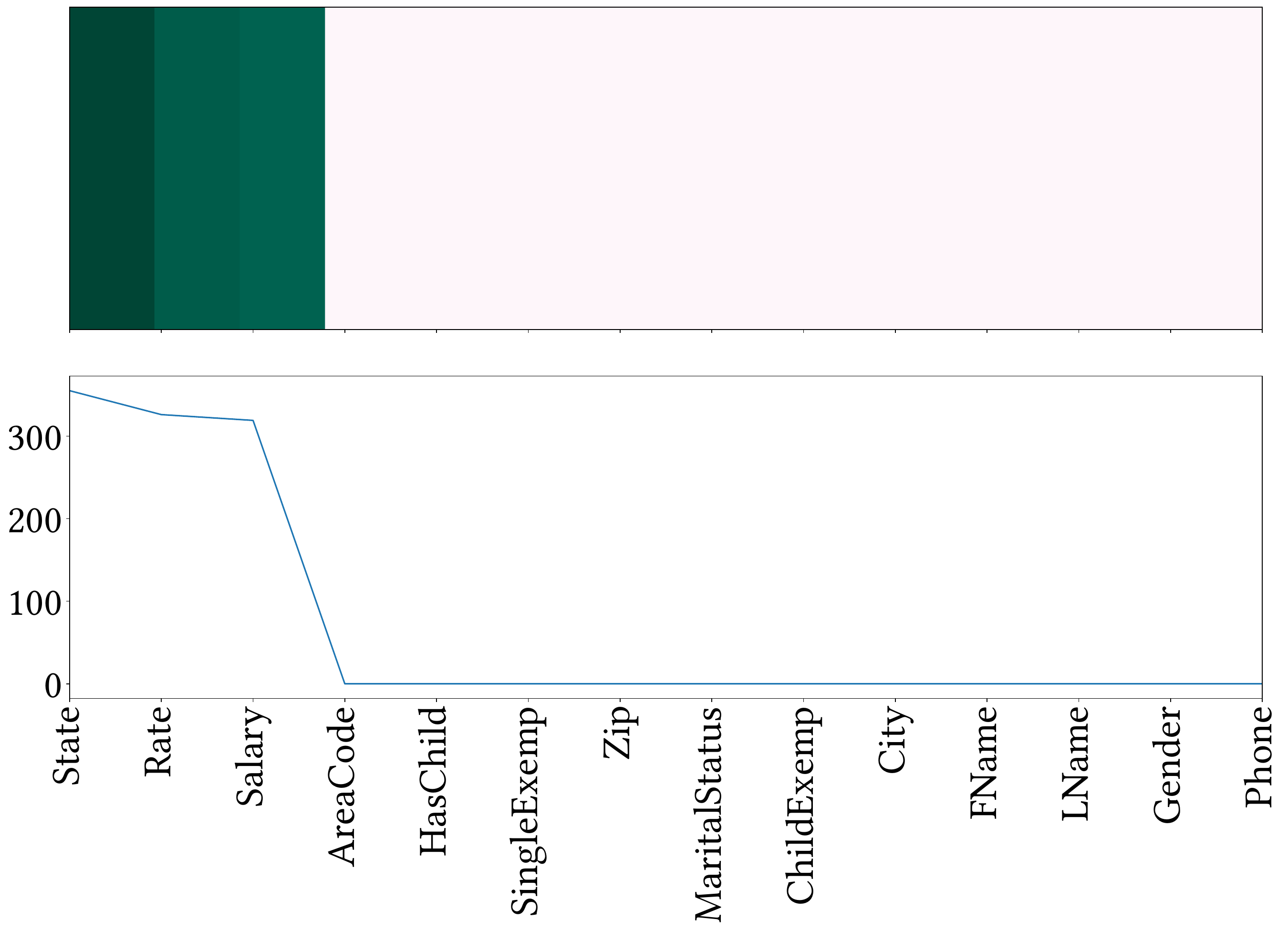}
    \includegraphics[width=0.47\linewidth]{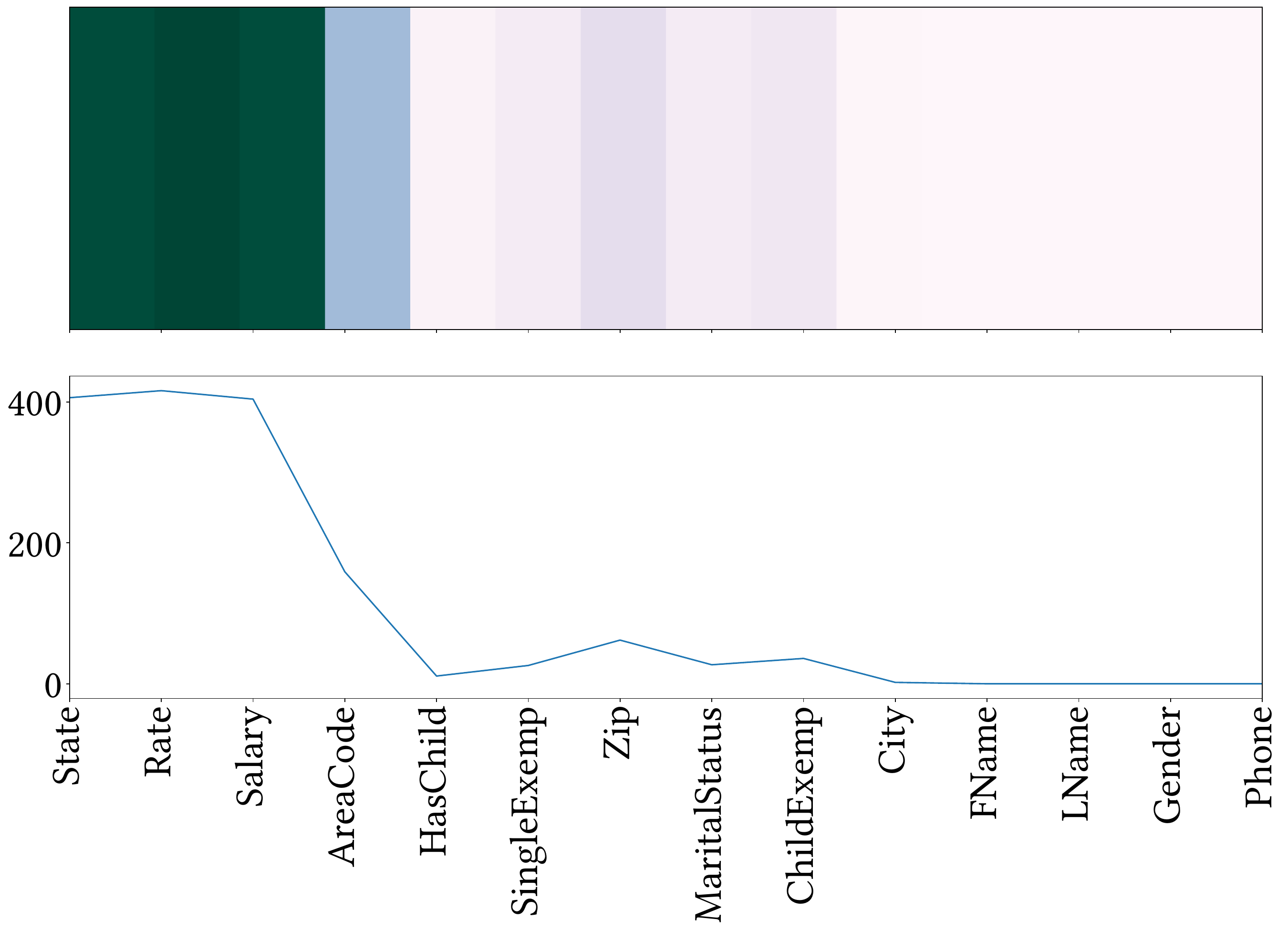}
    \caption{\reviseTKDE{Distribution of NULL's: (a) as policies in access control view; (b) as hidden cells in the inference control view.}}
    \label{fig:heatmaps}
\end{figure}

\begin{figure*}[t]
    \centering
    \includegraphics[width=0.33\linewidth]{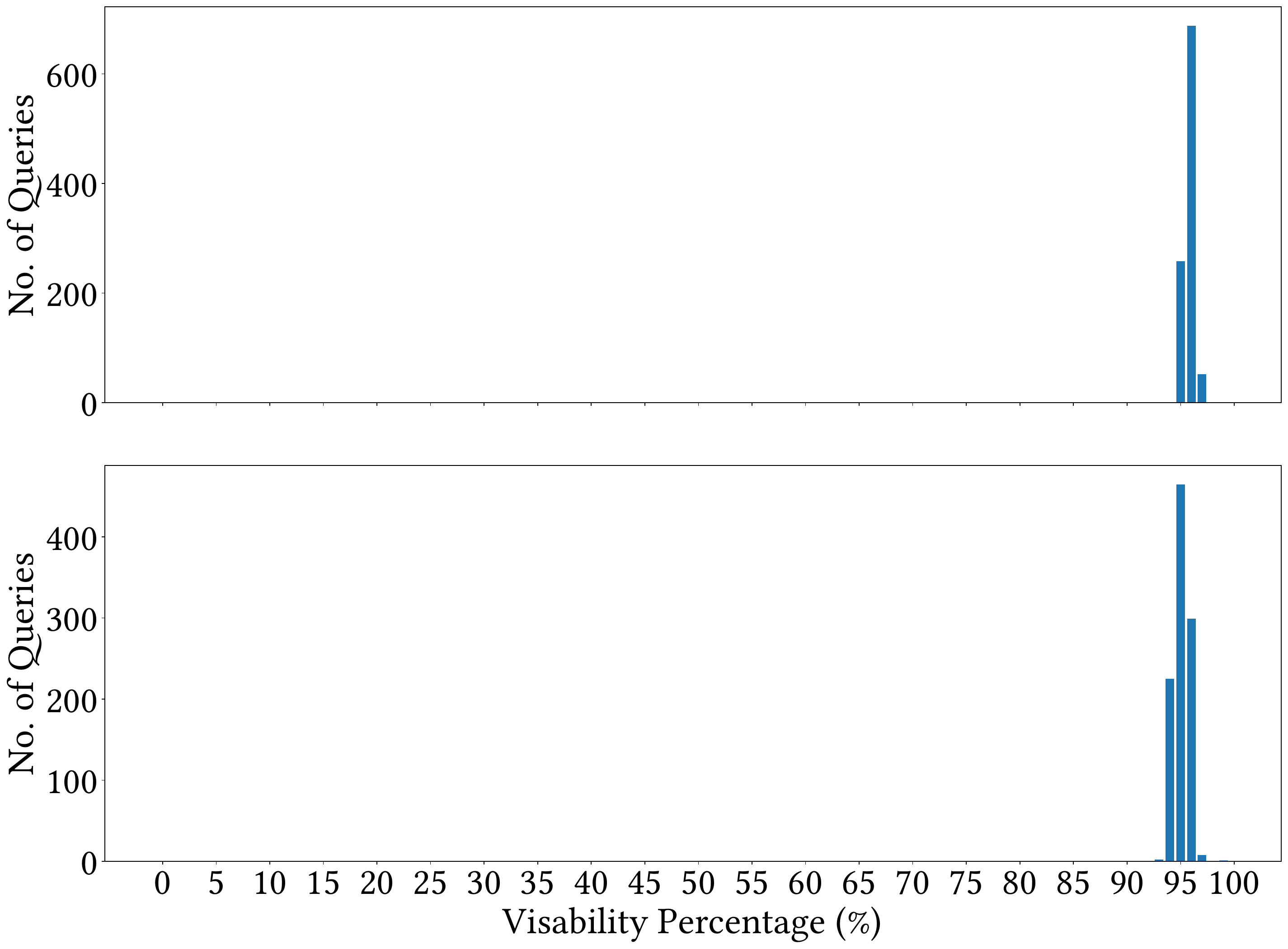}
    \includegraphics[width=0.3\linewidth]{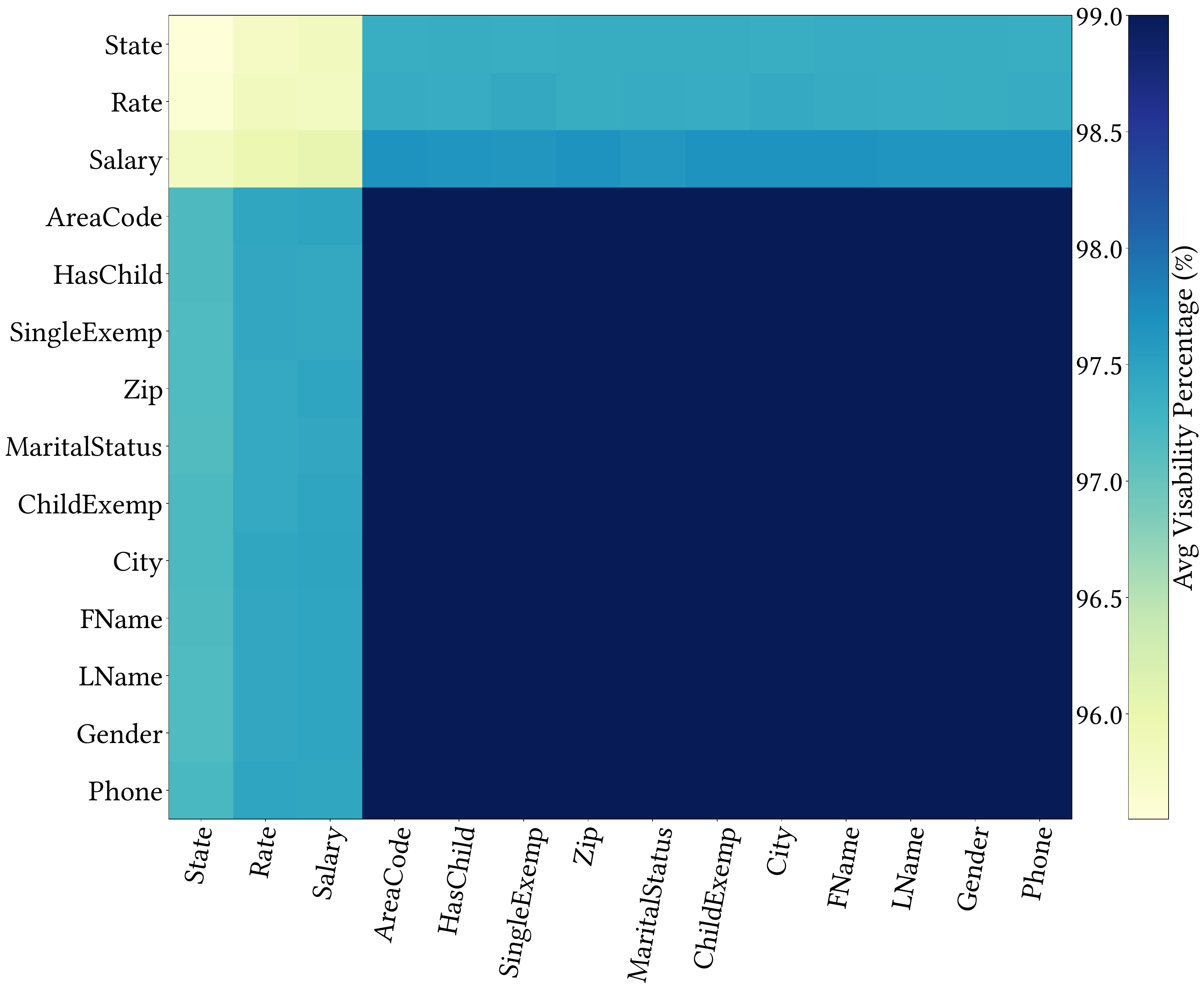}
    \includegraphics[width=0.3\linewidth]{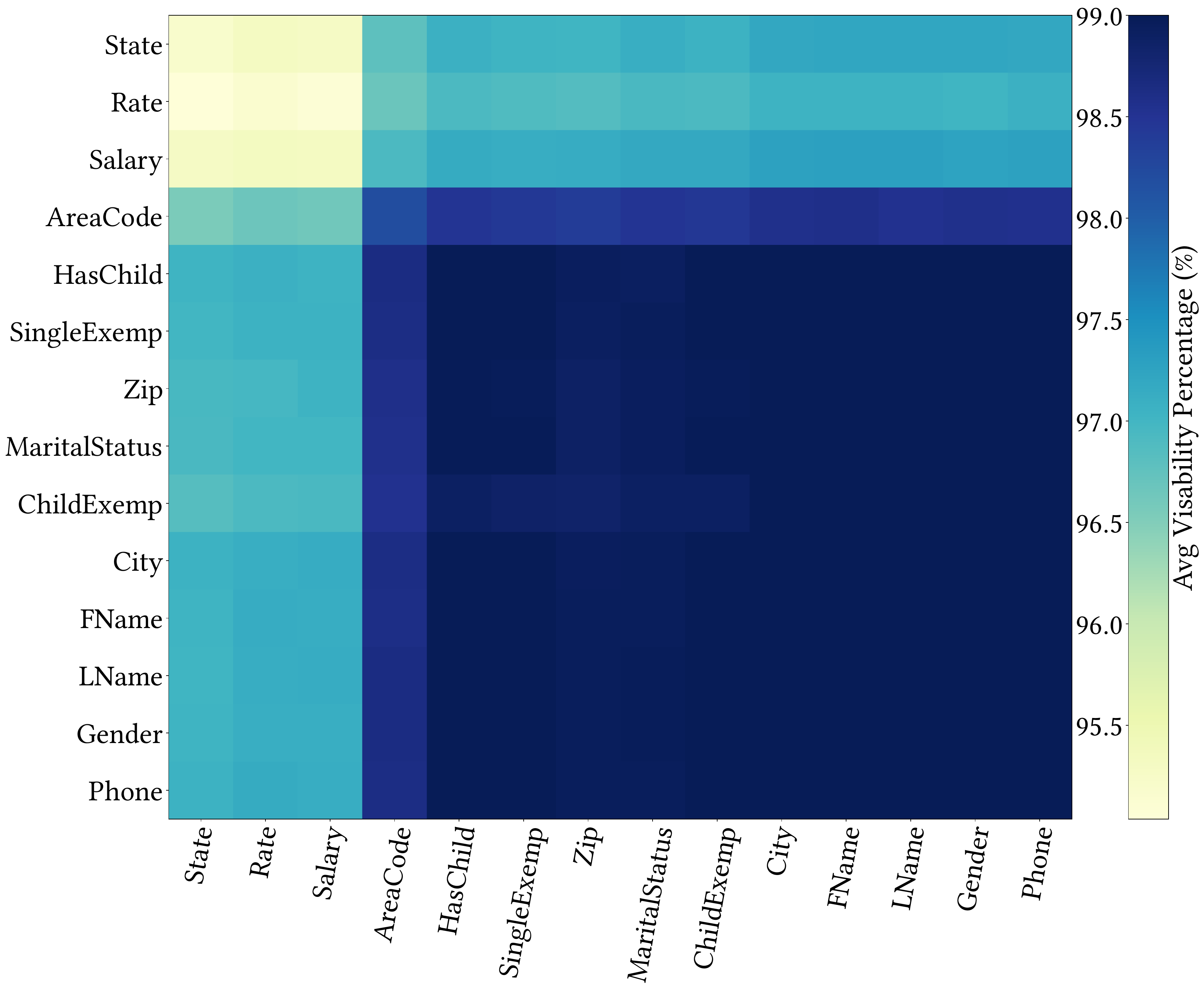}
    \caption{\reviseTKDE{Workload-driven utility: (a) Upper left: visibility percentage for queries in the workload over the access control view; (b) Bottom left: visibility percentage for queries in the workload over the inference control view; (c) Middle: average visibility percentage in cross-column workload over the access control view; (d) Right: average visibility percentage in cross-column workload over the inference control view.}}
    \label{fig:workload_case_study}
\end{figure*}

\subsection{\reviseTKDE{Experiment 7: Case Study against Real-World Adversaries}}

\begin{figure}[t]
    \centering
    \includegraphics[width=.7\linewidth]{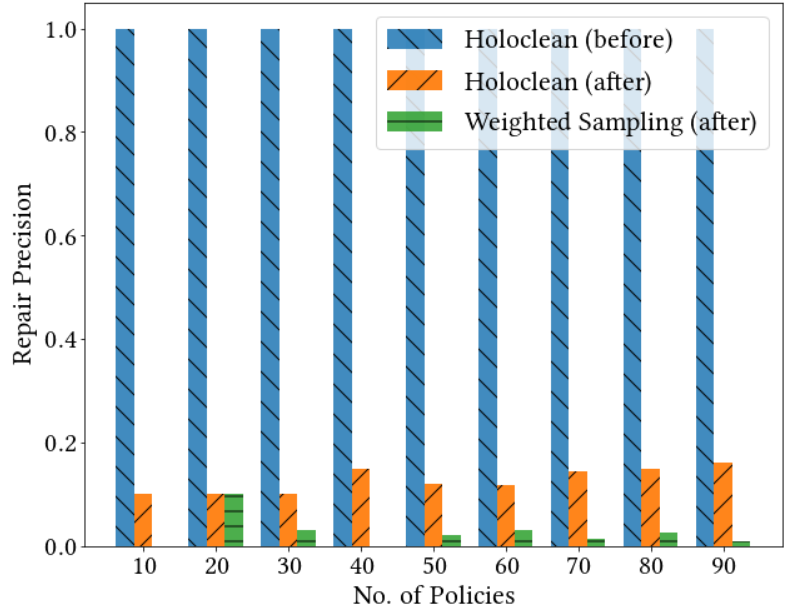}
    \caption{Against real-world adversaries: Reconstruction precision of sensitive cells with two types of adversaries.}
    \label{fig:holoclean}
\end{figure}

A potential limitation of our security model is based on the assumption that no correlations exist between attributes and tuples i.e., they are independently distributed other than what is explicitly stated through dependencies 
(that is either learnt automatically or
specified by the expert). 
However, typically in databases, other correlations do exist which can be exploited to infer the values of the hidden cells. These correlations can be also learned by the database designer using dependency discovery tools or data analysis tools.
If the correlations are very strong (e.g. hard constraints with no violations in the database), we call them out as constraints and consider them in our algorithms. For weak correlations, or soft constraints that only apply to a portion of the data, we do not consider them. Otherwise, everything in the database will become dependent, in which case our algorithm would be too conservative and hide more cells than necessary based on these soft constraints.

Therefore, we study the effectiveness of \ourapproach against inference attacks, i.e.,  to what extent can an adversary reconstruct the sensitive cells in a given querier view.
We consider two types of adversaries. The first type of adversary uses \textit{weighted sampling} where for each sensitive cell $\vCell{*}{}$, the adversary learns the distribution of values in $\vDomain{\vCell{*}{}}$ by looking at the values of other cells in the view.
The querier, then tries to infer the sensitive cell value by sampling from this learned distribution. 
The second type of adversary utilizes a state-of-the-art data cleaning system, Holoclean~\cite{rekatsinas2017holoclean}, which compiles data dependencies, domain value frequency, and attribute co-occurrence and uses them into training a machine learning classifier.
The adversary then leverages this classifier to determine values of sensitive cells by considering them as missing data in the database.
The sensitive cell for this experiment is selected from ``State'' which is a discrete attribute with high dependency connectivity. 
We consider the $10$ dependencies and drop the FC because Holoclean doesn't support it.
We increase the number of policies from 10 to 90 and input the querier view (in which the values of hidden cells are replaced with \nullvalue) to both adversaries.
We measure the effectiveness by $\text{repair precision} = \dfrac{\# \text{correct repairs}}{\# \text{total repairs}}$ (where a \textit{repair} is an adversary's guess of the value of a hidden cell) and therefore lower the \textit{repair precision} of the adversary is, the more effective \ourapproach is.

The results ``Holoclean (before)'' in Figure \ref{fig:holoclean}
show that when only sensitive cells are hidden, an adversary such as Holoclean, is able to correctly infer the sensitive cells.
When additional cells are hidden by \ourapproach, indicated by ``Holoclean (after)'', the maximum precision of Holoclean is $0.15$. 
On the other hand, the weighted sampling employed by the other type of adversary, indicated by ``Weighted Sampling (after)'', could reconstruct between 3\% and 10\% of the sensitive cells. 
Note that Holoclean uses the learned data correlations (and attribute co-occurrence, domain value frequency) in addition to the explicitly stated data dependencies. However, it only marginally improves upon weighted sampling given the view generated by \ourapproach.

\eat{
\begin{itemize}
    \item Utility plots: 
    each plot: no. of hidden cells v.s no. of policies: full-den, full-den-woMVC, full-den-oblivious: mean + std \\
    tax plot, hospital plot 
    \item Performance plots:
    each plot: runtime vs. no. of policies: three lines\\
    tax and hospital\\
    two more plots for fan-out of cueset and hidden cells
    \item Privacy tables: 
    holoclean-constraint-only, percentage of sensitive cells recovered v.s. no of polices: 
    each table: full-den; 
    baseline (hide no additional cells); 
    baseline (hide similar no. of cells randomly);
    baseline (full-den-oblivious) \\
    holoclean-all features for the same set of algos     
    for both tax and hospital 
\end{itemize}
}

\section{Related Work}
\label{sect:related}

The challenge of preventing leakage of sensitive data from query answers 
has been studied in many prior works on inference control~\cite{farkas2002inference}. 
Early work by Denning~\etal~\cite{denning1985commutative} designed 
commutative filters to ensure answers returned by a query 
are equivalent to that which would be returned based on the authorized view for the user.
This work, however, did not consider
 data dependencies.  
\reviseTKDE{We categorize them based on when and how inference control is applied and what security model is used.}

\reviseTKDE{\stitle{Design-time Prevention Methods}
which mark attributes that lead to inferences on sensitive data items as sensitive.}
Qian \etal~\cite{QianInference} developed a tool to analyze potential leakage due to foreign keys in  order to elevate the clearance level of data if such leakage is detected. 
Delugachi \etal     ~\cite{DelugachHinkeInf} generalized 
the work in  \cite{QianInference} and developed an approach based on analyzing a conceptual graph to identify potential leakage from  more general types of data associations (e.g., part-of, is-a). 
Later works such as \cite{yip1998data}, however, established that inference rules for detecting inferences at database design time are incomplete and hence are not a viable approach for preventing leakage from query answers. 
\reviseTKDE{Design time approaches for disclosure control have successfully been used in restricted settings such as identifying
the maximal set of non-sensitive data to outsource such that it prevents inferences about sensitive data \cite{vimercati2014outsourcing, haddad2014access,oktay2015semrod, oktay2017secure}, however, do not extend to our setting.
}

\reviseTKDE{\stitle{Query-time Prevention Methods}
that reject queries which lead to inferences on sensitive data items.}
Thuraisingham \cite{THURAISINGHAM1987479} developed a \emph{query control approach} in
the context of 
 Mandatory Access Control (MAC) wherein 
 policies specify the security clearances for the users (subject) and the security classification/label for the data.
 ~\cite{THURAISINGHAM1987479} presented an inference engine to determine if query answers can lead to leakage (in which 
case the query is rejected). 
While \cite{THURAISINGHAM1987479} assumed a prior existence of an inference detection engine, Brodsky \etal \cite{brodsky2000secure} developed a framework, DiMon,  based on chase algorithm for constraints expressed as Horn clauses.
DiMon takes in current query results, the user's query history, and Horn clause constraints to determine the additional data that may be inferred by the subject. Similar to \cite{THURAISINGHAM1987479}, 
if inferred data is beyond the security clearance of the subject then their system refuses the query. 
\reviseTKDE{Such work (that identifies if a query leaks/does not leak data) differs from ours since it cannot be used  directly 
to identify  a maximal
secure answer  that does not lead to any inferences  --- the problem we study in this paper.
Also, the above work on query control is based on a much weaker security model  compared to the full-deniability model we use. It only prevents 
 an adversary from reconstructing the exact value of a sensitive cell but cannot prevent them from learning new information about the sensitive cell.
 }

\reviseTKDE{\stitle{Perfect Secrecy Models}
that characterizes inferences on any possible database instance as leakage.}
The most relevant of these works is from
 Miklau \& Suciu \cite{miklau2007formal} who study the challenge of preventing information disclosure for a secret query given a set of views.
 Our problem setting is different as we check for a given database instance whether it is possible to answer the query hiding as few cells as possible while ensuring full deniability.
\reviseTKDE{Applying their approach to our problem setting will
be extremely pessimistic as most queries will be rejected on a database with a non-trivial number of dependencies.  
}

\reviseTKDE{\stitle{Randomized Algorithms for Inference Prevention} that suppress too many cells and does not look at dependencies as inference channels}
The most relevant of these are Differential Privacy (DP) mechanisms promise to protect against an adversary with any prior knowledge and thus have wide applications nowadays \cite{DBLP:journals/fttcs/DworkR14,yu2022thwarting,zhang2023DProvDB}. 
In our problem setting of access control, called the \textit{Truman model} of access control \cite{rizvi2004extending}, the data is either hidden or shared depending upon whether it is sensitive for a given querier.
In such a model, the expectation of a querier is that the result doesn't include any randomized answers. 
Weaker notions of DP such as One-sided differential privacy (OSDP) 
 \cite{kotsogiannis2020one} aims to prevent inferences on sensitive data by using a randomized mechanism when sharing non-sensitive data. 
  However, such techniques offer
  only probabilistic guarantees (and cannot implement
  security guarantees such as full deniability), and therefore may allow some non-sensitive data to be released even when their values could lead to leakage of a sensitive cell.
  These techniques also lead to suppression of a large amount of data (suppresses  approx. 91\% non-sensitive data at $\epsilon = 0.1$ and approx. 37\% at $\epsilon = 1$). 
The current model of OSDP only supports hiding at the row level and is designed for scenarios where the whole tuple is sensitive or not. It is non-trivial to extend to suppress cells with fine-grained access control policies considered in our setting.
Furthermore, \reviseTKDE{most DP-based mechanisms (including OSDP) assume that no tuple correlations exist even through explicitly stated data dependencies.}

\reviseTKDE{\stitle{Inference Control in Other Access Control Settings.}
Among these, \cite{JEBALI20221} studies the problem of secure data outsourcing in the presence of functional dependencies.
Access control policies are modelled using confidentiality constraints which define what combination of attributes should not appear together in a partition. 
They use a graph-based approach built upon on functional dependencies to detect possible inference channels. 
The goal is to then derive optimal partitioning so as to prevent inferences through these functional dependencies while efficiently answering queries on distributed partitions. 
Vimercati et al~\cite{vimercati2014outsourcing} also studied the problem of improper leakage
due to data dependencies in data fragmentation. Similar to \cite{JEBALI20221}, they mark attributes as sensitive (using
confidentiality constraints) and block the information flow from non-sensitive attributes to
sensitive attributes through dependencies. 
In general, the works in this category look at sensitivity at the level of attributes and not at the level of cells through fine-grained access control policies, studied in our work. In our work, we enforce fine-grained access control policies and allow minimal hiding of additional cells to prevent inferences.

}

\section{Conclusions and Future Work}
\label{sect:conclusions}

We studied the inference attacks on access control protected data through data dependencies, DCs and FCs. We developed a new stronger security model called \emph{full deniability} which prevents a querier from learning about sensitive cells through data dependencies. We presented conditions for determining leakage on sensitive cells and developed algorithms that uses these conditions to implement full deniability. The experiments show that we are able to achieve full deniability for a querier view without significant loss of utility for two different datasets.

In future, extending the security model to not only consider hard constraints explicitly specified in the form of denial constraints but also soft constraints that exist as correlations between data items poses a significant challenge. 
The invertibility model in FCs could also be expanded to model the probabilistic relationship between input and output cells, replacing the current deterministic model.
In addition to considering non-binary leakage as in \emph{k-percentile deniability}, one could release non-sensitive values randomly  (like in OSDP~\cite{kotsogiannis2020one}) instead of hiding all of them to prevent leakage. However, this requires addressing the challenges of any inadvertent leakages through dependencies when sharing such randomized data and also maintaining the validity of the database w.r.t dependencies.
We also envision our approach to preventing inference control being relevant to other areas of access control research (such as cryptographic models~\cite{HCA2016}, and web applications~\cite{blockaid2022}) and applications (such as Internet of Vehicles~\cite{ABE}).

\eat{In future, we would like to extend the security model to not only consider hard constraints explicitly specified in the form of data dependencies but also soft constraints that exist as correlations between data items. 
The invertibility model in FCs could also be extended to model probabilistic relationship between input and output cells, instead of being deterministic as in the current model. 
Improving utility while implementing full deniability is also an open challenge.
In $k$-percentile deniability, the improvement in utility as a factor of $k$ needs to be studied further as factor of different properties of the dataset such as type of attributes, dependency connectivity, dependency instantiations, and possible number of cuesets for a given sensitive cell.
Along with further exploration of $k$-percentile deniability considered in our paper, one could also consider releasing non-sensitive values (like in OSDP) randomly instead of hiding all. However, this requires addressing challenges of any inadvertent leakages through dependencies when sharing such randomized data. }

\section*{Supplementary Material}

Due to space constraints, we defer omitted proofs, algorithms, discussions, and some experimental details to the supplementary materials of this paper.

\ifCLASSOPTIONcompsoc
  \section*{Acknowledgments}
\else
  \section*{Acknowledgment}
\fi

This work was supported by NSF under Grants 2032525, 1952247, 2008993,
and 2133391, and by NSERC through a Discovery Grant. This material was based on research sponsored by DARPA
under Agreement Number FA8750-16-2-0021. The U.S. Government is
authorized to reproduce and distribute reprints for Governmental purposes
not withstanding any copyright notation there on. The views and conclusions
contained here in are those of the authors and should not be interpreted as necessarily representing the official policies or endorsements, either expressed or
implied, of DARPA or the U.S. Government.
We thank the reviewers for their detailed comments which helped to improve the paper during the revision process. 

\ifCLASSOPTIONcaptionsoff
  \newpage
\fi

\bibliographystyle{IEEEtran}  
\bibliography{references}

\vspace{-0.5in}
\begin{IEEEbiography}[{\includegraphics[width=1.1in,height=1.3 in,clip,keepaspectratio]
{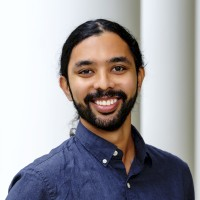}}]{Primal Pappachan} is an Assistant Professor in the Department of Computer Science at Portland State University. He received a M.S. in Computer Science from University of Maryland, Baltimore County in 2014 and a Ph.D. in Computer Science from University of California, Irvine in 2021. Afterwards, he was a postdoctoral scholar in the College of Information Sciences and Technology at Pennsylvania State University. His research interests are in the intersection of data management and privacy, particularly data protection methods such as access control, differential privacy, and privacy policies.

\end{IEEEbiography}

\vspace{-0.5in}
\begin{IEEEbiography}[{\includegraphics[width=1.1in,height=1.3 in,clip,keepaspectratio]
{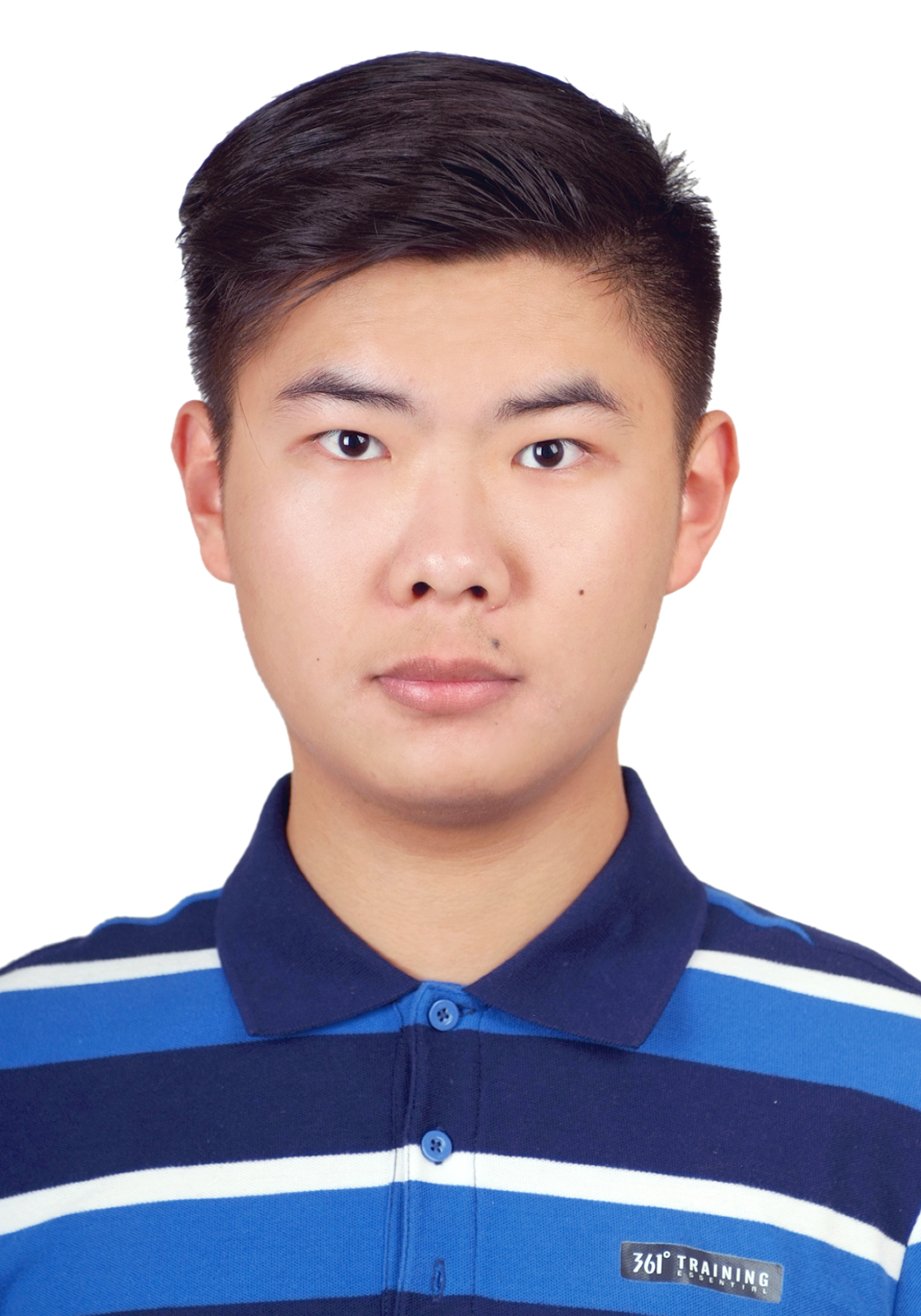}}]{Shufan Zhang} 
received the M.Math degree from the University of Waterloo, Waterloo, ON, Canada, in 2022. He is currently working toward the Ph.D. degree in computer science at the University of Waterloo. His research interests include computer security and data privacy, on both theory and system aspects, as well as their intersections with database systems and machine learning. He was selected as one of the Rising Stars in Data Science by UChicago and UCSD.

\end{IEEEbiography}

\vspace{-0.5in}
\begin{IEEEbiography}[{\includegraphics[width=1.1in,height=1.3 in,clip,keepaspectratio]
{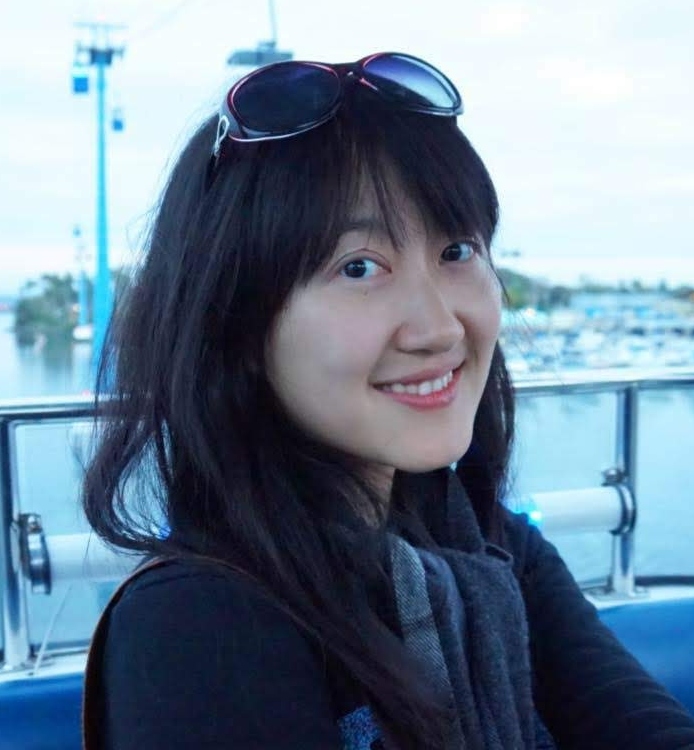}}]{Xi He} is an Assistant Professor in the Cheriton School of Computer Science at the University of Waterloo, and Canada CIFAR AI Chair at the Vector Institute. Her research focuses on the areas of privacy and security for big data, including the development of usable and trustworthy tools for data exploration and machine learning with provable security and privacy guarantees. She has given tutorials on privacy at VLDB 2016, SIGMOD 2017, and SIGMOD 2021. She is a recipient of the Meta Privacy Enhancing Technologies Research Award in 2022 and Google Ph.D. Fellowship in Privacy and Security in 2017. Her book ``Differential Privacy for Databases,'' co-authored by Joseph Near, was published in 2021. Xi graduated with a Ph.D. from the Department of Computer Science, Duke University, and a double degree in Applied Mathematics and Computer Science from the University of Singapore.
\end{IEEEbiography}

\vspace{-0.5in}
\begin{IEEEbiography}[{\includegraphics[width=1.1in,height=1.3 in,clip,keepaspectratio]
{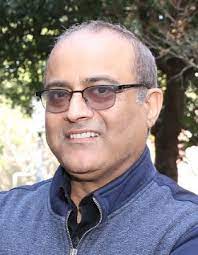}}]{Sharad Mehrotra} received the PhD degree in
computer science from the University of Texas,
Austin, Austin, Texas, in 1993. He is currently a
professor with the Department of Computer
Science, University of California, Irvine, Irvine,
California. Previously, he was a professor with the
University of Illinois at Urbana Champaign,
Champaign, Illinois. He has received numerous
awards and honors, including the 2011 SIGMOD
Best Paper Award, 2007 DASFAA Best Paper
Award, SIGMOD Test of Time Award, 2012, DASFAA ten year best paper awards for 2013 and 2014, 1998 CAREER
Award from the US National Science Foundation (NSF), and ACM ICMR
Best Paper Award for 2013. His primary research interests include the
area of database management, distributed systems, secure databases,
and Internet of Things.

\end{IEEEbiography}

\clearpage
\appendices
\extend{

\begin{table}[t]
\caption{A Summary of Notations}
\label{tab:notation}
\begin{tabular*}{\columnwidth}{@{}l|l}
\toprule 
{\em Notation} & {\em Definition}  \\
 \hline \\[-1em]
$\vDatabase$  & A database instance \\ 
$\vCell{}{}$ & A cell in a database relation \\
$\vCellSet{}, \vCellSet{}^H$ & Set of cells, hidden cells \\
$\vSchemaDep{}, \vSchemaDepSet{}$ & A schema level data dependency / set \\
$\vDataDep{}{}, \vDataDepSet{}$ & An instantiated data dependency / set \\
$\vCells{\vDataDep{}{}}$ & Cells involved in $\vDataDep{}{}$ \\
$\vPredicates{\vDataDep{}{}}$ & The set of predicates associated with a DC \\
$\vPredicates{\vDataDep{}{}, \vCell{}{}}$ & The set of predicates in $\vDataDep{}{}$ that involves the cell $c$ \\
$\vPredicates{\vDataDep{}{}\backslash c}$ & The set of predicates in $\vDataDep{}{}$ without the cell $c$ \\
$\view(\vCellSet{})$ & Set of value assignments for cells in $\vCellSet{}$\\
$\inference(\vCell{}{} \mid \view, \vDataDep{}{})$ & Inference function for the cell $\vCell{}{}$ \\
\bottomrule
\end{tabular*}
\end{table}

\section{Additional Proofs}

\extend{
\begin{proof}[Proof of Theorem 1]
Based on the definition in Section~\ref{sect:problem_definition}, we have $\inference(\vCell{*}{}|\view,\vDataDep{}{})$ $\subseteq$ $\inference(\vCell{*}{}|\view_0,\vDataDep{}{})$. 
Next, we show that for any possible value assignment to $\vCell{*}{}$ in the base view, $x^* \in \inference(\vCell{*}{}|\view_0,\vDataDep{}{})$, $x^*$ is also in $\inference(\vCell{*}{}|\view,\vDataDep{}{})$ when the TTC is \false. 
We prove this based on the two cases when $TTC(\vDataDep{}{}, \view, \vCell{*}{})$ evaluates to \textit{False} (see Equation~\ref{eq:ttc}).

\textit{Case 1:} At least one of the predicates $\vPredicate{}{i}$ in $\vPredicates{\vDataDep{}{}\backslash \vCell{*}{}}$ evaluate to \false based on the true cell value assignments in $\view$ i.e., \textit{eval}$(\vPredicate{}{i}, \view{}{})$ = \textit{False}. 
Therefore, the sensitive cell $\vCell{*}{}$ can take any value $x^* \in \inference(\vCell{*}{}|\view_0, \vDataDep{}{})$ to ensure $\vDataDep{}{}$ to be \true, i.e. $\neg(\cdots \land \false \land \cdots)=\true$ always.

\textit{Case 2:} 
At least one of the predicates $\vPredicate{}{i}$ in $\vPredicates{\vDataDep{}{}\backslash \vCell{*}{}}$ evaluate to \textit{Unknown} based on $\vCell{}{j} \in \vCells{\vPredicate{}{i}}$ being hidden in $\view$. 
Based on the assumption stated earlier, we know that there exists $x_j\in \inference(\vCell{}{j} \mid \view_0,\vDataDep{}{}) \subseteq \view(\vCell{}{j})$  that leads to $\vPredicate{}{i}$ evaluating to \false. 
Hence, for any $x^* \in \inference(\vCell{*}{}| \view_0, \vDataDep{}{})$, there exists 
$x_j \in \inference(\vCell{}{j}\mid \view)$ 
for any $\vCell{}{j} \in Cells(\vDataDep{}{})\backslash\{c^*\}$
such that 
\begin{eqnarray}
\vDataDep{}{}(\ldots,c_j=x_j, c^*=x^*,\ldots) \nonumber  \\ 
&\hspace{-3em} =  \neg(\cdots \land \vPredicate{}{i}(c_j=x_j,\ldots) \land \cdots) \nonumber \\
&\hspace{-5.5em}= \neg(\cdots \land \false \land \cdots)=\true \nonumber
\end{eqnarray}

Combining two cases proves the theorem.
\end{proof}
}

\extend{
\begin{proof}[Proof of Theorem 2]
We prove this by induction on the number of dependency instantiations $n$.
\textbf{Base case:} When $n=0$, then there are no dependency instantiations that apply to $\vCell{}{i} \in \vCellSet{}^{H}$ and hence in the shared querier $\view{}{}$, there exists no possibility of leakage for all $\vCell{}{i}$.
\textbf{Induction step:} Suppose the Theorem~\ref{theorem:fdcell} is \true for $n=k$ i.e., $\view{}{}$ achieves  full deniability when there are $k$ dependency instantiations. 
Now we consider the case when there exists $n=k+1$ dependency instantiations.
If $\vDataDep{}{k+1}$ does not include any $\vCell{}{i} \in \vCellSet{}^H$ then by default $\view{}{}$ achieves full deniability.
Suppose $\vDataDep{}{k+1}$ includes a cell $ \vCell{}{i} \in$  $\vCellSet{}^H$ and there is leakage on $\vCell{}{i}$ despite it being hidden because $TTC(\vDataDep{}{k+1}, \vCell{}{i}, \view{}{})$ is \true. In the rest of the proof we show that such a leakage is impossible.

If  $\vDataDep{}{k+1}$ only contains a single predicate of the form $\vPredicate{}{}(\vDataDep{}{k+1}) = $ $\vCell{}{i}\vOperator\vConstant$), then since
$\vCell{}{i}$ is hidden, such a dependency cannot cause leakage. 
On  the other hand, if the single predicate is of the form  $\vPredicate{}{}(\vDataDep{}{k+1}) = $
$\vCell{}{i} \vOperator \vCell{}{j} $, then $cueset(\vCell{}{i}, \vDataDep{}{})$ contains $\vCell{}{j}$.
As a result,  $\vCell{}{j}$ must also be hidden (by the property of $\vCellSet{}^H$ described in the theorem)
and thus, again such a dependency cannot lead to leakage. 
If additional predicate(s) exists in $\vDataDep{}{k+1}$ and $TTC(\vDataDep{}{k+1}, \vCell{}{i}, \view{}{})$ is \true, it must be the case that all the other predicates ($\vPredicates{\vCell{}{}(\vDataDep{}{k+1})\backslash\vCell{}{i}}$) are
\true. 
Thus, by the property of $\vCellSet{}^H$ described in the theorem, there must exist another cell $\vCell{}{j}$ from $\vPredicates{\vCell{}{}(\vDataDep{}{k+1})\backslash\vCell{}{i}}$ such that it is also hidden.
As $TTC(\vDataDep{}{k+1}, \vCell{}{i}, \view{}{})$ returned \true even though $\vCell{}{j}$ from its cueset is hidden, it must be the case, that there exists
another dependency instantiation for which 
tattle tale condition for such a $\vCell{}{j} $ holds.
But as the induction step established full deniability for all the dependency instantiations up to $k$ and therefore no leakage, it cannot exist. 
Hence, the $\view$ achieves full deniability if $\vCellSet{}^H$ satisfies the condition in the theorem.
\end{proof}
}

\section{Additional Algorithms}

\begin{figure}[t]
\centering
\includegraphics[width=\linewidth]{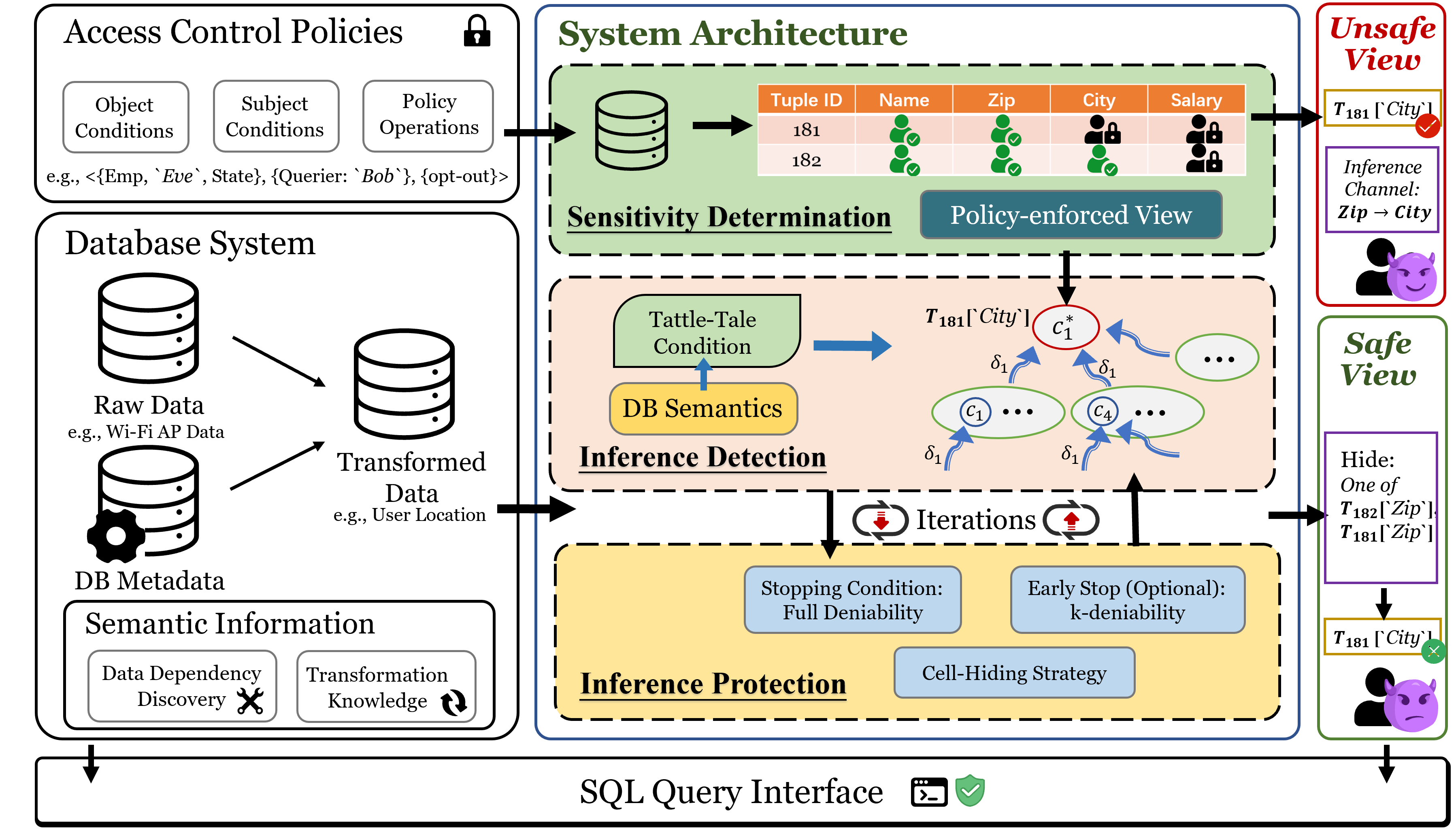}
\caption{System Architecture}
\label{fig:systemarchitecture}
\end{figure}

\stitle{Inference Protection based on Random Hiding}: Algo \ref{alg:random}.

\begin{algorithm}[t]
    \SetAlgoLined
    \DontPrintSemicolon
    \SetKwFunction{FMain}{InferenceProtect}
    \SetKwProg{Fn}{Function}{:}{}  
    \KwInput{Set of cuesets \textit{cuesets}}
    \KwOutput{A set of cells selected to be hidden \textit{toHide}}
    \Fn{\FMain{$cuesets$}}{
    
        \textit{toHide} = $\{ \}$ \Comment{Return list initialization.} \\
    
        \While{cuesets $\neq \phi$}
        {
            $cs$ = $cuesets$.getRandom() \\
            \uIf{$toHide \cap cs$}{
                     cuesets.remove(cs)\\            }
            \Else{
            toHide = toHide $\cup$ cs.getRandom() \\ 
            }
        }
        \Return{toHide}
    }
    \caption{Inference Protection (Random Hiding)}
    \label{alg:random}
\end{algorithm}

\stitle{Algorithm to compute the inferred set of values}: Algo \ref{alg:computeState}.

In Algorithm~\ref{alg:computeState}, we describe how to compute the set of inferred values for a cell based on a given database view and set of instantiated data dependencies.
For each dependency instantiation, we retrieve the predicate containing sensitive cell $\vPredicate{}{}(\vCell{*}{})$. Based on the operator $\vOperator$ \footnote{We consider ``$\leq$" and ``$<$" (similarly ``$\geq$'' and ``$>$") operators as identical in the above algorithm for simplification. 
} in this predicate, we either put the value of non-sensitive cell $\vCell{}{k}$ in the given $\view$ to \textit{minus\_set} (when $\vDomain{\vCell{*}{}}$ is discrete) or determine \textit{low} and \textit{high} (when $\vDomain{\vCell{*}{}}$) is continuous).

\begin{algorithm}[t]
    \SetAlgoLined
    \DontPrintSemicolon
    \KwInput{A target cell $\vCell{*}{}$, A database view $\view$, Set of instantiated data dependencies $\vDataDepSet{}$}
    \KwOutput{Set of inferred values for $\vCell{*}{}$ based on $\view$ and $\vDataDepSet{}$}
    \SetKwFunction{FMain}{InferredValues}
    \SetKwProg{Fn}{Function}{:}{}    
    \Fn{\FMain{$\vCell{*}{}$, $\view$, $\vDataDepSet{}$}}{
        minus\_set = \{ \}   \\
        low, high = \textbf{min}($\vDomain{\vCell{*}{}}$), \textbf{max}($\vDomain{\vCell{*}{}})$ \\
        \For{$\vDataDep{}{}$ $\in$ $\vDataDepSet{}$}
        {
            $\vCell{*}{}$, $op$, $\vCell{}{k}$ = $\vPredicate{}{}(\vCell{*}{}, \vDataDep{}{}) $  \\
            \Switch{$ op $}
            {
                \uCase{ ``$\leq$'' or ``$<$''}{  
                \uIf(\Comment{Value of $\vCell{}{k}$ in $\view$}){low $<$ $\vCell{}{k}$.val} 
                { 
                low = \textbf{min}($\vCell{}{k}$.val, high)
                }
                }
                \uCase{``$\geq$'' or ``$>$''}{ 
                \uIf{high $>$ $\vCell{}{k}$.val}
                {
                high = \textbf{max}($\vCell{}{k}$.val, low)
                }  
                }
                \uCase{``$\neq$''} 
                {
                    low = $\vCell{}{k}$.val, high = $\vCell{}{k}$.val \\
                    minus\_set= $Dom$($c^*$) $-$ $c_k$.val
                }
                \uCase{``$=$''} 
                {
                    minus\_set=minus\_set $\cup$ {$\vCell{}{k}$.val}
                }
            }
        }
        \uIf{$\vDomain{\vCell{*}{}}$ \textit{is discrete}}
        {
            \Return{$minus\_set$}
        }
        \Else (\Comment{$\vDomain{\vCell{*}{}}$ \textit{is continuous}}){
            \Return{$\vDomain{\vCell{*}{}} - [low, high]$}
        }
    }
    \caption{Computing Inferred Set of Values}
    \label{alg:computeState}
\end{algorithm}

\section{Additional Discussion}
\label{sect:attacks}

\subsection{Security Against Attacks Based on Knowledge of Algorithm}

Unlike algorithms used for achieving differential privacy (e.g., 
adding laplace noise to query output), the algorithm (denoted by FDA) used to generate a view with full deniability is
deterministic, and as a result, it could be vulnerable to 
attacks based on knowledge of the algorithm. In particular, 
an adversary, with knowledge of (a) the output of the FDA algorithm 
executed over the real dataset,  and (b) the FDA algorithm could reexecute
FDA against database instances that are consistent with the algorithm's 
output on the real data and compare the outputs to narrow down the set of
possible database instances that might correspond to the real data.  Such
an attack has been referred to as {\em reverse engineering attack} in
\cite{xiao2010transparent}. Reverse engineering attacks that eliminate 
possible database instance could, in turn, violate full deniability.

Our goal in this section is to show that reverse engineering attack does not
lead to elimination of any viable database instance in FDA. In other words,
let $\mathcal{D}_{a}$ = $\{D_1, \ldots, D_n\}$ be adversarial apriori knowledge. That is, the adversary knows that the  real database, 
denoted by $D_{real}$ corresponds to one of these database instances $D_i \in \mathcal{D}_a$ prior to observing the output of the FDA algorithm. 
For an adversary with no knowledge of the actual database, $\mathcal{D}_a$ could be the set of all possible database instances based on the schema  that are \emph{consistent} with the data dependencies. Alternatively, it could be some subset of the above possible instances.

 Now consider the adversary with the knowledge of the output of FDA on $D_{real}$ and a set of sensitive cells $\vCellSet{}^S$, i.e., 
 FDA$(D_{real}, \vCellSet{}^S)=(D^*, \vCellSet{}^H)$,   where $D^*$ is the output view with full deniability, and 
 $\vCellSet{}^H$ is the set of hidden cells in $D^*$. 
Let $\mathcal{D}_{p}\subseteq \mathcal{D}_a$ be the posterior knowledge of the adversary based on the output of FDA, $(D^*, \vCellSet{}^H)$ and the knowledge of the algorithm FDA$(\cdot,\cdot)$. We want to show that any database instance $D_i\in \mathcal{D}_a$ from the adversarial apriori knowledge is either (i) \emph{incompatible} with $D^*$, i.e.,  the values of the cells that are not hidden in $D^*$ do not match with the values of corresponding cells in $D_i$; or (ii) \emph{undeniable}, i.e., it cannot be eliminated in the posterior knowledge, i.e., $D_i\in \mathcal{D}_p$. 

For the first case, if a $D_i\in \mathcal{D}_a$ is incompatible with $D^*$, such an elimination of a database instance from the posterior knowledge does not constitute violation of full deniability since the adversary 
can determine that $D_{real} \neq D_i$ based on simply the value of the
visible cells.
For the second case,  for a $D_i\in \mathcal{D}_a$ that is compatible with $D^*$, if the adversary finds that it is impossible to generate the output $(D^*,\vCellSet{}^H)$ by running FDA on $D_i$, then there is a leakage as $D_i$ cannot be $D_{real}$ and should be eliminated from $\mathcal{D}_p$. 
In order to run FDA, the adversary needs to input a guessed database instance and a set of sensitive cells which can only be a subset of the hidden cells $\vCellSet{}^H$. To eliminate $D_i$, the adversary needs to test all possible subsets of $\vCellSet{}^H$. 
We would like to prove that such an elimination is not possible and formalize the security guarantee as follows. 
\begin{theorem}
\label{theorem:reverse_engineering_attack}
Let apriori knowledge of adversary be that
$D_{real}$ is one of the database instances in $\mathcal{D}_{a}$ = $\{D_1, \ldots, D_n\}$. Let FDA run on $D_{real}$ with a set of sensitive cells $\vCellSet{}^S$ and output ($D^*, \vCellSet{}^H$). 
For all $D_i \in \mathcal{D}_a$ that is compatible with $D^*$, there exists a subset of the hidden cells $\vCellSet{j}^H \subseteq \vCellSet{}^H$  such that running FDA($D_i, \vCellSet{j}^H$) returns $(D^*, \vCellSet{}^H)$.
\end{theorem}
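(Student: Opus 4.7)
The plan is to exhibit an explicit witness subset and then verify that running FDA on it reproduces the target output. The natural candidate is $\vCellSet{j}^H = \vCellSet{}^H$ itself, i.e., mark the entire hidden set (not just the original sensitive cells) as the input sensitivity set when re-running FDA on $D_i$. My task then reduces to showing two things: (i) when FDA is invoked as $\text{FDA}(D_i, \vCellSet{}^H)$, the algorithm does not expand the hidden set beyond $\vCellSet{}^H$, and (ii) the view it emits coincides exactly with $D^*$.

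For part (i), the key observation is that Algorithm~\ref{alg:FullAlgo} expands the hidden set only when \textbf{InferenceDetect} returns a non-trivial cueset, and by Theorem~\ref{theorem:fdcell} the set $\vCellSet{}^H$ produced by FDA on $D_{real}$ already satisfies the sufficient condition for full deniability: for every $\vCell{}{i}\in \vCellSet{}^H$, every $\vDataDep{}{}\in \vDataDepSet{}$, and every non-empty $cueset(\vCell{}{i},\vDataDep{}{})$, some $\vCell{}{j}\in\vCellSet{}^H$ lies in that cueset. Since this structural property is a statement about \emph{which cells are hidden} together with the Tattle-Tale evaluation of predicates on \emph{non-hidden} cells, I need to argue that starting the algorithm on $D_i$ with initial hidden set $\vCellSet{}^H$ reproduces identical cueset computations. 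This is where compatibility of $D_i$ with $D^*$ is used: $D_i$ agrees with $D_{real}$ at every non-hidden cell, and predicate evaluations involving hidden cells yield \unknown irrespective of their underlying values in $D_i$. Hence the Tattle-Tale check $TTC(\vDataDep{}{},\view,\vCell{}{i})$ returns the same truth value on $D_i$ as on $D_{real}$ for every instantiation, so \textbf{InferenceDetect} in the first loop iteration returns only cuesets already covered by $\vCellSet{}^H$. The loop-guard clean-up (lines 5--8 of Algorithm~\ref{alg:FullAlgo}) therefore removes all of them, \textbf{InferenceProtect} is never invoked on a non-empty input, and the while loop terminates with hidden set exactly $\vCellSet{}^H$.

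For part (ii), the output view is constructed by nulling the cells in $\vCellSet{}^H$ and retaining the original cell values elsewhere. Compatibility of $D_i$ with $D^*$ means the retained values coincide with those of $D^*$, and the nulled positions coincide by construction; hence the emitted view equals $D^*$ cell-by-cell. Combining (i) and (ii), $\text{FDA}(D_i,\vCellSet{}^H)=(D^*,\vCellSet{}^H)$, establishing the claim with $\vCellSet{j}^H:=\vCellSet{}^H$.

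The main obstacle I foresee is a clean justification that the \emph{entire} run of the iterative algorithm -- not just the first iteration -- behaves identically on $(D_i,\vCellSet{}^H)$ and on the hypothetical state of $(D_{real},\vCellSet{}^H)$. I plan to handle this by a short induction on the iteration index, whose inductive invariant is ``the hidden set is $\vCellSet{}^H$ and the view restricted to non-hidden positions equals that of $D^*$''. The inductive step reuses the observation that Tattle-Tale only reads non-hidden values, so the base case already contains all the semantic content; the loop simply has nothing to do, and induction is almost vacuous. A secondary subtlety worth addressing is function-based constraints with invertible functions whose output cell is non-hidden while some input is hidden: here too the predicate form reduces to checking values of non-hidden cells against constants, so the same compatibility argument applies without modification.
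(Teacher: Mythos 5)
Your proposal is correct and follows essentially the same route as the paper's own (very brief) proof sketch: both choose the witness $\vCellSet{j}^H=\vCellSet{}^H$ and argue that FDA, started with all of $\vCellSet{}^H$ already hidden, has nothing further to hide because full deniability is already satisfied. Your write-up is in fact more careful than the paper's, since you explicitly justify—via compatibility and the three-valued Tattle-Tale evaluation—why the cueset computations on $D_i$ coincide with those on $D_{real}$, a step the paper leaves implicit.
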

\begin{proof} (Sketch)
For any $D_i \in \mathcal{D}_a$ that is compatible with $D^*$, we can show that 
FDA($D_i,\vCellSet{}^H$)=$(D^*, \vCellSet{}^H)$. In particular, $D_i$ that starts with $\vCellSet{}^H$ as the sensitive cells and hence all hidden already achieves full deniability. FDA does not need to hide more cells, and thus the output is $(D^*, \vCellSet{}^H)$.
\end{proof}

\begin{example}
Consider running FDA on a real database $D_{real} = \{ \vCell{}{1}, \dots, \vCell{}{10}\}$ and a set of sensitive cells $\vCellSet{}^S$,
and it results in a view $D^*$ with
hiding cells $\vCellSet{}^{H} = \{ \vCell{}{1}, \vCell{}{3}, \vCell{}{4}, \vCell{}{7} \}$. 
The adversary may try to test
 a database $D_i = \{ \vCell{\prime}{1}, \dots, \vCell{\prime}{3}, \dots, \vCell{\prime}{4}, \dots, \vCell{\prime}{7} \} \in \mathcal{D}_a$ that  is compatible  with $D^*$,  
where $\vCell{\prime}{1}, \vCell{\prime}{3}, \dots $ are values guessed by the adversary for the hidden cells in $D^*$. 
The adversary only knows that $\vCellSet{}^S\subseteq \vCellSet{}^H$. 
Hence, the adversary can guess a subset $\vCellSet{j}^H\subseteq \vCellSet{}^H$, e.g., 
$\vCellSet{1}^{H} = \{ \vCell{}{1}, \vCell{}{3} \}$,
as a possible sensitive cell set input and run FDA over $D_i$ and $\vCellSet{j}^{H}$. When $\vCellSet{j}^H=\vCellSet{}^H$, FDA($D_i,\vCellSet{j}^H$) outputs $(D^*,\vCellSet{}^H)$. Hence $D_i$ cannot be eliminated from $\mathcal{D}_p$.
\end{example}

\subsection{Extending FC Constraints}
\label{subsec:fc_constraints}
We now describe a general model of Function-based Constraints  by extending the model to invertibility as follows for a a function $fn(r_1, r_2, \ldots, r_n) = s_i$ where $r_1, r_2, \ldots, r_n$ are the general representation for input values (e.g., WorkHrs) and $s_i$ is the general representation of the derived value or the output of the function (e.g., Salary). 

\begin{definition}[$(m, n)$-Invertibility]
For a function of the following form $fn(r_1, r_2, \ldots,$ $ r_p) = s_i$, given its output $s_i$ and any $m - 1$ out of $p$ inputs, if we could find another function $fn^\prime (r_t, r_{t+1}, \ldots, r_{t+m-2}; s_i)=$ $\{r_k, r_{k+1}, \ldots$, $r_{k+n-1} \}$ that disclose $n$ of the rest input values, we say this function $fn$ is $(m, n)$-invertible; otherwise, we say this function is $(m, n)$-non-invertible.
\end{definition}

The previously mentioned Salary function is (2, 1)-invertible as given any two of the three variables, the rest one could be disclosed. 

\begin{definition}[Fully invertible]
If a function \textit{f} is \textit{(1, n)}-invertible, we say this function is fully invertible.
\end{definition}

Cross product (Cartesian product) is an example of full invertibility, since all the input values can be inferred if given the result of cross product. That is to say, cross product is $(1, n)$-invertible. 
Other examples of commonly used functions are user-defined functions (UDFs) (e.g. oblivious functions, secret sharing), and aggregation functions.

\eat{
\begin{theorem}
Any $(m, n)$-invertible function is $(m-1, n)$-non-invertible.
\end{theorem}
}

\stitle{Computing Leakage for FCs.}
As for a $(m, n)$-invertible function, denoted by $fn^\prime (r_1, r_2, \ldots r_n)$ $= \{s_1, s_2,$ $\ldots, s_m \}$, it can be apparently observed that, given the $m$ inputs, the function will lead to the leakage towards $n$ values.
Take, for example, the \textit{Salary}, \textit{WorkHrs} and the \textit{SalPerHr}.
Since as analyzed, the function to calculate the salary is $(2, 1)$-invertible, it means that if taking any two values of the three attributes, the adversary can fully convert and leak the exact value of the remaining attribute.
We call this case \textit{full leakage} from FC.
However, a subset of the $m$ input values of an $(m, n)$-invertible function could also leak some information about some disclosable values based on some domain knowledge.
As an example, suppose the adversary knows that the \textit{salary} of an employee is 8,000 but they do not know the exact \textit{WorkHrs} and \textit{SalPerHr}.
Even though, with some background knowledge, for e.g., the information that no one could work more than 40 hours per week by law, the adversary could reduce the domain of possibilities the \textit{SalPerHr} value could take. We call this \textit{partial leakage} from FC. We leave further exploration of this extended model as future work.

\subsection{Adversary is A Data Owner}
\label{app:adv_data_owner}

We stated in Section \ref{sect:assumptions} that queriers and data owners are non-overlapping parties.
If we want to relax this assumption to consider an adversary that is a querier and data owner, we can modify our Inference Detection algorithm to only include cells that do not belong to the querier in the cueset.

The proof sketch for correctness of this modified algorithm is as follows.
\reviseTKDE{Theorem~\ref{theorem:fdcell} in the paper (Full Deniability for a Querier View) states that $\view$ achieves full deniability if for any cell $\vCell{}{i}$ in the set of hidden cells, there exists another cell $\vCell{}{j}$ from $\vCell{}{i}$'s cueset that is also in the set of hidden cells.}
In the modified algorithm, we have only updated Inference Detection (which generates cuesets) and not Inference Protection (which hides cells from the cuesets).
Each of these modified cuesets contain at least one cell as the dependencies are binary which leads to Inference Protection successfully hiding a cell from these cuesets in the next step.
Therefore, based on Theorem~\ref{theorem:fdcell} it satisfies the necessary condition for achieving full deniability which is to have at least one cell from each cueset be present in the set of hidden cells.
If in a dependency instantiation, both tuples belong to the querier then it is possible its corresponding cueset is empty using this modified algorithm.
However in such a case, the sensitive cell in the dependency instantiation already belongs to the querier so it is not possible to prevent them from learning about it.

\section{Additional Related Work}

We demonstrate the difference between our full deniability model and query-view security model proposed by \cite{miklau2007formal} more concretely using 
simple examples. 
Consider a relation schema
$R$ with three attributes $A, B, C$ and a functional dependency 
$A \rightarrow B$, and a secret query $S$ that
projects the $B$ values of tuples which have $C = 5$. Also, consider
a view  $V$ that 
projects attribute  $A,B$. 
This view and secret query can be expressed in the data log notation used by Miklau \& Suciu as follows: 
\begin{eqnarray}
V(A, B):& R(A, B, -) & \nonumber\\
S(B):& R(-, B, c), c = 5\nonumber
\end{eqnarray}

Based on the secret query  and
the above view definitions, 
 Miklau \& Suciu \cite{miklau2007formal} will determine that $V$ violates  perfect secrecy since there   exists
a database for which the view may reveal sensitive data. 
As an example, 
 consider the instance of the table 
 $R(A,B,C) = \{<1,1,5>, <1,1,6>\}$.
 Note that the value of $B$ for the
 first tuple is sensitive since 
 the corresponding value of $C$ is 5.  Given the functional dependency,  the view $V(A,B)$ will
 indeed leak the sensitive data.
 
In contrast to \cite{miklau2007formal} which is
motivated by determining if a specific view definition could
lead to  leakage of sensitive data in a data exchange scenario
(i.e., if there can exist a database such that the published view
may leak information about the secret in that database)
our paper is motivated by access control.  Our 
goal is, for a given instance of the database, answer the query, hiding as few cells  as possible while ensuring
full deniability for sensitive cells (i.e., the adversary cannot eliminate
 any possible value from  the domain of the sensitive
 cell). In the example above, 
 we will allow a view $V$ to be  computed with
 some cells hidden to ensure full deniability. In particular,
 a possible answer could be  $V(A,B): t_1: \{<1,NULL>, t_2: <NULL,1>\}$ since it allows full deniability of the 
 sensitive cell $t_1[B]$.  As another example, 
 consider a different instance of $D$  with 
$R(A,B,C) = \{<1,1,5>, <2,1,6>\}$. 
For the above database, given
a query that projects attributes
$A,B$, we will return the answer
$V(A,B): \{t_1: <1,NULL>, t_2: <2,1>\}$  since
for the above instance, the 
result of the query does not reveal
any information about the sensitive cell $t_1[B]$.
Thus, the work by Miklau \& Suciu \cite{miklau2007formal}, as mentioned above, does not address
access control but determining if view definitions violate
perfect secrecy. As a result, irrespective of the database instance, \cite{miklau2007formal} will consider the above view definition to be unsafe.

We note that we could implement access control using  the framework developed in their work. In particular, given a query $V$ 
and
sensitive cells (expressed as a sensitive query), and constraints
encoded as prior knowledge of the adversary, we could check
if  $V$ violates the secrecy of  $S$. If $V$ does violate perfect secrecy, 
access control can be implemented by preventing $V$ to execute.
However, to the best of our knowledge, this is not the
intended use case for \cite{miklau2007formal} since the
resulting mechanism would be too pessimistic for
it to be useful. It would disallow a view (query) for which there exists a database instance that
could result in leakage.  We thus believe that
\cite{miklau2007formal} as described in their work is unsuitable
for access control and hence, consider their work to be 
addressing a different, though loosely related issue. The
follow-up work to \cite{miklau2007formal} \cite{dalvi2005asymptotic} relaxed the notion of perfect secrecy to make it more practical. The original
definition of perfect secrecy disallowed any leakage, while
the new definition allows for bounded leakages. Furthermore, checking perfect secrecy is $\pi^P_2$-complete  even for simple databases which makes it computationally intractable. However, none of the extensions addressed the challenge of access control i.e., suppress as few cells as possible while
answering a query $Q$ given a database $D$..  As such, extensions, along with work by \cite{miklau2007formal} 
is best suited for determining the safety of data publishing and not access control. 
}

\section{Additional Experimental Details}
\label{app:app_details}

\subsection{Information of the Datasets for Evaluation}

Some statistics of the datasets are summarized in Table \ref{tab:datasets}.
The data dependencies used for experiments can be found in Table \ref{tab:DependencyListTax} (for the Tax dataset) and Table \ref{tab:DependencyListHos} (for the Hospital10K dataset).
In addition, we outline the schema information of the datasets below.

\commentrequired{
\noindent
\textbf{Schema Information.}
Every tuple (\textit{T\_ID}) from the Tax table specifies tax information of an individual with their first name (\textit{FName}), last name (\textit{LName}), gender (\textit{Gender}), area code for phone number (\textit{AreaCode}), phone number (\textit{Phone}), city (\textit{City}), state of residence (\textit{State}), zip (\textit{Zip}), marital status (\textit{MaritalStatus}),  Has Children (\textit{HasChild}), salary earned (\textit{Salary}), tax rate (\textit{Rate}), Single Exemption rate (\textit{SingleExemp}), Married Exemption rate (\textit{MarriedExemp}), and Child Exemption rate (\textit{ChildExemp}).
As for the Hospital table, each tuple has the information of hospitals, which includes the provider number (\textit{ProviderNumber}), the hospital name (\textit{HospitalName}), city (\textit{City}), state of the hospital (\textit{State}), zip (\textit{ZIPCode}), county name (\textit{CountyName}), phone number for contact (\textit{PhoneNumber}), the type of the hospital (\textit{HospitalType}), the owner of the hospital (\textit{HospitalOwner}), emergency service (\textit{EmergencyService}), condition (\textit{Condition}), measure code (\textit{MeasureCode}), measure name (\textit{MeasureName}), the number of patient samples (\textit{Sample}), and the state average (\textit{StateAvg}).
}

\begin{table*}[t]
    \centering
    \caption{\extend{Statistics of the Datasets for Evaluation}}
    \label{tab:datasets}
    \vspace{-0.4cm}
    \begin{tabular}{|c|c|c|c|c|c|}
    \hline
        Dataset & \# Tuples & \# Attributes & \# Discrete attributes & Domain size & \# Dependencies \\ \hline
        Tax \cite{bohannon2007conditional} & 9,998 & 15 + 1 & 10 & $\approx$ $2^{107}$ ($2^{82}$ active) & 10 DCs + 1 FC \\ \hline
        Hospital10K \cite{xuchuDC} & 10,000 & 15 & 15 & $\approx$ $2^{115}$ ($2^{104}$ active) &  14 DCs \\ \hline
        Hospital \cite{xuchuDC} & 100,000 & 15 & 15 & $\approx$ $2^{115}$ ($2^{104}$ active) &  14 DCs \\ \hline
    \end{tabular}
\end{table*}

\begin{table*}[t]
  \centering
  \caption{\commentrequired{Dependency List for Tax Dataset}}
  \label{tab:DependencyListTax}
  \scriptsize
  \begin{tabular}{|l|}
  \hline
    \eat{ {\footnotesize $\delta_{1}^{t}$} &  DC &} {\scriptsize $\neg$($t_1$[zip]=$t_2$[zip] $\land$ $t_1$[city]$\neq t_2$[city]) } \\
    \eat{{\footnotesize $\delta_{2}^{t}$} &  DC &} {\scriptsize $\neg$($t_1$[areaCode]=$t_2$[areaCode] $\land$ $t_1$[state] $\neq t_2$[state] ) }  \\
    \eat{{\footnotesize $\delta_{3}^{t}$} &  DC &} {\scriptsize $\neg$($t_1$[zip]=$t_2$[zip] $\land$ $t_1$[state] $\neq t_2$[state]) } \\ 
    \eat{{\footnotesize $\delta_{4}^{t}$} &  DC &} {\scriptsize $\neg$($t_1$[state]$ \neq t_2$[state] $\land$ $t_1$[hasChild]=$t_2$[hasChild] $\land$ $t_1$[childExemp]$ \neq t_2$[childExemp])) }\\ 
    \eat{{\footnotesize $\delta_{5}^{t}$} &  DC &} {\scriptsize $\neg$($t_1$[state]$ \neq t_2$[state] $\land$ $t_1$[marital]=$t_2$[marital] $\land$ $t_1$[singleExemp]$ \neq t_2$[singleExemp]) } \\ 
    \eat{{\footnotesize $\delta_{6}^{t}$} &  DC &} {\scriptsize $\neg$($t_1$[state]$ \neq t_2$[state] $\land$ $t_1$[salary]$>$$t_2$[salary] $\land$ $t_1$[rate]$<$$t_2$[rate]) } \\ 
    \eat{{\footnotesize $\delta_{7}^{t}$} &  DC &} {\scriptsize $\neg$($t_1$[areaCode]$ \neq t_2$[areaCode] $\land$ $t_1$[zip]=$t_2$[zip] $\land$ $t_1$[hasChild]=$t_2$[hasChild] $\land$ $t_1$[salary]$>t_2$[salary] $\land$ $t_1$[rate]$<t_2$[rate] $\land$ $t_1$[singleExemp]$ \neq t_2$[singleExemp]) }  \\ 
    \eat{{\footnotesize $\delta_{8}^{t}$} &  DC &} {\scriptsize $\neg$($t_1$[marital]$  \neq t_2$[marital] $\land$ $t_1$[salary]$ \neq t_2$[salary] $\land$ $t_1$[rate]=$t_2$[rate] $\land$ $t_1$[singleExemp]=$t_2$[singleExemp] $\land$ $t_1$[childExemp] $ \neq t_2$[childExemp] ) } \\ 
    \eat{{\footnotesize $\delta_{9}^{t}$} &  DC &} {\scriptsize $\neg$($t_1$[state]$ \neq t_2$[state] $\land$ $t_1$[marital]$ \neq t_2$[marital] $\land$ $t_1$[rate]=$t_2$[rate] $\land$ $t_1$[singleExemp]=$t_2$[singleExemp] $\land$ $t_1$[childExemp]$ \neq t_2$[childExemp]) } \\ 
    \eat{{\footnotesize $\delta_{10}^{t}$} &  DC &} {\scriptsize $\neg$($t_1$[state]=$t_2$[state] $\land$ $t_1$[salary]=$t_2$[salary] $\land$ $t_1$[rate]$ \neq t_2$[rate]) } \\ 
    \eat{{\footnotesize $\delta_{11}^{t}$} &  DC &} {\scriptsize ``tax'' = fn(``salary'', ``rate'') } \\ 
    \hline
  \end{tabular}

\end{table*}

\begin{table*}[t]
  \centering
  \caption{\commentrequired{Dependency List for Hospital Dataset}}
  \label{tab:DependencyListHos}
  \scriptsize
  \begin{tabular}{|l|}
  \hline
    \eat{{\footnotesize $\delta_{1}^{t}$} & DC &} {\scriptsize $\neg$($t_1$[Condition]=$t_2$[Condition] $\land$ $t_1$[MeasureName]=$t_2$[MeasureName] $\land$ $t_1$[HospitalType]$\neq t_2$[HospitalType]) } \\
    \eat{{\footnotesize $\delta_{2}^{t}$} & DC &} {\scriptsize $\neg$($t_1$[HospitalName]=$t_2$[HospitalName] $\land$ $t_1$[ZIPCode] $\neq t_2$[ZIPCode] ) }  \\
    \eat{{\footnotesize $\delta_{3}^{t}$} & DC &} {\scriptsize $\neg$($t_1$[HospitalName]=$t_2$[HospitalName] $\land$ $t_1$[PhoneNumber] $\neq t_2$[PhoneNumber]) } \\ 
    
    \eat{{\footnotesize $\delta_{4}^{t}$} & DC &} {\scriptsize $\neg$($t_1$[MeasureCode]=$t_2$[MeasureCode] $\land$ $t_1$[MeasureName] $\neq t_2$[MeasureName] ) }  \\
    \eat{{\footnotesize $\delta_{5}^{t}$} & DC &} {\scriptsize $\neg$($t_1$[MeasureCode]=$t_2$[MeasureCode] $\land$ $t_1$[StateAvg] $\neq t_2$[StateAvg] ) }  \\
    \eat{{\footnotesize $\delta_{6}^{t}$} & DC &} {\scriptsize $\neg$($t_1$[MeasureCode]=$t_2$[MeasureCode] $\land$ $t_1$[Condition] $\neq t_2$[Condition] ) }  \\
    \eat{{\footnotesize $\delta_{7}^{t}$} & DC &} {\scriptsize $\neg$($t_1$[HospitalName]=$t_2$[HospitalName] $\land$ $t_1$[HospitalOwner] $\neq t_2$[HospitalOwner] ) }  \\
    \eat{{\footnotesize $\delta_{8}^{t}$} & DC &} {\scriptsize $\neg$($t_1$[HospitalName]=$t_2$[HospitalName] $\land$ $t_1$[ProviderNumber] $\neq t_2$[ProviderNumber] ) }  \\
    \eat{{\footnotesize $\delta_{9}^{t}$} & DC &} {\scriptsize $\neg$($t_1$[ProviderNumber]=$t_2$[ProviderNumber] $\land$ $t_1$[HospitalName] $\neq t_2$[HospitalName] ) }  \\
    
    \eat{{\footnotesize $\delta_{10}^{t}$} & DC &} {\scriptsize $\neg$($t_1$[City]=$t_2$[City] $\land$ $t_1$[CountyName] $\neq t_2$[CountyName] ) }  \\
    \eat{{\footnotesize $\delta_{11}^{t}$} & DC &} {\scriptsize $\neg$($t_1$[ZIPCode]=$t_2$[ZIPCode] $\land$ $t_1$[EmergencyService] $\neq t_2$[EmergencyService] ) }  \\
    \eat{{\footnotesize $\delta_{12}^{t}$} & DC &} {\scriptsize $\neg$($t_1$[HospitalName]=$t_2$[HospitalName] $\land$ $t_1$[City] $\neq t_2$[City] ) }  \\
    \eat{{\footnotesize $\delta_{13}^{t}$} & DC &} {\scriptsize $\neg$($t_1$[MeasureName]=$t_2$[MeasureName] $\land$ $t_1$[MeasureCode] $\neq t_2$[MeasureCode] ) }  \\
    \eat{{\footnotesize $\delta_{14}^{t}$} & DC &} {\scriptsize $\neg$($t_1$[HospitalName]=$t_2$[HospitalName] $\land$ $t_1$[PhoneNumber]=$t_2$[PhoneNumber] $\land$ $t_1$[HospitalOwner]=$t_2$[HospitalOwner] $\land$ $t_1$[State] $\neq t_2$[State] ) }  \\
    \hline
  \end{tabular}

\end{table*}

\reviseTKDE{
\subsection{Experiments with Highly Sensitive Databases}
\label{app:more_sensitive_exp}

\textit{First}, we extend the experiments to the settings when more cells are specified as sensitive via access control policies. The goal of this experiment is to simulate the cases where the shared database view is highly sensitive and contains a large number of sensitive cells.
In earlier presented experiments, we select 10 sensitive cells and gradually increase the total number of sensitive cells to 100 (step=10) and test how many additional cells are hidden in the inference control views.
In this experiment, we start with 100 sensitive cells and increase it up to 1000 with 10 different experiments (step=100).
We compare the number of hidden cells in the inference control view and the access control view and plot the results in Figure \ref{fig:more_sen_utility}(a).
Our inference control approach hides 1.5-2x cells compared to the access control view. The growth in the number of hidden cells slows down with the increasing number of sensitive cells.

In the second experiment (see Figure \ref{fig:more_sen_utility}(b)), we start with 100 sensitive cells which are all selected from the same attribute (\emph{per column}). Similar to previous experiment, we perform 10 experiments (i.e., step=100) increasing number of sensitive cells up to 1000.
We notice that only a few additional cells need to be hidden to achieve full deniability of the shared view.
This is because, the cuesets of sensitive cells are more likely to contain other sensitive cells, as they are all selected from the same attribute. Therefore, it is not required to hide additional cells to prevent inferences.
}
\begin{figure*}
    \centering
    \includegraphics[width=0.3\textwidth]{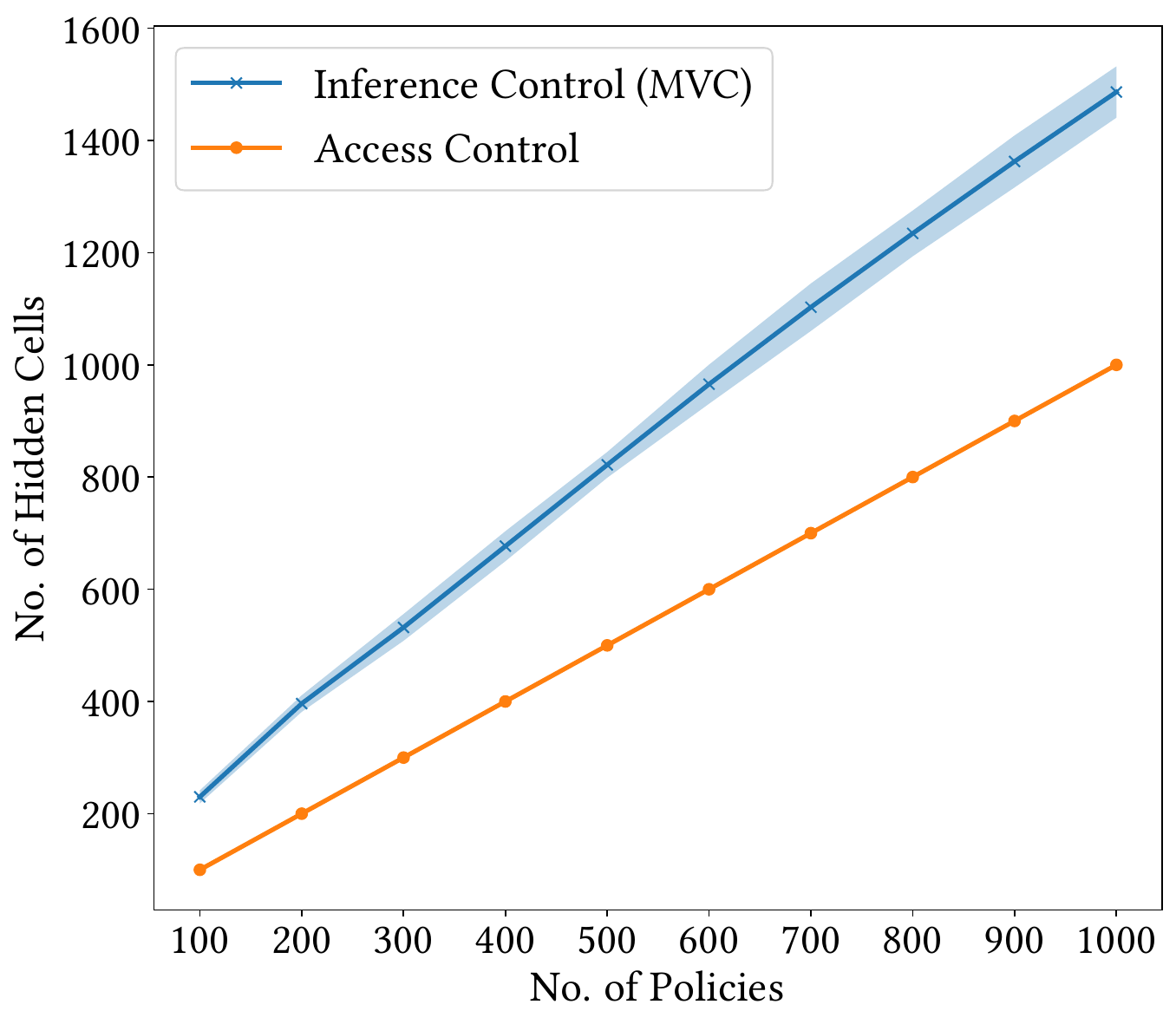}
    \includegraphics[width=0.3\textwidth]{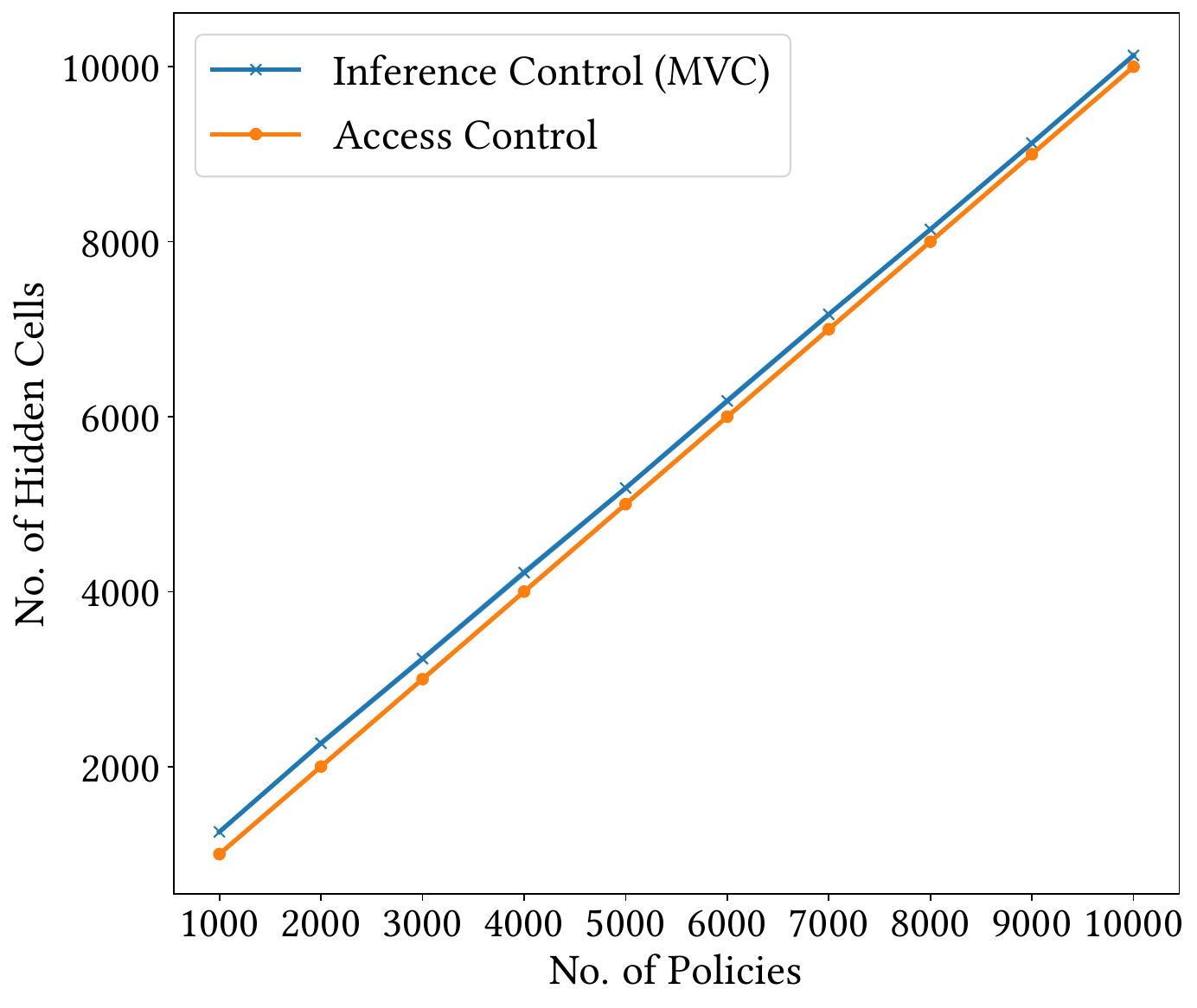}
    \caption{The no. of hidden cells vs. the no. of sensitive cells (policies): (a) w. 100-1000 cells specified as sensitive; (b) w. 100-1000 cells specified as sensitive \emph{from the same attribute}.}
    \label{fig:more_sen_utility}
\end{figure*}

\reviseTKDE{
\subsection{Experiments with Sensitive Cells Selected from Attributes with Diverse Distributions}
\label{app:exp_std}

In Figure \ref{fig:sensitivity_group_tax}, we present the results of executing the inference control algorithm over the Tax dataset and the Hospital10K dataset.
The standard deviation of the number of hidden cells is different for these two datasets. This can be explained based on the properties of the attribute from which the sensitive cells are chosen such as the domain size of the attribute, and the distribution of values from the domain in a given database instance. 
Moreover, it is more likely for the values in the domain of an attribute with a smaller domain size to be non-uniformly repeated in a database instance than an attribute with larger domain size\footnote{
Note that the relation between domain size of the attributes and distribution of values may not always be 1-to-1 as presented in this experiment.
A comprehensive study of domain size and value distribution on the impact of hidden cells is interesting but goes beyond the scope of this paper. This would require a complicated ablation study with a synthetic dataset with diverse domain sizes, distributions, and fabricated dependencies. Our goal in this paper is to study the impact of realistic dependencies as inference channels. 
}.
In the experiment measuring impact of dependency connectivity on hidden cells (Figure \ref{fig:sensitivity_group_tax}), we selected sensitive cells by randomly sampling over a group of attributes in both datasets. The Tax dataset contained predominantly large domain attributes with uniform distribution of values from the domain (such as Salary) whereas the Hospital dataset contained mostly small domain attributes with non-uniform distribution of values (such as CountryName, HospitalType).
Thus, when selecting a sensitive cell from an attribute with a smaller domain size (e.g., \textit{HospitalType} from the Hospital dataset), the number of relevant dependency instantiations and therefore number of cuesets are going to be non-uniform, resulting in a larger standard deviation across different experiments.

To validate this hypothesis, we design an experiment that only selects sensitive cells from larger domain attributes and small domain attributes on the Tax and Hospital datasets.
For the Tax dataset, the large and small domain attributes were (\textit{SingleExemp}), and (\textit{HasChild}, \textit{MaritalStatus}) respectively.
For the Hospital dataset, the large and small domain attributes were (\textit{PhoneNumber}, \textit{HospitalOwner}), and (\textit{CountyName}) respectively.
As shown in Figure \ref{fig:std_explain} (a), and (b), the standard deviation with respect to number of hidden cells is much higher when selecting sensitive cells from smaller domain size attributes with non-uniform distributions, in both datasets.
}

\begin{figure*}[t]
    \centering
    \includegraphics[width=0.3\textwidth]{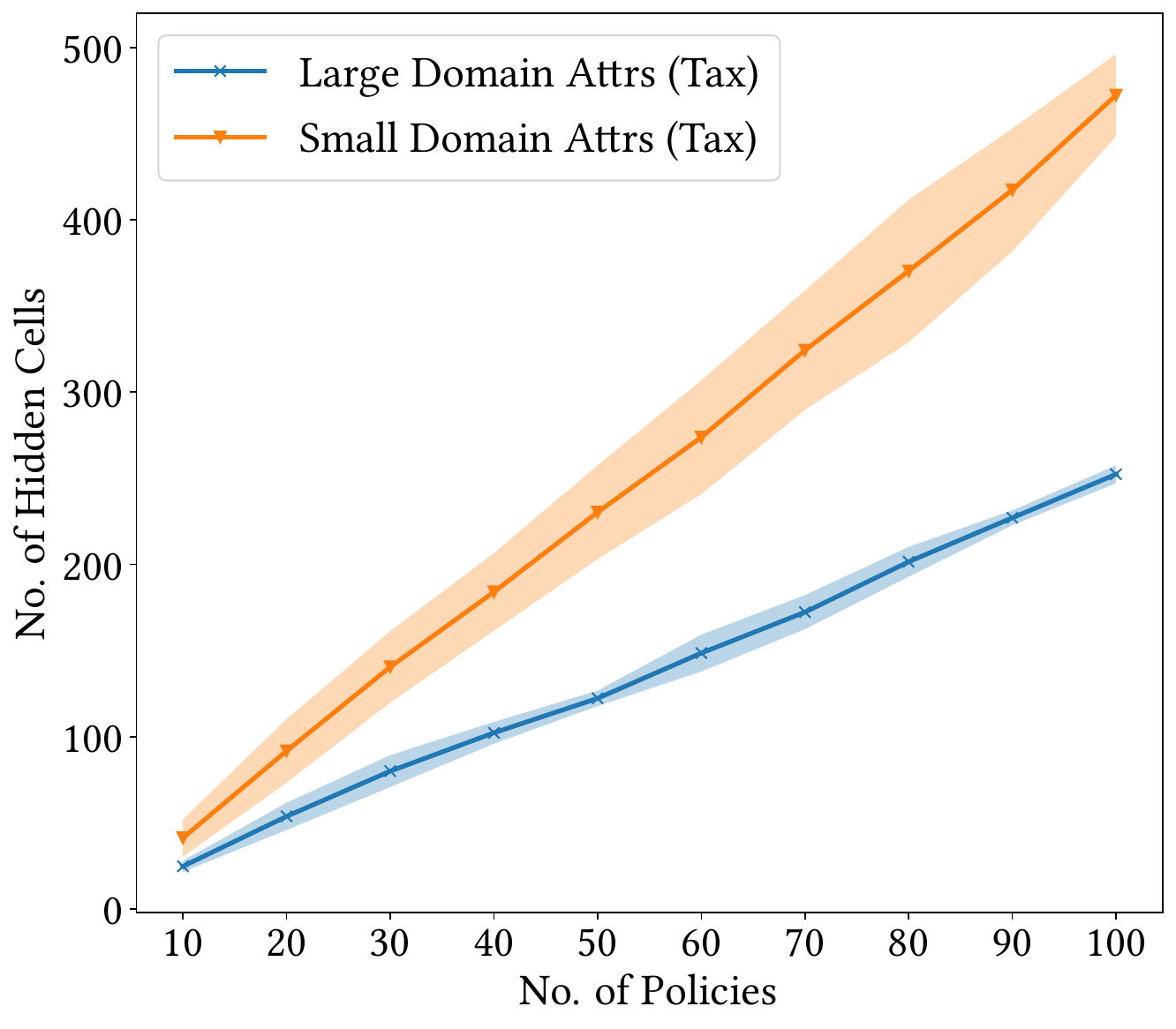}
    \includegraphics[width=0.3\textwidth]{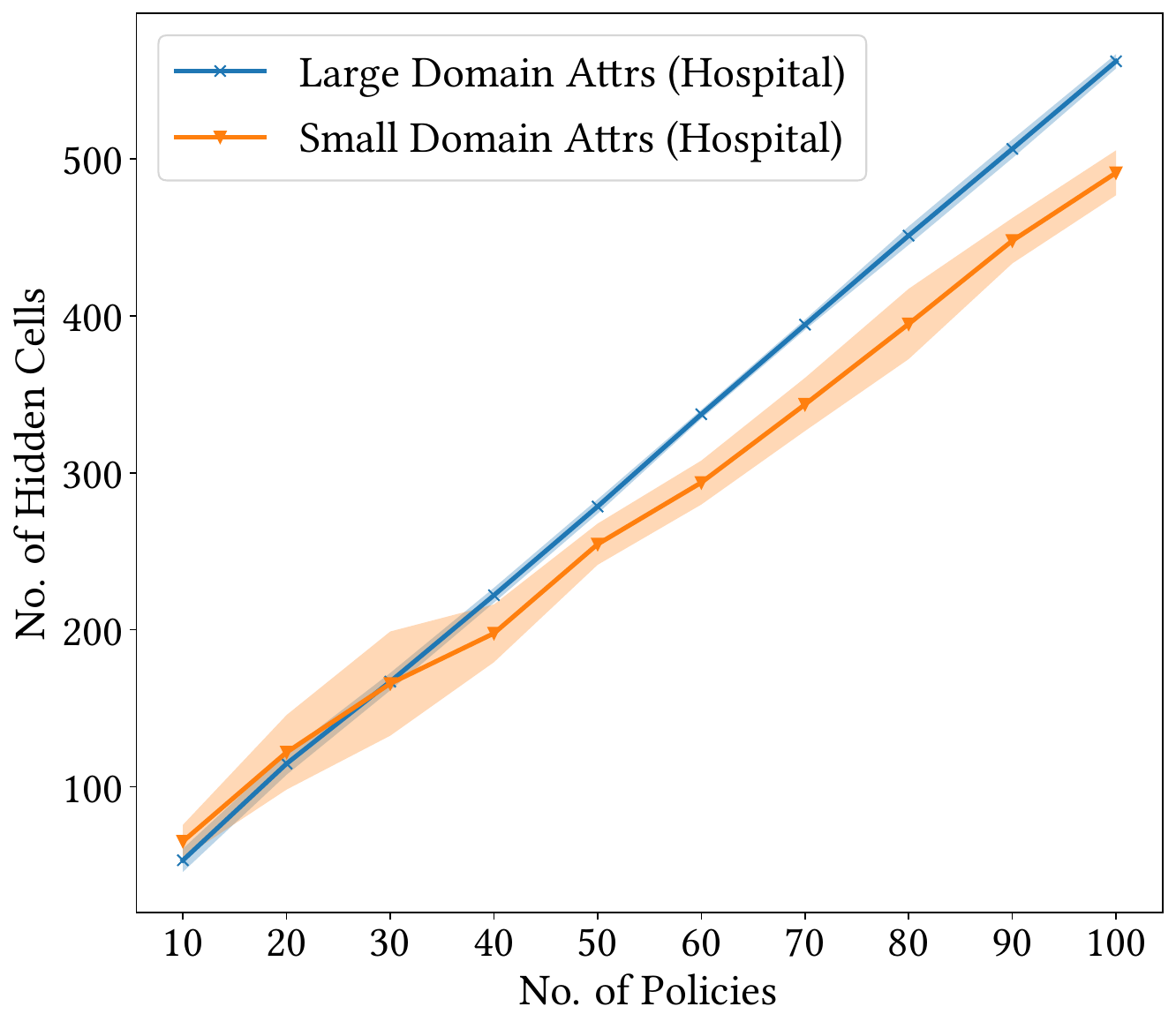}
    \caption{Setting Access Control Policies on Specific Attributes: (a) on Tax dataset; (b) on Hospital dataset.}
    \label{fig:std_explain}
\end{figure*}

\end{document}